\DeclareMathOperator*{\argmax}{argmax}
\newtheorem{theorem}{Theorem}
\newtheorem{definition}{Definition}
\newtheorem{lemma}{Lemma}
\newtheorem{corollary}{Corollary}
\newtheorem*{mainresult}{Main result}
\newcommand{\set}[1]{\left\{ #1 \right\}}
\newcommand{\card}[1]{\left| #1 \right|}
\newcommand{\checkperp}{\mbox{\textsf{check}}^{\perp}}
\newcommand{\ifend}{\textbf{endif}}
\newcommand{\diam}{\mbox{\textsf{diam}}}
\newcommand{\rad}{\mbox{\textsf{rad}}}
\newcommand{\ecc}{\mbox{\textsf{ecc}}}
\newcommand{\opp}{\mbox{\textsf{op}}}
\newcommand{\tsw}{2-\textsc{sweep}}
\newcommand{\fsw}{4-\textsc{sweep}}
\title{Diameter, radius and all eccentricities in linear time for constant-dimension median graphs\footnote{An extended abstract of this paper will appear at LAGOS 2021. This is the full version.}}
\author{Pierre Berg\'e\footnote{Email : berge@irif.fr} }
\author{Michel Habib}
\affil{IRIF, CNRS, Universit\'e de Paris, France}
\begin{document}

\maketitle

\abstract{
Median graphs form the class of graphs which is the most studied in metric graph theory. Recently, B\'en\'eteau {\em et al.} [2019] designed a linear-time algorithm computing both the $\Theta$-classes and the median set of median graphs. A natural question emerges: is there a linear-time algorithm computing the diameter and the radius for median graphs?

We answer positively to this question for median graphs $G$ with constant dimension $d$, {\em i.e.} the dimension of the largest induced hypercube of $G$. We propose a combinatorial algorithm computing all eccentricities of median graphs with running time $O(2^{O(d\log d)}n)$. As a consequence, this provides us with a linear-time algorithm determining both the diameter and the radius of median graphs with $d = O(1)$, such as cube-free median graphs. As the hypercube of dimension 4 is not planar, it shows also that all eccentricities of planar median graphs can be computed in $O(n)$.
}

\section{Introduction} \label{sec:intro}

We study one of the most fundamental problems in algorithmic graph theory related to distances: the \textit{diameter} and the \textit{radius}. Given an undirected graph $G=(V,E)$, the diameter is the maximum distance $d(u,v)$, $u,v \in V$, where $d(u,v)$ is the length of the shortest $(u,v)$-path. The eccentricity of a vertex $v$ is the maximum length of a shortest path starting from $v$: the diameter is thus the maximum eccentricity and the radius is defined as the minimum eccentricity. Both the diameter and the radius are basic parameters used to apprehend the structure of a graph.

Multiple \textit{Breadth First Search} (BFS)  suffice as a naive algorithm  to compute the distances between all pairs of vertices and, therefore, to obtain the diameter, the radius and all eccentricities  in $O(n\card{E})$ for a $n$-vertex graph. Unfortunately, no known algorithm is able to determine the diameter much faster than all distances. Furthermore, it was shown~\cite{AbGrWi15} that both computing the radius and all distances are equivalent problems under subcubic reductions.
Very efficient algorithms have been proposed for the diameter on certain classes of graphs, for example~\cite{AbWiWa16,Ca17,DuHaVi20}. Many works have also been devoted to approximation algorithms for this parameter. In particular, Chechik {\em et al.}~\cite{ChLaRoScTaWi14} showed that the diameter can be approximated within a factor $\frac{3}{2}$ in time $O^*(m^{\frac{3}{2}})$. On sparse graphs, a natural question is whether we can compute exactly the diameter in subquadratic time. It was shown~\cite{RoWi13} that no $O(n^{2-\varepsilon})$-time algorithm can achieve an approximation factor smaller than $\frac{3}{2}$ for the diameter on sparse graphs unless the Strong Exponential Time Hypothesis (SETH) fails.

Linear-time heuristics have been proposed to estimate the diameter in general graphs. The \tsw\ algorithm~\cite{MaLaHa08} consists in taking an arbitrary vertex $r_1$ of the graph, computing the farthest-to-$r_1$ vertex $a_1$ with a BFS and eventually finding the farthest-to-$a_1$ vertex $b_1$ with another BFS. The distance $d(a_1,b_1)$ is returned. This method works well on some special classes of graphs. It was shown that \tsw\ returns exactly the diameter on trees~\cite{Ha73}. Moreover, it approximates it with an additive error 1 on chordal graphs and an additive error 2 on AT-free graphs~\cite{CoDrKo03}. The \fsw\ algorithm~\cite{BoCrHaKoMaTa15} picks a vertex in the middle of a shortest $(a_1,b_1)$-path and computes another \tsw . It performs well on real-world undirected graphs~\cite{BoCrHaKoMaTa15}.

In this paper, we propose a linear time algorithm computing the diameter, the radius, and all eccentricities of constant-dimension median graphs. Median graphs are the graphs such that any triplet of vertices has a unique median. Put formally, given $x,y,z \in V$, there is a unique vertex $m(x,y,z)$ lying at the same time on some shortest $(x,y)$, $(y,z)$, and $(z,x)$-paths. Said differently, $m(x,y,z)$ is the unique vertex being metrically between $x$ and $y$, $y$ and $z$, $z$ and $x$. Median graphs are \textit{partial cubes}, {\em i.e.} isometric subgraphs of hypercubes. However, partial cubes are not necessarily median. The dimension $d$ of a median graph is the dimension of its largest induced hypercube. This parameter is upper-bounded by $\lfloor \log n \rfloor$.

Median graphs are related to numerous areas: universal algebra~\cite{Av61,BiKi47}, CAT(0) cube complexes~\cite{BaCh08,Ch00}, abstract models of concurrency~\cite{BaCo93,SaNiWi93}, and genetics~\cite{BaQuSaMa02,BaFoSyRi95}. They have strong structural properties and admit many characterizations, such as the Mulder's convex expansion~\cite{Mu78,Mu80}, and are related to hypercubes retracts~\cite{Ba84}, Cartesian products and gated amalgams~\cite{BaCh08}, but also Helly hypergraphs~\cite{MuSc79}. Median graphs are bipartite and have at most $dn \le n\log n$ edges. They do not contain induced $K_{2,3}$, otherwise a triplet of vertices would admit at least two medians. The cardinality function of hypercubes $\alpha_i(G)$ - the number of induced hypercubes of dimension $0\le i\le d$ in the median graph $G$ - verifies nice formulas~\cite{BrKlSk06,Ko09}. A key concept to understand the structure of median graphs is the equivalence relation $\Theta$, which is the reflexive and transitive closure of relation $\Theta_0$, where two edges are in $\Theta_0$ if they are opposite edges of a common 4-cycle. An equivalence class of $\Theta$ is called a $\Theta$-class. Each $\Theta$-class of a median graph form a matching cutset, splitting the graph into two connected components, called halfspaces. These halfspaces are convex. The number $q$ of $\Theta$-classes is less than $n$ and, moreover, it is exactly the dimension of the hypercube in which the median graph $G$ isometrically embeds. It was shown in~\cite{KlMuSk98} that value $q$ satisfies the Euler-type formula $2n-m-q\le 2$. The $\Theta$-classes can be identified in linear time $O(\card{E})=O(dn)$, using a Lexicographic BFS~\cite{BeChChVa20}.

To the best of our knowledge, there is no subquadratic algorithm for the diameter on median graphs. B\'en\'eteau {\em et al.}~\cite{BeChChVa20} and Ducoffe~\cite{Du20} recently formulated this open question. There exist efficient algorithms for other metric parameters on median graphs. Thanks to the linear time computation of $\Theta$-classes, the median set and the Wiener index can be determined in $O(\card{E})$~\cite{BeChChVa20}. The median set, {\em i.e.} the vertices $u$ of $G$ which minimize $\sum_{v \in V} d(u,v)$, satisfies the majority rule~\cite{BaBrChKlKoSu10,BaBa84}. Another challenging question is the recognition of median graphs. Two very efficient algorithms have been proposed. Using convex characterizations of halfspaces, Hagauer {\em et al.}~\cite{HaImKl99} showed that median graphs can be recognized in $O(n^{\frac{3}{2}}\sqrt{n})$. In~\cite{ImKlMu99}, a bijection between median graphs and triangle-free graphs is identified. As a consequence, the recognition times of these two families are the same, if we neglect poly-logarithmic factors. Taking a very efficient algorithm which detects triangle-free graphs~\cite{AlYuZw97} produces a recognition of median graphs running in $O((n\log^2 n)^{1.41})$.

Subfamilies of median graphs have also been studied in the literature. There is an algorithm computing the diameter and the radius in linear time for squaregraphs, which are planar cube-free median graphs where all inner vertices have degree at least four~\cite{BaChEp10,ChDrVa02}. Recently, distance and routing labeling schemes~\cite{ChLaRa19} of size $O(\log^3 n)$ were designed for cube-free median graphs, {\em i.e.} median graphs verifying $d\le 2$.

In this paper, we construct an algorithm computing the diameter for median graphs which runs in time $O(n)$ when $d=O(1)$. Concretely, our algorithm returns the exact value of the diameter for all median graphs $G$ in time $O(f(d)n)$, where $f(d)$ is an exponential fonction of $d$. It can be naturally extended to provide us with all eccentricities of the median graph with the same running time. Our contribution is summarized below.

\begin{mainresult}
There is a combinatorial algorithm which computes the diameter, the radius and all eccentricities in time $O^*(2^{d(\log (d)+1)}n)$ on median graphs.
\label{th:main_result}
\end{mainresult}

Notation $O^*$ neglects polynomials of $d$, which are also poly-logarithmic factors of $n$. A consequence of our main result is that, for any $d=O(1)$, all eccentricities can be determined in linear time $O(n)$ on $Q_d$-free median graphs. For example, this is the case for cube-free median graphs ($d\le 2$). Moreover, as $Q_4$ is not planar, planar median graphs are $Q_4$-free, so our algorithm is linear for this family of graphs. Obviously, such an algorithm is not linear for all median graphs, as parameter $d$ can attain value $\log n$.

Given that \tsw\ returns in linear time the diameter on trees, which are the median graphs of dimension $d=1$, one can wonder whether the heuristics \tsw\ and \fsw\ can compute the diameter of median graphs for larger values of $d$. Another contribution in this paper is the proof that it is not the case. We provide median graphs of dimension 2 on which \tsw\ and \fsw\ fail to determine the diameter.

The organization of the paper follows. In Section~\ref{sec:preliminaries}, we introduce the definitions and notions related to median graphs. We describe in detail the properties of $\Theta$-classes, in particular the orthogonality, which is a key concept in this work. In Section~\ref{sec:sweep}, we show that \tsw\ and \fsw\ do not allow us to determine the exact value of the diameter on median graphs. We provide instances on which they are unsuccessful. In Section~\ref{sec:diameter}, we present a first version of our algorithm which computes the diameter only. We proceed in two steps: the computation of labels for the hypercubes of $G$ (Section~\ref{subsec:labels}) and the reduction to an optimization problem we call \textsc{maximum-weighted disjoint sets} (Section~\ref{subsec:mwds}). In Section~\ref{sec:eccentricities}, we introduce an additional step in our algorithm to determine the eccentricity of each vertex of $G$. We conclude this work in Section~\ref{sec:conclusion} by listing possible lines of future research on median graphs.

\section{Preliminaries} \label{sec:preliminaries}

In this section, we recall some basic notions related to graph theory, distances, and more particularly median graphs. The results listed are either reminded from the literature or direct consequences of earlier works.

\subsection{Graphs and distances} 

All graphs $G = (V,E)$ considered in this paper are undirected, unweighted, simple, finite and connected. Edges $(u,v) \in E$ are sometimes denoted by $uv$ to improve the readability of our article. Given two vertices $u,v \in V$, we denote by $d(u,v)$ the \textit{distance} between $u$ and $v$, {\em i.e.} the length of the shortest $(u,v)$-path. The interval $I(u,v)$ contains exactly the vertices which are on shortest $(u,v)$-paths. Put formally,
\[
I(u,v) = \set{x \in V: d(u,x) + d(x,v) = d(u,v)}
\]

We denote by $\ecc(u)$ the \textit{eccentricity} of a vertex $u \in V$, {\em i.e.} the maximum distance $d(u,v)$ for all $v \in V$: $\ecc(u) = \max_{v \in V} d(u,v)$.
The diameter $\diam(G)$ of graph $G$ is its maximum eccentricity: $\diam(G) = \max_{u \in V} \ecc(u)$. Concretely, the diameter is the length of the longest shortest path of $G$. The radius $\rad(G)$ is the minimum eccentricity: $\rad(G) = \min_{u \in V} \ecc(u)$.

Let $H\subseteq V$ be a vertex set. We say that $H$ (or the induced subgraph $G\left[H\right]$) is \textit{convex} if $I(u,v) \subseteq H$ for any pair $u,v \in H$. We say that $H$ is \textit{gated} if any vertex $v \notin H$ admits a gate $v' \in H$, {\em i.e.} a vertex that belongs to all intervals $I(v,x)$, $x\in H$. In other words, for any $x \in H$, we have $d(v,v') + d(v',x) = d(v,x)$. 
%fiber
%If $H$ is gated, the \textit{fiber} $F_H(x)$ is the set of vertices which admit $x \in H$ as a gate for $H$. As each vertex in $H$ is its own gate, the fibers $F_H(x)$, $x \in H$ form a partition of $V$.

Given an integer $k \ge 1$, the $k$-dimensional hypercube $Q_k$ represents all the subsets of $\set{1,\ldots,k}$ as the vertex set. Two subsets $A$ and $B$ are connected by an edge if they differ only by one element, {\em i.e.} $\card{A \bigtriangleup B} = 1$. The hypercube $Q_2$ is called a \textit{square}.

\subsection{Median graphs} 

A graph is \textit{median} if, for any triplet $x,y,z$ of distinct vertices, the set $I(x,y) \cap I(y,z) \cap I(z,x)$ contains exactly one vertex called the median $m(x,y,z)$. A median graph is bipartite, triangle-free, and does not contain an induced $K_{2,3}$~\cite{BaCh08,HaImKl11,Mu78}. Some well-known families of graphs are median: trees, grids, squaregraphs~\cite{BaChEp10}, and hypercubes. The dimension $d = \mbox{dim}(G)$ of a median graph $G$ is the dimension of the largest hypercube contained in $G$ as an induced subgraph.

%In other words, $G$ admits $Q_d$ as an induced subgraph, but not $Q_{d+1}$.

Figure~\ref{fig:median_examples} presents three examples of median graphs. One has dimension $d=1$ and is (necessarily) a tree. One has dimension $d=2$, a cogwheel. Cogwheels are squaregraphs but certain median graphs of dimension 2 are not squaregraphs~\cite{BaChEp10}. The last one has dimension $d=3$ as it contains an hypercube $Q_3$.

We provide a list of properties satisfied by median graphs. In particular, we define the notion of $\Theta$-classes which is a key ingredient of several existing algorithms~\cite{BeChChVa20,HaImKl99,ImKlMu99}.

In general graphs, all gated subgraphs are convex. The reverse is true in median graphs.
\begin{lemma}[Convex$\Leftrightarrow$Gated~\cite{BaCh08,BeChChVa20}]
Any convex subgraph of a median graph is gated.
\end{lemma}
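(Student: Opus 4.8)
The plan is to show that for each vertex $v \notin H$ the vertex of $H$ closest to $v$ is a gate of $v$ in $H$. Since $H$ is finite and nonempty, let $v'$ be a vertex of $H$ minimizing $d(v,v')$, and put $k = d(v,v') = \min_{x \in H} d(v,x)$. I claim that $v' \in I(v,x)$ for every $x \in H$, which is precisely the gate property (recall $v' \in I(v,x)$ means $d(v,v') + d(v',x) = d(v,x)$).

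To establish the claim, fix $x \in H$. If $x = v'$ there is nothing to prove, so assume $v$, $v'$, $x$ are pairwise distinct (note $v \neq v'$ and $v \neq x$ because $v \notin H$). As $G$ is median, the median $m = m(v,v',x)$ is well defined and, by definition, $m \in I(v',x) \cap I(v,v') \cap I(v,x)$. Since $H$ is convex and $v',x \in H$, we have $I(v',x) \subseteq H$, hence $m \in H$. On the other hand $m \in I(v,v')$ yields $d(v,m) + d(m,v') = d(v,v') = k$, so $d(v,m) \le k$; combined with $m \in H$ and the minimality of $k$, this forces $d(v,m) = k$ and therefore $d(m,v') = 0$, i.e. $m = v'$. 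Consequently $v' = m \in I(v,x)$, as desired.

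Letting $v$ range over all vertices outside $H$ shows that $H$ is gated. The same argument incidentally gives uniqueness of the gate: if $v'_1,v'_2 \in H$ both realize $d(v,H)$, then $m(v,v'_1,v'_2)$ lies in $H$ and on a shortest $(v,v'_1)$-path, so it equals $v'_1$, and symmetrically $v'_2$, whence $v'_1 = v'_2$. The only ingredients used are the defining property of median graphs (existence of $m(x,y,z)$ for a distinct triple) and the convexity of $H$, so there is no real obstacle here; the one idea to get right is the observation that the closest vertex of a convex set cannot be bypassed, since any median through it stays in the set and is no farther from $v$.
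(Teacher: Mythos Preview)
Your proof is correct and is in fact the standard argument for this well-known fact. Note, however, that the paper does not supply its own proof of this lemma: it is stated with references to~\cite{BaCh08,BeChChVa20} and used as a black box, so there is no in-paper argument to compare against. Your median-of-$(v,v',x)$ trick---showing that the closest point $v'$ of a convex set $H$ must be the gate because any candidate median lies in $H$ by convexity and hence cannot be strictly closer to $v$---is exactly the classical proof found in the cited sources.
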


\begin{figure}[t]

\begin{subfigure}[b]{0.33\columnwidth}
\centering
\scalebox{0.6}{\begin{tikzpicture}

% NODES %%%%%%%%%%%%%%%%%%%%%%%%%%%%%%%%%%%%%%%%%%%%%%%%%%%%%%%%%%%%%%%%%%

\node[draw, circle, minimum height=0.2cm, minimum width=0.2cm, fill=black] (P1) at (3,4) {};
\node[draw, circle, minimum height=0.2cm, minimum width=0.2cm, fill=black] (P2) at (2,2.5) {};
\node[draw, circle, minimum height=0.2cm, minimum width=0.2cm, fill=black] (P3) at (4,2.5) {};
\node[draw, circle, minimum height=0.2cm, minimum width=0.2cm, fill=black] (P4) at (1.2,1) {};
\node[draw, circle, minimum height=0.2cm, minimum width=0.2cm, fill=black] (P5) at (2.8,1) {};

% LINKS %%%%%%%%%%%%%%%%%%%%%%%%%%%%%%%%%%%%%%%%%%%%%%%%%%%%%%%%%%%%%%%%%%

\draw[line width = 1.4pt] (P1) -- (P2);
\draw[line width = 1.4pt] (P1) -- (P3);
\draw[line width = 1.4pt] (P2) -- (P4);
\draw[line width = 1.4pt] (P2) -- (P5);

%% ETIQUETTES
%
%\node[scale=1.2, color = red] at (2.0,0.5) {$E_1$};
%\node[scale=1.2, color = blue] at (4.0,0.5) {$E_2$};
%\node[scale=1.2, color = green] at (0.4,1.75) {$E_3$};
%\node[scale=1.2, color = orange] at (2.4,3.25) {$E_4$};
%\node[scale=1.2, color = purple] at (6.0,0.5) {$E_5$};
%
%\node[scale = 1.2] at (0.6,0.7) {$v_0$};
%\node[scale = 1.2] at (2.6,0.7) {$u$};
%\node[scale = 1.2] at (10.4,3.2) {$v$};

\end{tikzpicture}}
\caption{A tree, $d=1$}
\end{subfigure}
\begin{subfigure}[b]{0.33\columnwidth}
\centering
\scalebox{0.6}{\begin{tikzpicture}

% NODES %%%%%%%%%%%%%%%%%%%%%%%%%%%%%%%%%%%%%%%%%%%%%%%%%%%%%%%%%%%%%%%%%%

\node[draw, circle, minimum height=0.2cm, minimum width=0.2cm, fill=black] (P11) at (1,1) {};
\node[draw, circle, minimum height=0.2cm, minimum width=0.2cm, fill=black] (P12) at (1,2.5) {};

\node[draw, circle, minimum height=0.2cm, minimum width=0.2cm, fill=black] (P21) at (3,1) {};
\node[draw, circle, minimum height=0.2cm, minimum width=0.2cm, fill=black] (P22) at (3,2.5) {};
\node[draw, circle, minimum height=0.2cm, minimum width=0.2cm, fill=black] (P23) at (3,4) {};

\node[draw, circle, minimum height=0.2cm, minimum width=0.2cm, fill=black] (P31) at (5,1) {};
\node[draw, circle, minimum height=0.2cm, minimum width=0.2cm, fill=black] (P32) at (5,2.5) {};
\node[draw, circle, minimum height=0.2cm, minimum width=0.2cm, fill=black] (P33) at (5,4) {};

\node[draw, circle, minimum height=0.2cm, minimum width=0.2cm, fill=black] (P41) at (1.3,3.8) {};
\node[draw, circle, minimum height=0.2cm, minimum width=0.2cm, fill=black] (P42) at (1.3,5.3) {};
\node[draw, circle, minimum height=0.2cm, minimum width=0.2cm, fill=black] (P43) at (-0.7,3.8) {};

% LINKS %%%%%%%%%%%%%%%%%%%%%%%%%%%%%%%%%%%%%%%%%%%%%%%%%%%%%%%%%%%%%%%%%%

\draw[line width = 1.4pt] (P11) -- (P12);
\draw[line width = 1.4pt] (P11) -- (P21);
\draw[line width = 1.4pt] (P12) -- (P22);
\draw[line width = 1.4pt] (P21) -- (P22);

\draw[line width = 1.4pt] (P21) -- (P31);
\draw[line width = 1.4pt] (P22) -- (P32);
\draw[line width = 1.4pt] (P31) -- (P32);

\draw[line width = 1.4pt] (P22) -- (P23);
\draw[line width = 1.4pt] (P23) -- (P33);
\draw[line width = 1.4pt] (P32) -- (P33);

\draw[line width = 1.4pt] (P22) -- (P41);
\draw[line width = 1.4pt] (P12) -- (P43);
\draw[line width = 1.4pt] (P23) -- (P42);
\draw[line width = 1.4pt] (P41) -- (P43);
\draw[line width = 1.4pt] (P41) -- (P42);

\end{tikzpicture}}
\caption{A cogwheel, $d=2$}
\end{subfigure}
\begin{subfigure}[b]{0.33\columnwidth}
\centering
\scalebox{0.6}{\begin{tikzpicture}

% NODES %%%%%%%%%%%%%%%%%%%%%%%%%%%%%%%%%%%%%%%%%%%%%%%%%%%%%%%%%%%%%%%%%%

\node[draw, circle, minimum height=0.2cm, minimum width=0.2cm, fill=black] (P31) at (5,1) {};
\node[draw, circle, minimum height=0.2cm, minimum width=0.2cm, fill=black] (P32) at (5,2.5) {};

\node[draw, circle, minimum height=0.2cm, minimum width=0.2cm, fill=black] (P41) at (7,1) {};
\node[draw, circle, minimum height=0.2cm, minimum width=0.2cm, fill=black] (P42) at (7,2.5) {};

\node[draw, circle, minimum height=0.2cm, minimum width=0.2cm, fill=black] (P51) at (9,1) {};
\node[draw, circle, minimum height=0.2cm, minimum width=0.2cm, fill=black] (P52) at (9,2.5) {};

\node[draw, circle, minimum height=0.2cm, minimum width=0.2cm, fill=black] (P61) at (8.0,1.4) {};
\node[draw, circle, minimum height=0.2cm, minimum width=0.2cm, fill=black] (P62) at (8.0,2.9) {};
\node[draw, circle, minimum height=0.2cm, minimum width=0.2cm, fill=black] (P63) at (10.0,1.4) {};
\node[draw, circle, minimum height=0.2cm, minimum width=0.2cm, fill=black] (P64) at (10.0,2.9) {};

\node[draw, circle, minimum height=0.2cm, minimum width=0.2cm, fill=black] (P72) at (8.0,4.4) {};
\node[draw, circle, minimum height=0.2cm, minimum width=0.2cm, fill=black] (P74) at (10.0,4.4) {};

% LINKS %%%%%%%%%%%%%%%%%%%%%%%%%%%%%%%%%%%%%%%%%%%%%%%%%%%%%%%%%%%%%%%%%%

\draw[line width = 1.4pt] (P31) -- (P32);
\draw[line width = 1.4pt] (P31) -- (P41);
\draw[line width = 1.4pt] (P32) -- (P42);
\draw[line width = 1.4pt] (P41) -- (P42);

\draw[line width = 1.4pt](P41) -- (P51);
\draw[line width = 1.4pt] (P42) -- (P52);
\draw[line width = 1.4pt] (P51) -- (P52);

\draw[line width = 1.4pt] (P41) -- (P61);
\draw[line width = 1.4pt] (P42) -- (P62);
\draw[line width = 1.4pt] (P51) -- (P63);
\draw[line width = 1.4pt] (P52) -- (P64);
\draw[line width = 1.4pt] (P61) -- (P62);
\draw[line width = 1.4pt] (P61) -- (P63);
\draw[line width = 1.4pt] (P62) -- (P64);
\draw[line width = 1.4pt] (P63) -- (P64);

\draw[line width = 1.4pt] (P62) -- (P72);
\draw[line width = 1.4pt] (P64) -- (P74);

\end{tikzpicture}}
\caption{$d=3$}
\end{subfigure}

\caption{Examples of median graphs}
\label{fig:median_examples}
\end{figure}
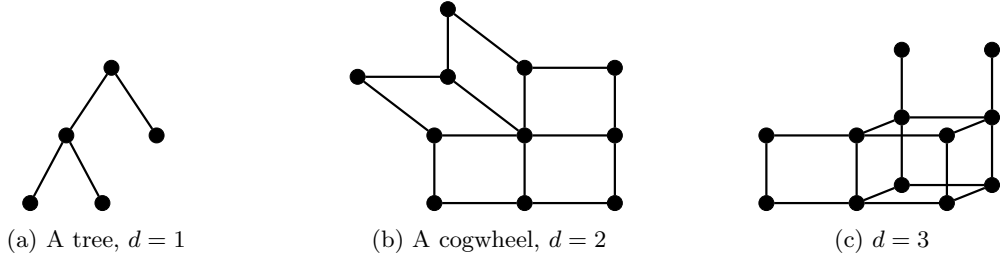

We remind the notion of $\Theta$-class, which is well explained in~\cite{BeChChVa20}, and enumerate some properties related to it. We say that the edges $uv$ and $xy$ are in relation $\Theta_0$ if they form a square $uvyx$, where $uv$ and $xy$ are opposite edges. Then, $\Theta$ refers to the reflexive and transitive closure of relation $\Theta_0$. The classes of the equivalence relation $\Theta$ are denoted by $E_1,\ldots,E_q$.  We denote by $\mathcal{E}$ the set of $\Theta$-classes: $\mathcal{E} = \set{E_1,\ldots,E_q}$. Parameter $q$ is less than the number of vertices $n$. To avoid confusions, let us highlight that parameter $q$ is different from the dimension $d$: for example, on trees, $d=1$ whereas $q = n-1$. Moreover, the dimension $d$ is at most $\lfloor \log n \rfloor$.

\begin{lemma}[$\Theta$-classes in linear time~\cite{BeChChVa20}]
There is an algorithm which computes the $\Theta$-classes $E_1,\ldots,E_q$ of a median graph in linear time $O(\card{E})$.
\end{lemma}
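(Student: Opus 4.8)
\emph{Proof plan.} Since the paper defines $\Theta$ as the reflexive--transitive closure $\Theta_0^{\ast}$ of the square relation $\Theta_0$, the plan is to reduce the task to (i) enumerating every square of $G$ and (ii) feeding the two $\Theta_0$-identifications carried by each square into a union--find structure over $E$; the classes it finally outputs are then exactly the $\Theta$-classes. The only thing requiring care is enumerating the squares while touching each edge only $O(1)$ times.

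\emph{The procedure.} First I would run a BFS (in fact a LexBFS, see below) from an arbitrary root $r$, yielding levels $\ell(v)=d(r,v)$. Since $G$ is bipartite, every edge joins two consecutive levels, hence is a ``descending'' edge of exactly one endpoint; writing $\downarrow v$ for the set of neighbours of $v$ one level below, this gives $\sum_v\card{\downarrow v}=\card{E}$. I would then process the vertices by increasing level. When $v$ is processed I would, for each pair $u_i,u_j\in\downarrow v$, locate their \emph{lower median} $m_{ij}$, namely the unique common neighbour of $u_i$ and $u_j$ at level $\ell(v)-2$: it exists because $m(r,u_i,u_j)$ is, by the median property, adjacent to both $u_i$ and $u_j$ and lies on a shortest $(r,u_i)$-path (hence at level $\ell(v)-2$), and it is unique since a second such vertex would make $\set{u_i,u_j}$ the $2$-side of an induced $K_{2,3}$ together with $v$ and $m_{ij}$. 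The four vertices $v,u_i,m_{ij},u_j$ form a square, and I would call \texttt{union} on its two pairs of opposite edges $\set{vu_i,\;m_{ij}u_j}$ and $\set{vu_j,\;m_{ij}u_i}$; the edges $m_{ij}u_i$ and $m_{ij}u_j$ are descending edges of $u_i,u_j$, processed at an earlier level, so their classes already exist. To recover the pairs sharing a common lower median without scanning adjacency lists, I would bucket each $u_i\in\downarrow v$ under every element of $\downarrow u_i$: a pair has a common lower median iff it lands in a common bucket, and since lower medians are unique each pair lands in exactly one bucket.

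\emph{Why it is correct.} The key claim is that every square of $G$ is examined exactly once. Given a square on vertices $v,a,w,b$ in cyclic order, let $v$ be of maximum level among the four; then $a,b\in\downarrow v$, and $w$ lies at level $\ell(v)$ or $\ell(v)-2$. The first case is impossible: $m(r,a,b)$ would be a common neighbour of $a$ and $b$ lying strictly below both $v$ and $w$, and $\set{a,b}$ together with $v$, $w$, $m(r,a,b)$ would induce a $K_{2,3}$. So $\ell(w)=\ell(v)-2$ and $w$ is precisely the lower median of $a,b$; hence the square is discovered exactly when $v$ is processed, and only then. Therefore after all vertices are processed the union--find classes are exactly the classes of $\Theta_0^{\ast}=\Theta$.

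\emph{Complexity and the main obstacle.} In a median graph $\set{v}\cup\downarrow v$ lies in an induced hypercube of dimension $\card{\downarrow v}$ (a standard property, obtained from the median property together with convexity--hence gatedness, by the Convex$\Leftrightarrow$Gated lemma--of the interval $I(r,v)$), so $\card{\downarrow v}\le d$; thus the per-vertex work $O\bigl(\sum_{u_i\in\downarrow v}\card{\downarrow u_i}+\binom{\card{\downarrow v}}{2}\bigr)$ is $O(d\,\card{\downarrow v})$ and the total is $O(d\,\card{E})$, already linear when $d=O(1)$. \textbf{The hard part is to shave the extra factor $d$ and obtain $O(\card{E})$ for \emph{all} median graphs.} This is what the LexBFS buys: one shows that the LexBFS order reveals the ``lower cubes'' $\set{v}\cup\downarrow v$ incrementally, so that instead of inspecting all $\binom{\card{\downarrow v}}{2}$ pairs at $v$ one pays only for a generating subset of $\Theta_0$-identifications, charging $O(1)$ amortized work (and $O(1)$ union operations) to each edge. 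Establishing that a LexBFS order of a median graph has this ``nested lower cubes'' behaviour, and making the amortized accounting rigorous, is the technical heart; the rest is the bookkeeping sketched above.
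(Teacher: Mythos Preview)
The paper does not prove this lemma at all: it is stated as a known result from~\cite{BeChChVa20} and used as a black box, with only the remark in the introduction that the algorithm relies on a Lexicographic BFS. There is therefore no ``paper's own proof'' to compare against.

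Your sketch is in the spirit of the cited algorithm and is essentially correct up to the $O(d\,\card{E})$ bound: the argument that every square of a bipartite median graph has a unique top vertex with respect to BFS levels, that the opposite bottom vertex is the (unique) median $m(r,u_i,u_j)$, and that a union--find over the resulting $\Theta_0$-identifications yields the $\Theta$-classes, is sound. The bucketing trick and the bound $\card{\downarrow v}\le d$ are also fine.

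Where your proposal falls short of the stated lemma is exactly where you say it does: you do not prove the $O(\card{E})$ bound, only $O(d\,\card{E})$. You correctly identify that LexBFS is the missing ingredient and gesture at a ``nested lower cubes'' phenomenon, but you neither state this property precisely nor prove it, and the amortized-charging argument is left entirely open. So as a proof of the lemma \emph{as stated} this is incomplete; as a proof of the weaker $O(d\,\card{E})$ bound (which, incidentally, is all the present paper needs, since everything else already carries a $2^{O(d\log d)}$ factor) it is fine.
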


In median graphs, each class $E_i$, $1\le i\le q$, is a perfect matching cutset and its two sides $H_i'$ and $H_i''$ verify nice properties, that are presented below.

\begin{lemma}[Halfspaces of $E_i$~\cite{BeChChVa20,HaImKl99,Mu80}]
For any $1\le i\le q$, the graph $G$ deprived of edges of $E_i$, {\em i.e.} $G\backslash E_i = (V,E\backslash E_i)$, has two connected components $H_i'$ and $H_i''$, called \textit{halfspaces}. Edges of $E_i$ form a matching: they have no endpoint in common. Halfspaces satisfy the following properties:
\begin{itemize}
\item Both $H_i'$ and $H_i''$ are convex/gated.
\item If $uv$ is an edge of $E_i$ with $u \in H_i'$ and $v \in H_i''$, then $H_i' = W(u,v) = \set{x \in V: d(x,u) < d(x,v)}$ and $H_i'' = W(v,u) = \set{x \in V: d(x,v) < d(x,u)}$.
\end{itemize}
\label{le:halfspaces}
\end{lemma}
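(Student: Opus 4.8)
The plan is to fix an arbitrary edge $ab\in E_i$, identify $E_i$ with the coordinate cut of $ab$ in an isometric embedding of $G$ into a hypercube, and then read off all the asserted properties from standard partial‑cube facts together with the lemma that convex subgraphs of median graphs are gated (shown above). First, since $G$ is bipartite, for every vertex $x$ the quantities $d(x,a)$ and $d(x,b)$ differ in parity and by at most one, hence by exactly one; so $V=W(a,b)\sqcup W(b,a)$. Next recall (Section~\ref{sec:intro}) that $G$ is a partial cube: fix an isometric embedding $\ell$ of $G$ into some $Q_k$ and let $j$ be the coordinate on which $\ell(a)$ and $\ell(b)$ disagree. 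Since no shortest path of a partial cube flips a coordinate twice, a vertex is closer to $a$ than to $b$ exactly when it agrees with $\ell(a)$ on coordinate $j$; therefore $W(a,b)=\{x:\ell(x)_j=\ell(a)_j\}$ and $W(b,a)=\{x:\ell(x)_j=\ell(b)_j\}$, both of these induce connected convex subgraphs, every edge between them flips coordinate $j$, the set $F$ of coordinate‑$j$ edges is a matching, and $G\setminus F$ has precisely the two components $W(a,b)$ and $W(b,a)$. By the convex‑implies‑gated lemma both halfspaces are gated. It therefore remains only to prove $E_i=F$, and once this is done every item of the statement has been established (with $W(u,v)=\{x:\ell(x)_j=\ell(u)_j\}$ for each edge $uv\in F$).

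The inclusion $E_i\subseteq F$ is immediate: two opposite edges of a square flip the same coordinate, so the partition of $E$ into coordinate cuts is a coarsening of $\Theta_0$, hence also of its reflexive--transitive closure $\Theta$; in particular the $\Theta$‑class of $ab$ is contained in $F$.

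For the reverse inclusion I would induct as follows. Take edges $cd,c'd'\in F$ with $c,c'\in W(a,b)$ and $d,d'\in W(b,a)$, and induct on $\delta=d(c,c')$. Gatedness makes $c$ the gate of $d$ in $W(a,b)$, so $d(d,x)=1+d(c,x)$ for all $x\in W(a,b)$, and likewise $d(d',x)=1+d(c',x)$; moreover the embedding gives $d(d,d')=d(c,c')=\delta$. If $\delta=1$ then $d(d,d')=1$ while $d(c,d')=d(c',d)=2$, so $cdd'c'$ is a chordless $4$‑cycle (triangle‑freeness rules out the two possible chords), i.e.\ a square, whence $cd\mathrel{\Theta_0}c'd'$. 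If $\delta\ge 2$, pick a neighbour $c_1$ of $c$ on a shortest $(c,c')$‑path; then $c_1\in W(a,b)$ by convexity and $d(c_1,c')=\delta-1$. Let $m=m(c_1,d,d')$; since $m$ lies on shortest paths among $c_1,d,d'$, the identities $d(c_1,d)=2$, $d(c_1,d')=\delta$ and $d(d,d')=\delta$ force $d(c_1,m)=1$, and $m\in I(d,d')\subseteq W(b,a)$, so $c_1m\in F$ is an $F$‑partner of $c_1$. Applying the case $\delta=1$ to the adjacent pair $c,c_1$ gives $cd\mathrel{\Theta_0}c_1m$, and the induction hypothesis gives $c_1m\mathrel{\Theta}c'd'$; hence $cd\mathrel{\Theta}c'd'$. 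This yields $F\subseteq E_i$, so $E_i=F$ and the lemma follows.

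The bipartite observation and the partial‑cube bookkeeping of the first paragraph are routine. The real work — and the only place where the median property, rather than mere bipartiteness, is used — is the inclusion $F\subseteq E_i$; its crux is the construction, via a single application of the median operation, of an $F$‑partner for an arbitrary interior vertex $c_1$ of $W(a,b)$. (Some such hypothesis is genuinely needed: on $C_6$, which is a partial cube but not median, $\Theta$ is trivial while the analogue of $F$ is not, and the conclusion of the lemma fails.)
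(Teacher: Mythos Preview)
The paper does not supply its own proof of this lemma; it is quoted as a known result with references to \cite{BeChChVa20,HaImKl99,Mu80}. Your argument is correct and self-contained. The only point worth flagging is that the paper's $\Theta$ is defined as the transitive closure of ``opposite edges in a square'', which for arbitrary partial cubes can be strictly finer than the coordinate-cut relation (your $C_6$ example); hence the inclusion $F\subseteq E_i$ is exactly where the median hypothesis must enter, and your inductive step---manufacturing an $F$-partner for the interior vertex $c_1$ via $m(c_1,d,d')$---does this cleanly. All the auxiliary claims (that $F$ is a matching, that $c$ is the gate of $d$ in $W(a,b)$, that $d(d,d')=d(c,c')$, and the distance computation forcing $d(c_1,m)=1$) check out.
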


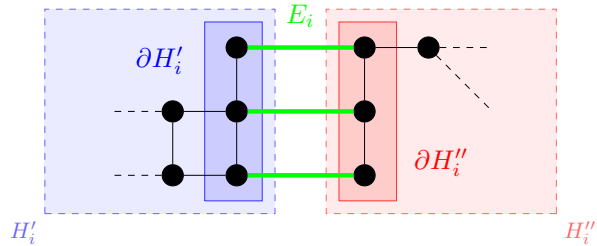
\begin{figure}[h]
\centering
\scalebox{0.85}{\begin{tikzpicture}

% AREAS %%%%%%%%%%%%%%%%%%%%%%%%%%%%%%%%%%%%%%%%%%%%%%%%%%%%%%%%%%%%%%%%%%

\draw [dashed, color = white!40!blue, fill = white!92!blue] (1.0,0.4) -- (4.6,0.4) -- (4.6,3.6) -- (1.0,3.6) -- (1.0,0.4) node[below left] {$H_i'$};
\draw [dashed, color = white!40!red, fill = white!92!red] (9.0,0.4) -- (9.0,3.6) -- (5.4,3.6) -- (5.4,0.4) -- (9.0,0.4) node[below right] {$H_i''$};
\draw [color = white!20!blue, fill = white!80!blue] (3.5,0.6) -- (4.4,0.6) -- (4.4,3.4) -- (3.5,3.4) -- (3.5,0.6);
\draw [color = white!20!red, fill = white!80!red] (5.6,0.6) -- (6.5,0.6) -- (6.5,3.4) -- (5.6,3.4) -- (5.6,0.6);

% NODES %%%%%%%%%%%%%%%%%%%%%%%%%%%%%%%%%%%%%%%%%%%%%%%%%%%%%%%%%%%%%%%%%%

\node[draw, circle, minimum height=0.2cm, minimum width=0.2cm, fill=black] (P11) at (4,1) {};
\node[draw, circle, minimum height=0.2cm, minimum width=0.2cm, fill=black] (P12) at (4,2) {};
\node[draw, circle, minimum height=0.2cm, minimum width=0.2cm, fill=black] (P13) at (4,3) {};

\node[draw, circle, minimum height=0.2cm, minimum width=0.2cm, fill=black] (P21) at (6,1) {};
\node[draw, circle, minimum height=0.2cm, minimum width=0.2cm, fill=black] (P22) at (6,2) {};
\node[draw, circle, minimum height=0.2cm, minimum width=0.2cm, fill=black] (P23) at (6,3) {};

\node[draw, circle, minimum height=0.2cm, minimum width=0.2cm, fill=black] (P01) at (3,1) {};
\node[draw, circle, minimum height=0.2cm, minimum width=0.2cm, fill=black] (P02) at (3,2) {};

\node[draw, circle, minimum height=0.2cm, minimum width=0.2cm, fill=black] (P33) at (7,3) {};

% LINKS %%%%%%%%%%%%%%%%%%%%%%%%%%%%%%%%%%%%%%%%%%%%%%%%%%%%%%%%%%%%%%%%%%

\draw[line width = 2pt, color = green] (P11) -- (P21);
\draw[line width = 2pt, color = green] (P12) -- (P22);
\draw[line width = 2pt, color = green] (P13) -- (P23);

\draw (P11) -- (P12);
\draw (P12) -- (P13);
\draw (P21) -- (P22);
\draw (P22) -- (P23);

\draw (P01) -- (P11);
\draw (P02) -- (P12);
\draw (P01) -- (P02);
\draw[dashed] (P02) -- (2,2);
\draw[dashed] (P01) -- (2,1);

\draw (P23) -- (P33);
\draw[dashed] (P33) -- (8,3);
\draw[dashed] (P33) -- (8,2);

% ETIQUETTES

\node[scale=1.2, color = blue] at (2.8,2.8) {$\partial H_i'$};
\node[scale=1.2, color = red] at (7.2,1.2) {$\partial H_i''$};
\node[scale=1.2, color = green] at (5.0,3.5) {$E_i$};

\end{tikzpicture}}
\caption{A class $E_i$ with sets $H_i', H_i'', \partial H_i', \partial H_i''$}
\label{fig:halfspaces}
\end{figure}

We denote by $\partial H_i'$ the subset of $H_i'$ containing the vertices which are adjacent to a vertex in $H_i''$. Put differently, set $\partial H_i'$ is made up of vertices of $H_i'$ which are endpoints of edges in $E_i$. Similarly, set $\partial H_i''$ contains the vertices of $H_i''$ which are adjacent to $H_i'$. We say these sets are the \textit{boundaries} of halfspaces $H_i'$ and $H_i''$ respectively.

\begin{lemma}[Boundaries~\cite{BeChChVa20,HaImKl99,Mu80}]
Both $\partial H_i'$ and $\partial H_i''$ are convex/gated. Moreover, the edges of $E_i$ define an isomorphism between $\partial H_i'$ and $\partial H_i''$.
\label{le:boundaries}
\end{lemma}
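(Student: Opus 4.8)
First I would establish convexity of $\partial H_i'$, and then gatedness follows for free by the Convex$\Leftrightarrow$Gated lemma. To show $\partial H_i'$ is convex, take two vertices $x,y \in \partial H_i'$ and a vertex $z \in I(x,y)$; I must show $z \in \partial H_i'$. Since $H_i'$ is convex (Lemma \ref{le:halfspaces}), we already have $z \in H_i'$, so the only thing to prove is that $z$ is an endpoint of an edge of $E_i$, i.e. that $z$ has a neighbour $z''$ in $H_i''$. The natural candidate is obtained by using the matching: let $x''$ and $y''$ be the partners of $x$ and $y$ across $E_i$ (they exist by definition of the boundary), so $xx'', yy'' \in E_i$ and $x'',y'' \in H_i''$. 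The key point is that the map sending a boundary vertex to its $E_i$-partner is distance-preserving on the boundary — more precisely $d(x'',y'') = d(x,y)$. This is a standard fact about $\Theta$-classes: crossing $E_i$ exactly once changes the distance by exactly the length of the shortest path between the corresponding boundary vertices. I would prove it by noting $d(x,y)\le d(x,x'')+d(x'',y'')+d(y'',y) = 1 + d(x'',y'') + 1$, and symmetrically; combined with the fact that every shortest $(x,y)$-path stays in the convex set $H_i'$ (so uses no edge of $E_i$) and every shortest $(x'',y'')$-path stays in $H_i''$, a parity/cut argument forces equality. Alternatively, one can invoke Mulder's convex expansion description directly.

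Given $d(x'',y'') = d(x,y)$, consider the walk from $x$ to $y$ obtained as $x, x'', \langle\text{shortest }(x'',y'')\text{-path}\rangle, y'', y$; its length is $d(x,y)+2$. Now I would use the ``projection'' onto the gated/convex set $\partial H_i'$: since $H_i''$ is gated, the vertex $z$ has a gate $z'$ in $\partial H_i''$... actually the cleaner route is to take the $E_i$-partner construction the other way. Let $z \in I(x,y)\subseteq H_i'$. Pick a shortest $(x,y)$-path $P$ through $z$; it lies entirely in $H_i'$. The edges of $E_i$ at $x$ and at $y$ together with the boundary $\partial H_i''$ let me build a shortest $(x'',y'')$-path $P''$ that is the ``mirror'' of $P$ — this is exactly where I would invoke that $E_i$ induces a map between the two halfspaces compatible with shortest paths. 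The mirror of $z$ under this correspondence is a vertex $z''\in H_i''$ adjacent to $z$ via an edge of $E_i$, which is precisely the statement that $z\in\partial H_i'$. Symmetry gives convexity of $\partial H_i''$ as well.

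For the second assertion, that $E_i$ defines an isomorphism between $G[\partial H_i']$ and $G[\partial H_i'']$: the matching $E_i$ already gives a bijection $\varphi:\partial H_i'\to\partial H_i''$, $\varphi(x)=x''$. I must check it is a graph isomorphism, i.e. $xy\in E$ iff $\varphi(x)\varphi(y)\in E$ for $x,y\in\partial H_i'$. If $xy\in E$ then $x,y,y'',x''$ — with $xx'',yy''\in E_i$ — would have to form a $4$-cycle, because $x''y''$ has length $d(x'',y'')=d(x,y)=1$; that a square is actually formed (rather than $x''$ and $y''$ being joined by some other length-$1$ configuration) follows since $G$ is bipartite and triangle-free, so $d(x'',y'')=1$ literally means $x''y''\in E$. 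Conversely by symmetry. Preservation of non-adjacency is the contrapositive of the same distance identity $d(\varphi(x),\varphi(y))=d(x,y)$.

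The main obstacle is the distance-preservation fact $d(x,y)=d(x'',y'')$ for boundary vertices $x,y$ (equivalently, that $G[\partial H_i']$ is an isometric — not just induced — subgraph whose $E_i$-image is isometric in $H_i''$). Everything else is bookkeeping with convexity, bipartiteness, and the matching-cutset structure already recorded in Lemmas \ref{le:halfspaces}. I would either cite Mulder's convex-expansion characterization or give the short cut-parity argument sketched above; since the statement is explicitly attributed to \cite{BeChChVa20,HaImKl99,Mu80}, citing suffices, but the self-contained argument is the one I would write out.
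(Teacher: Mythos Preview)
The paper does not give its own proof of this lemma: it is stated with citations to \cite{BeChChVa20,HaImKl99,Mu80} and used as a black box. So there is nothing to compare against, and your decision to ``cite or give the self-contained argument'' is exactly what the authors did (they cited). That said, since you sketched a self-contained argument, let me comment on it.

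Your distance-preservation claim $d(x,y)=d(x'',y'')$ is correct and the idea you gesture at is the right one; the cleanest version uses Lemma~\ref{le:halfspaces} twice: from $y\in H_i'=W(x,x'')$ and bipartiteness you get $d(y,x'')=d(y,x)+1$, and from $x''\in H_i''=W(y'',y)$ you get $d(x'',y)=d(x'',y'')+1$, hence $d(x,y)=d(x'',y'')$. The isomorphism part then follows as you say.

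The convexity argument, however, has a genuine gap. To show that an arbitrary $z\in I(x,y)$ lies in $\partial H_i'$ you propose to ``mirror'' a shortest $(x,y)$-path $P$ through $z$ to a shortest $(x'',y'')$-path $P''$ and read off the $E_i$-partner of $z$. But constructing that mirror path vertex by vertex presupposes that every vertex of $P$ already has an $E_i$-partner, which is precisely the statement you are proving. The route you abandoned is the one that works: use that $H_i''$ is gated (Lemma~\ref{le:halfspaces} plus Convex$\Leftrightarrow$Gated) and let $g$ be the gate of $z$ in $H_i''$. From $d(z,x'')=d(z,x)+1$ and $d(z,y'')=d(z,y)+1$ (same $W$-argument as above) you get
\[
d(z,x'')+d(z,y'') \;=\; d(x,y)+2 \;=\; d(x'',y'')+2,
\]
while the gate gives $d(z,x'')+d(z,y'')=2d(z,g)+d(g,x'')+d(g,y'')\ge 2d(z,g)+d(x'',y'')$. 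Hence $d(z,g)\le 1$, and since $z\in H_i'$, $g\in H_i''$ forces $d(z,g)=1$, so $zg\in E_i$ and $z\in\partial H_i'$. Swapping this in for the ``mirror'' paragraph makes your sketch a complete proof.
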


As a consequence, suppose $uv$ and $u'v'$ belong to $E_i$: if $uu'$ is an edge and belongs to class $E_j$, then $vv'$ is an edge of $E_j$ too. We terminate this list of lemmas with a last property dealing with the orientation of edges  from a canonical basepoint $v_0 \in V$. The \textit{$v_0$-orientation} of the edges of $G$ according to $v_0$ is such that, for any edge $uv$, the orientation is $\vv{uv}$ if $d(v_0,u) < d(v_0,v)$. Indeed, we cannot have $d(v_0,u) = d(v_0,v)$ as $G$ is bipartite.

\begin{lemma}[Orientation~\cite{BeChChVa20}]
All edges can be oriented according to any canonical basepoint $v_0$.
\end{lemma}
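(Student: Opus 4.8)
The plan is to reduce the statement to the bipartiteness of median graphs, which has already been recorded in this section. First I would fix a bipartition $V = A \sqcup B$ of $G$ with $v_0 \in A$, and invoke the standard BFS-layering fact: for every $w \in V$, the distance $d(v_0,w)$ is even when $w \in A$ and odd when $w \in B$; equivalently, every edge of $G$ joins a vertex at even distance from $v_0$ to a vertex at odd distance from $v_0$. This is just the usual reformulation of bipartiteness in terms of breadth-first-search layers.

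Next, given an arbitrary edge $uv \in E$, the endpoints $u$ and $v$ lie in different parts of the bipartition, so exactly one of $d(v_0,u)$ and $d(v_0,v)$ is even and the other is odd; in particular $d(v_0,u) \neq d(v_0,v)$. Combining this with the triangle inequality $\lvert d(v_0,u) - d(v_0,v)\rvert \le d(u,v) = 1$ forces $\lvert d(v_0,u) - d(v_0,v)\rvert = 1$. Hence, for each edge $uv$, precisely one of its two orientations has its tail strictly closer to $v_0$ than its head, so the $v_0$-orientation $\vv{uv}$ is unambiguously defined on all of $E$ at once, which is the assertion of the lemma. As a free byproduct, since $d(v_0,\cdot)$ strictly increases along every directed edge, this orientation is acyclic, i.e. it turns $G$ into a DAG rooted (in the source sense) at $v_0$; this observation is what will later make $v_0$-orientations a convenient bookkeeping device.

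I do not anticipate any genuine obstacle: the whole content is bipartiteness of median graphs together with the elementary parity property of BFS layers and a one-line use of the triangle inequality. The only point that deserves to be stated carefully is that consecutive BFS layers are the only pairs of layers joined by an edge — but that is exactly the bipartiteness reformulation used above, so no additional work is needed.
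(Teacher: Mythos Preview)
Your argument is correct and matches the paper's approach: the paper does not give a separate proof block for this lemma (it is cited from~\cite{BeChChVa20}), but immediately before the lemma it already records the one-line justification ``we cannot have $d(v_0,u) = d(v_0,v)$ as $G$ is bipartite,'' which is exactly the bipartiteness-plus-parity argument you spell out in more detail.
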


From now on, we suppose that an arbitrary basepoint $v_0 \in V$ has been selected and we refer automatically to the $v_0$-orientation when we mention incoming or outgoing edges.

\subsection{Orthogonal $\Theta$-classes and hypercubes}

We present now an important notion on median graphs. In~\cite{Ko09}, Kovse studied a relationship between \textit{splits} which refer to the halfspaces of $\Theta$-classes. It says that two splits $\set{H_i',H_i''}$ and $\set{H_j',H_j''}$ are \textit{incompatible} if the four sets $H_i' \cap H_j'$, $H_i'' \cap H_j'$, $H_i' \cap H_j''$, and $H_i'' \cap H_j''$ are nonempty. We use different notation to characterize this relationship which makes more sense in our context.

\begin{definition}[Orthogonal classes]
We say that classes $E_i$ and $E_j$ are {\em orthogonal} ($E_i \perp E_j$) if there is a square $uvyx$ in $G$, where $uv,xy \in E_i$ and $ux,vy \in E_j$.
\end{definition}

We observe that classes $E_i$ and $E_j$ are orthogonal if and only if the splits produced by their halfspaces are incompatible.

\begin{lemma}[Orthogonal$\Leftrightarrow$Incompatible]
$E_i$ and $E_j$ are orthogonal if and only if
$\set{H_i',H_i''}$ and $\set{H_j',H_j''}$ are incompatible.
\label{le:perp_incomp}
\end{lemma}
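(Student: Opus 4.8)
The plan is to prove both implications directly from the definitions, using the characterization of halfspaces as $W(u,v)$ sets from Lemma~\ref{le:halfspaces} and the isomorphism between boundaries from Lemma~\ref{le:boundaries}.

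\textbf{From orthogonal to incompatible.} Suppose $E_i \perp E_j$, witnessed by a square $uvyx$ with $uv, xy \in E_i$ and $ux, vy \in E_j$. First I would fix the side conventions: since $uv \in E_i$, the two endpoints lie in different halfspaces of $E_i$, say $u \in H_i'$ and $v \in H_i''$; likewise $x, y$ are the $E_i$-partners of $u, v$ respectively (by the isomorphism of Lemma~\ref{le:boundaries} applied along the edge $ux \in E_j$), so $x \in H_i'$ and $y \in H_i''$. Symmetrically, since $ux \in E_j$, we may assume $u \in H_j'$, $x \in H_j''$, and then $v \in H_j'$, $y \in H_j''$. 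Reading off the four corners of the square: $u \in H_i' \cap H_j'$, $x \in H_i' \cap H_j''$, $v \in H_i'' \cap H_j'$, $y \in H_i'' \cap H_j''$. All four intersections are nonempty, so the splits are incompatible.

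\textbf{From incompatible to orthogonal.} This is the harder direction. Assume the four sets $H_i' \cap H_j'$, $H_i'' \cap H_j'$, $H_i' \cap H_j''$, $H_i'' \cap H_j''$ are all nonempty; I want to produce a square with opposite edges in $E_i$ and the other pair in $E_j$. The idea is to work near the boundary $\partial H_i'$. Pick any vertex $a \in H_i' \cap H_j'$ and any vertex $b \in H_i' \cap H_j''$ (both nonempty by hypothesis, and both inside the convex halfspace $H_i'$, so $I(a,b) \subseteq H_i'$); since $a$ and $b$ lie on opposite sides of $E_j$, every shortest $(a,b)$-path crosses $E_j$, so $E_j$ meets $H_i'$, i.e.\ there is an edge $u x \in E_j$ with $u, x \in H_i'$; by the same argument using $H_i'' \cap H_j'$ and $H_i'' \cap H_j''$, class $E_j$ also meets $H_i''$. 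Now let $u' \in \partial H_i''$ be the $E_i$-partner of $u$ and $x' \in \partial H_i''$ the $E_i$-partner of $x$; by the remark following Lemma~\ref{le:boundaries} (edges of $E_i$ carry edges to edges and preserve $\Theta$-classes), $u'x'$ is an edge of $E_j$, and $u u' x' x$ is a square — wait, I must check this square uses exactly $E_i$ and $E_j$: its sides $uu'$ and $xx'$ are in $E_i$ by construction, and its sides $ux$, $u'x'$ are in $E_j$; this is precisely the definition of $E_i \perp E_j$. The one gap to fill carefully is the claim that $E_j$ actually has an edge with both endpoints in $H_i'$ (not merely that some shortest path from $a$ to $b$ leaves and re-enters): since $I(a,b) \subseteq H_i'$ by convexity of $H_i'$, any edge of such a shortest path lies in $H_i'$, and at least one of these edges belongs to $E_j$ because $a \in H_j'$ and $b \in H_j''$ forces the path to switch sides of the cutset $E_j$. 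That resolves it.

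\textbf{Main obstacle.} The delicate point is the incompatible-to-orthogonal direction: one must be sure that the edge of $E_j$ found inside $H_i'$ and the edge of $E_j$ found inside $H_i''$ are matched up by the $E_i$-isomorphism into a single square, rather than being unrelated edges. This is exactly what Lemma~\ref{le:boundaries} and its stated corollary (if $uv, u'v' \in E_i$ and $uu' \in E_j$ then $vv' \in E_j$) are designed to give, so the proof reduces to choosing the edge $ux \in E_j$ with $u, x \in \partial H_i'$ — which is possible because $\partial H_i'$ is itself convex (Lemma~\ref{le:boundaries}) and the gate of any vertex of $E_j \cap H_i'$ onto $\partial H_i'$ keeps us on the boundary, so we can slide the witnessing $E_j$-edge to the boundary and then transport it across $E_i$.
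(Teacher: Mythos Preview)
Your forward direction is correct and matches the paper. The reverse direction has a real gap.

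You locate an edge $ux \in E_j$ with $u,x \in H_i'$ and then immediately speak of ``the $E_i$-partner of $u$'' and ``the $E_i$-partner of $x$''. But $E_i$-partners exist only for vertices of $\partial H_i'$, and nothing so far forces $u,x \in \partial H_i'$: the $E_j$-edge you produced lies somewhere inside the halfspace $H_i'$, not necessarily on its boundary. You recognize this in the ``main obstacle'' paragraph and propose to ``slide'' the edge onto $\partial H_i'$ by gating, but that sentence is a hope, not an argument. To make it work you would have to show (a) that the gate of $u\in H_j'$ onto $\partial H_i'$ stays in $H_j'$ and the gate of $x\in H_j''$ stays in $H_j''$, and (b) that these two gates are adjacent. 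Both are true, but proving (a) already requires knowing that $\partial H_i'$ meets both $H_j'$ and $H_j''$, which is precisely the missing convexity step.

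The paper closes this gap directly, without gating. After finding an $E_j$-edge $y_1y_2$ inside $H_i'$ (your step) and symmetrically an $E_j$-edge $y_3y_4$ inside $H_i''$, it considers a shortest $(y_1,y_3)$-path: since $y_1,y_3 \in \partial H_j'$ and $\partial H_j'$ is convex (Lemma~\ref{le:boundaries}), this path stays in $\partial H_j'$ and must cross $E_i$, yielding an $E_i$-edge $z_1z_3$ with \emph{both endpoints on} $\partial H_j'$. Now the $E_j$-isomorphism transports $z_1z_3$ across to an edge $z_2z_4$, and $z_1z_2z_4z_3$ is the square. Your intended argument is the $i\leftrightarrow j$ mirror of this, but you skipped the second application of convexity (to a boundary, not just a halfspace) that actually lands the edge where the isomorphism can act on it.
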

\begin{proof}
First, if $E_i$ and $E_j$ are orthogonal, there is a square containing the edges of these two classes. The four vertices belong respectively to the four sets $H_i' \cap H_j'$, $H_i'' \cap H_j'$, $H_i' \cap H_j''$, and $H_i'' \cap H_j''$. Consequently, these sets are nonempty and the splits are incompatible.

Second, suppose the four sets $H_i' \cap H_j'$, $H_i'' \cap H_j'$, $H_i' \cap H_j''$, and $H_i'' \cap H_j''$ are nonempty. Consider only the vertices in $H_i'$. As $H_i' \cap H_j'$ and $H_i' \cap H_j''$ are nonempty, we take arbitrarily one vertex $x_1$ in $H_i' \cap H_j'$ and one vertex $x_2$ in $H_i' \cap H_j''$. By convexity of $H_i'$ (Lemma~\ref{le:halfspaces}), any shortest $(x_1,x_2)$-path passes through an edge of $E_j$ with its two endpoints in $H_i'$: this edge is denoted by $y_1y_2$, where $y_1 \in H_i' \cap \partial H_j'$ and $y_2 \in H_i' \cap \partial H_j''$. If we consider now only the vertices in $H_i''$, we can point out an edge $y_3y_4$ with the same method, where $y_3 \in H_i'' \cap \partial H_j'$ and $y_4 \in H_i'' \cap \partial H_j''$. By convexity of $\partial H_j'$ (Lemma~\ref{le:boundaries}), any shortest $(y_1,y_3)$-path passes through an edge of $E_i$ with two endpoints in $\partial H_j'$: we denote this edge by $z_1z_3$, where $z_1 \in \partial H_i' \cap \partial H_j'$ and $z_3 \in \partial H_i'' \cap \partial H_j'$. Let $z_2$ (resp. $z_4$) be the neighbor of $z_1$ (resp. $z_3$) in $\partial H_j''$. We have $z_1z_2, z_3z_4 \in E_j$ and $z_1z_3 \in E_i$. As $E_j$ defines an isomorphism (Lemma~\ref{le:boundaries}) between $\partial H_j'$ and $\partial H_j''$, then $z_2z_4 \in E$. We obtain a square $z_1z_2z_4z_3$ with edges belonging to classes $E_i$ and $E_j$.
\end{proof}

This results proves implicitely that the splits $\set{H_i',H_i''}$ and $\set{H_j',H_j''}$ are incompatible if and only if the following four sets formed with boundaries $\partial H_i' \cap \partial H_j'$, $\partial H_i'' \cap \partial H_j'$, $\partial H_i' \cap \partial H_j''$, and $\partial H_i'' \cap \partial H_j''$ are nonempty.

We pursue with a property on orthogonal classes: if two edges of two orthogonal classes $E_i$ and $E_j$ are incident, they belong to a common square. Even if the result was already proposed in~\cite{BaCo93}, we present a different proof consistent with the notions evoked earlier.

\begin{lemma}[Squares~\cite{BaCo93}]
Let $xu \in E_i$ and $uy \in E_j$. If $E_i$ and $E_j$ are orthogonal, then there is a vertex $v$ such that $uyvx$ is a square.
\label{le:squares}
\end{lemma}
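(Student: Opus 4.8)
The plan is to exploit the orthogonality hypothesis $E_i \perp E_j$ together with the structure of halfspaces and boundaries. First I would locate the relevant portions of the graph: since $xu \in E_i$, the edge $xu$ has one endpoint in each halfspace of $E_i$; say $x \in H_i'$ and $u \in H_i''$. Similarly $uy \in E_j$, so $u$ and $y$ lie on opposite sides of $E_j$; say $u \in H_j'$ and $y \in H_j''$. Thus $u \in H_i'' \cap H_j'$, $x \in H_i' \cap H_j'$ (because $x$ is a neighbor of $u$ not crossing $E_j$, so it stays in $H_j'$), and $y \in H_i'' \cap H_j''$ (because $y$ is a neighbor of $u$ not crossing $E_i$, so it stays in $H_i''$). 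In particular $u \in \partial H_i'' \cap \partial H_j'$, $x \in \partial H_i'$, and $y \in \partial H_j''$.

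The key step is to produce the fourth corner $v$ and verify it forms a square. Here I would use Lemma~\ref{le:boundaries}: the edges of $E_i$ define an isomorphism between $\partial H_i'$ and $\partial H_i''$. Since $u \in \partial H_i''$ and $uy \in E_j$ with $y \in \partial H_i''$ as well (indeed $y \in H_i''$ and $y$ is adjacent to $u$ which is in $\partial H_i''$; one checks $y \in \partial H_i''$ because the boundary $\partial H_i''$ is convex/gated and contains both $u$ and... actually more carefully: $y$ being the $E_j$-partner of $u$, and $E_j$ being orthogonal to $E_i$, the incompatibility of the splits via Lemma~\ref{le:perp_incomp} guarantees $\partial H_i'' \cap \partial H_j''$ is nonempty, but I need it to contain $y$ specifically). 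The cleanest route: let $x'$ be the $E_i$-partner of $u$, i.e. the neighbor of $u$ across $E_i$; by our setup $x' = x$ and $x \in \partial H_i'$. Apply the isomorphism given by $E_i$ between $\partial H_i'$ and $\partial H_i''$ — this isomorphism sends $x \mapsto u$. Now I claim $y \in \partial H_i''$: since $E_i \perp E_j$, the boundary $\partial H_i''$ meets $H_j''$ (by the boundary version of incompatibility noted after Lemma~\ref{le:perp_incomp}), and by convexity of $\partial H_i''$ plus the fact that $u \in \partial H_i'' \cap \partial H_j'$, there is an $E_j$-edge inside $\partial H_i''$ incident to $u$; since $E_j$ is a matching, that edge must be $uy$, so $y \in \partial H_i''$. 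Pulling $y$ back through the $E_i$-isomorphism gives a vertex $v \in \partial H_i'$ with $vy \in E_i$ and $vx \in E_j$ (the isomorphism is by $E_i$-edges and preserves adjacency and $\Theta$-classes by Lemma~\ref{le:boundaries} and the remark following it). Then $uyvx$ has edges $xu, vy \in E_i$ and $uy, xv \in E_j$, so it is a square.

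An alternative, possibly slicker, approach avoids boundaries: apply Lemma~\ref{le:perp_incomp} directly to get \emph{some} square $z_1 z_2 z_4 z_3$ with $z_1z_3 \in E_i$ and $z_1z_2, z_3z_4 \in E_j$, then use a median/convexity argument to ``transport'' this square so that one of its $E_i$-edges becomes $xu$ — but transporting a square is itself essentially the content of the lemma, so I expect the boundary-isomorphism argument above to be the honest path. I would also double-check the degenerate possibility that $x = y$ or that $v$ coincides with an existing vertex: bipartiteness rules out $x = y$ (they are at distance $2$ through $u$), and $v$ is genuinely new because $v \in H_i'$ while $u, y \in H_i''$ and $v \in H_j''$ while $x, u \in H_j'$.

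The main obstacle is precisely the bookkeeping in the middle paragraph: showing $y \in \partial H_i''$ (equivalently, that the $E_j$-edge at $u$ stays within the boundary $\partial H_i''$) and then confirming that the $E_i$-isomorphism carries the edge $uy$ to an edge $vx$ lying in $E_j$. Both facts hinge on combining convexity/gatedness of boundaries (Lemma~\ref{le:boundaries}) with the incompatibility characterization on boundaries noted right after Lemma~\ref{le:perp_incomp}; once those are in hand, the square pops out. I do not anticipate any nontrivial distance computation — everything reduces to membership in halfspaces and boundaries and to the matching/isomorphism structure of $\Theta$-classes.
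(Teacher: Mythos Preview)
Your boundary-isomorphism approach is correct and genuinely different from the paper's proof. The one soft spot is the step showing $y \in \partial H_i''$: from ``$\partial H_i''$ is convex, contains $u$, and meets $H_j''$'' you cannot \emph{directly} conclude that the $E_j$-edge at $u$ lies inside $\partial H_i''$---convexity only gives some $E_j$-edge along a shortest path, not necessarily the one at $u$. The missing link is the gate property: $y$, being the $E_j$-neighbour of $u$, is the gate of $u$ in $H_j''$; hence for any $w \in \partial H_i'' \cap \partial H_j''$ (nonempty by the remark after Lemma~\ref{le:perp_incomp}) we have $y \in I(u,w) \subseteq \partial H_i''$. Once that is said, pulling the edge $uy$ back through the $E_i$-isomorphism to obtain $xv \in E_j$ is exactly right, and the square follows.

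The paper takes a shorter, median-based route. With (in its labelling) $u \in H_i' \cap H_j'$, $x \in H_i''$, $y \in H_j''$, it picks any $z \in H_i'' \cap H_j''$ (nonempty by Lemma~\ref{le:perp_incomp}) and sets $v = m(x,y,z)$. Convexity of $H_i''$ gives $v \in I(x,z) \subseteq H_i''$ and convexity of $H_j''$ gives $v \in I(y,z) \subseteq H_j''$, so $v \notin \{x,u,y\}$; since $v \in I(x,y)$ and $d(x,y) = 2$, the median $v$ must be adjacent to both $x$ and $y$. This bypasses boundaries and the $E_i$-isomorphism entirely, using only halfspace convexity and the defining median property---a one-line ``median-native'' argument. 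Your route, by contrast, makes the matching/isomorphism structure of $\Theta$-classes do the work, which is closer in spirit to how Lemma~\ref{le:squares} is actually exploited later (e.g.\ in the proof of Lemma~\ref{le:pof_adjacent}). Both are fine; the paper's is slicker, yours is more structural.
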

\begin{proof}
We say w.l.o.g. that $u \in H_i' \cap H_j'$. Then, $x \in H_i'' \cap H_j'$ and $y \in H_i' \cap H_j''$. Let $z$ be an arbitrary vertex of $H_i'' \cap H_j''$ which is nonempty (Lemma~\ref{le:perp_incomp}). We fix $v = m(x,y,z)$. We prove that vertex $v$ is different from $x$, $u$, and $y$. 

As $H_i''$ is convex, $I(x,z) \subseteq H_i''$ and $v \in H_i''$. Consequently, $v \neq u,y$. Similarly, as $H_j''$ is convex, $I(y,z) \subseteq H_j''$, so $v\neq x$. The median $v = m(x,y,z)$ belongs to $I(x,y)$ and the distance $d(x,y)$ is at most 2. As $v \neq x,u,y$, vertex $v$ belongs to a shortest $(x,y)$-path of length 2 which does not contain $u$. As a consequence, $v$ is adjacent to $x$ and $y$.
\end{proof}

Now, we focus on set of classes which are pairwise orthogonal.

\begin{definition}[Pairwise Orthogonal Family]
We say that a set of classes $X \subseteq \mathcal{E}$ is a {\em Pairwise Orthogonal Family (POF)} if for any pair $E_j,E_h \in X$, we have $E_j \perp E_h$.
\end{definition}

The empty set is considered as a POF. The notion of POF is strongly related to the induced hypercubes in median graphs. First, observe that all $\Theta$-classes of a median graph form a POF if and only if the graph is an hypercube of dimension $\log n$~\cite{Ko09,MoMuRo98}. Second, if all classes of a POF $X$ are adjacent to $v \in V$, there is an hypercube containing $v$ and its edges belong to classes in $X$.

\begin{lemma}[POFs adjacent to a vertex]
Let $X$ be a POF, $v \in V$, and assume that for each class $E_i \in X$, there is an edge of $E_i$ adjacent to $v$. There exists an hypercube $Q$ containing vertex $v$ and all edges of $X$ adjacent to $v$. Moreover, the $\Theta$-classes of the edges of $Q$ are the classes of $X$.
\label{le:pof_adjacent}
\end{lemma}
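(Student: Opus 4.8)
The plan is to build the hypercube $Q$ by induction on $|X|$, adding one orthogonal class at a time and at each stage exhibiting a fresh ``copy'' of the previously built cube shifted across the new $\Theta$-class. Concretely, enumerate $X = \{E_{i_1}, \dots, E_{i_k}\}$. For each $j$, let $v u_j$ be the edge of $E_{i_j}$ incident to $v$, and w.l.o.g. orient so that $v \in H_{i_j}'$ and $u_j \in H_{i_j}''$ for all $j$ (we may always name the halfspaces this way). The claim to prove inductively is: for every $S \subseteq \{1,\dots,k\}$ there is a vertex $w_S$ with $d(v, w_S) = |S|$, lying in $\bigcap_{j \in S} H_{i_j}'' \cap \bigcap_{j \notin S} H_{i_j}'$, and such that the map $S \mapsto w_S$ is a graph isomorphism from $Q_k$ onto an induced subgraph of $G$ whose edges between $w_S$ and $w_{S \cup \{j\}}$ lie in $E_{i_j}$. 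Taking $Q$ to be the image of this map gives the statement, and the halfspace-membership condition forces the $\Theta$-classes of $Q$'s edges to be exactly $X$ (two vertices of $Q$ differing in coordinate $j$ are separated by $E_{i_j}$ and by no other class of $\mathcal E$ appearing, since any edge of $Q$ between $w_S$ and $w_{S'}$ with $|S \triangle S'| = 1$ crosses exactly one cutset).

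For the base case $|X| \le 1$ this is immediate ($w_\emptyset = v$, $w_{\{1\}} = u_1$). For the inductive step, suppose we have built the subcube $Q'$ spanned by $\{w_S : S \subseteq \{1, \dots, k-1\}\}$ using classes $E_{i_1}, \dots, E_{i_{k-1}}$, and we want to incorporate $E_{i_k}$. For each such $S$, I want to define $w_{S \cup \{k\}}$ as the neighbor of $w_S$ across $E_{i_k}$. The key local fact is Lemma~\ref{le:squares}: since $E_{i_k} \perp E_{i_j}$ for every $j \le k-1$, and since along the path in $Q'$ from $v$ to $w_S$ each consecutive pair is joined by an edge of some $E_{i_j}$ with $E_{i_j} \perp E_{i_k}$, one propagates the edge $v u_k \in E_{i_k}$ step by step: if $w_T$ has a neighbor $w_{T \cup \{k\}}$ across $E_{i_k}$ and $w_{T \cup \{j\}}$ is the next vertex on the path (edge in $E_{i_j}$), then applying Lemma~\ref{le:squares} to the incident edges $w_{T\cup\{j\}} w_T \in E_{i_j}$ and $w_T w_{T \cup \{k\}} \in E_{i_k}$ yields a square, whose fourth vertex we name $w_{T \cup \{j,k\}}$; it is a neighbor of $w_{T \cup \{j\}}$ across $E_{i_k}$ and a neighbor of $w_{T \cup \{k\}}$ across $E_{i_j}$. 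This defines $w_{S \cup \{k\}}$ for all $S$; one then checks it is well-defined (independent of the chosen path in $Q'$), that distances are as claimed, that the new vertices together with $Q'$ form an induced $Q_k$, and that the halfspace-membership invariant is preserved — the latter because crossing the edge of $E_{i_k}$ moves a vertex from $H_{i_k}'$ to $H_{i_k}''$ and, by convexity of the halfspaces of the other classes (Lemma~\ref{le:halfspaces}) together with Lemma~\ref{le:boundaries}, does not change membership in any $H_{i_j}', H_{i_j}''$ for $j \ne k$.

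The main obstacle is the well-definedness of $w_{S \cup \{k\}}$: a priori the square-propagation could produce different vertices depending on which shortest path inside $Q'$ from $v$ to $w_S$ we follow, so one must argue that all these agree. The clean way is to realize $w_{S \cup \{k\}}$ intrinsically rather than via a path: define $w_{S \cup \{k\}}$ to be the unique neighbor of $w_S$ in the halfspace $H_{i_k}''$, i.e. the endpoint in $\partial H_{i_k}''$ of the $E_{i_k}$-edge at $w_S$ — but for this to exist we need $w_S \in \partial H_{i_k}'$. So the real content is: every vertex $w_S$ of $Q'$ lies in the boundary $\partial H_{i_k}'$ (equivalently, is incident to an edge of $E_{i_k}$). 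This follows because $v \in \partial H_{i_k}'$, because $\partial H_{i_k}'$ is convex/gated (Lemma~\ref{le:boundaries}), and because $Q'$, being the image of a $Q_{k-1}$ with all edges in classes orthogonal to $E_{i_k}$, stays within $\partial H_{i_k}'$: indeed if an edge $w_T w_{T'}$ of $Q'$ lies in $E_{i_j}$ and $w_T \in \partial H_{i_k}'$, then by Lemma~\ref{le:boundaries} (the isomorphism $E_{i_k}$ between $\partial H_{i_k}'$ and $\partial H_{i_k}''$, combined with $E_{i_j} \perp E_{i_k}$) its other endpoint $w_{T'}$ is again in $\partial H_{i_k}'$; since $Q'$ is connected and contains $v$, all of $Q'$ lies in $\partial H_{i_k}'$. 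Then $w_{S\cup\{k\}} := $ the $E_{i_k}$-partner of $w_S$ in $\partial H_{i_k}''$ is unambiguous, and the square-propagation argument above merely certifies that these partners form the correct edges, i.e. that $w_S w_{S\cup\{k\}}$ and $w_{S\cup\{j\}} w_{S\cup\{j,k\}}$ are ``parallel'' — which is exactly the isomorphism statement of Lemma~\ref{le:boundaries} applied inside $Q'$. This reduces the whole step to the two boundary lemmas plus Lemma~\ref{le:squares}, and the distance count $d(v, w_S) = |S|$ is then read off from the fact that $Q$, being an isometric-looking product of edges crossing $|S|$ distinct cutsets, realizes $d(v,w_S) \ge |S|$ trivially and $\le |S|$ by the explicit monotone path.
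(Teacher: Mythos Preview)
Your proof is correct and follows essentially the same inductive approach as the paper: build the $(k{-}1)$-cube $Q'$, argue that all its vertices lie in $\partial H_{i_k}'$, and then use the boundary isomorphism across $E_{i_k}$ to produce the second copy. One small remark: in the step where you propagate $\partial H_{i_k}'$-membership along an edge $w_T w_{T'} \in E_{i_j}$, the relevant tool is Lemma~\ref{le:squares} (two incident edges of orthogonal classes span a square), not Lemma~\ref{le:boundaries}; the isomorphism of Lemma~\ref{le:boundaries} only applies once you already know both endpoints are in the boundary, so as stated that citation is circular---but you had already invoked Lemma~\ref{le:squares} for exactly this purpose earlier in the paragraph, so the argument stands.
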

\begin{proof}
We proceed inductively. If $\card{X} = 1$, $X = \set{E_i}$, then the edge $e_i$ adjacent to $v$ is an hypercube of dimension 1 and $v$ is one of its endpoints. If $\card{X} = 2$, $X = \set{E_i,E_j}$, Lemma~\ref{le:squares} shows that there is a square containing $v$ and the two edges of $X$ adjacent to $v$.

Assume that for a POF $X$ adjacent to $v$, $\card{X} \le k$, there is an hypercube $Q$ containing $v$ and such that its edges belong to the classes in $X$. Consider now a POF $X$, $\card{X} = k+1$, say w.l.o.g. $X = \set{E_1,\ldots,E_{k+1}}$, adjacent to $v$. We denote by $e_1,\ldots,e_{k+1}$ the edges belonging respectively to classes $E_1,\ldots,E_{k+1}$ which are adjacent to $v$. Indeed, we cannot have two edges of the same class $E_i$ adjacent to $v$ as any $E_i$ is a matching (Lemma~\ref{le:halfspaces}). Let $X' = \set{E_1,\ldots,E_k}$ and $Q'$ be the hypercube containing $v$ and admitting edges in $X'$. We say w.l.o.g. that vertex $v$ belongs to the boundary $\partial H_{k+1}'$ of class $E_{k+1}$.

We apply Lemma~\ref{le:squares} to all pairs of edges $(e_i,e_{k+1})$ with $1\le i\le k$. They have a common endpoint $v$. All these pairs belong in fact to a square of classes $\set{E_i,E_k}$. As a consequence, all vertices in $Q'$ which are at distance 1 from $v$ belong to $\partial H_{k+1}'$. We can pursue this reasoning for the vertices of $Q'$ at distance 2 from $v$: the edges of $Q'$ connecting vertices at distance 1 with vertices at distance 2 from $v$ belong to $X'$. Moreover, the vertices at distance 1 are endpoints of edges in $E_{k+1}$. According to Lemma~\ref{le:squares}, the vertices of $Q'$ at distance 2 from $v$ belong to $\partial H_{k+1}'$. Finally, after multiple applications of Lemma~\ref{le:squares}, we prove that all vertices in $Q'$ belong to $\partial H_{k+1}'$. 

As class $E_{k+1}$ forms an isomorphism between $\partial H_{k+1}'$ and $\partial H_{k+1}''$, the matching $E_{k+1}$ connects a hypercube $Q'$ with a isomorphic hypercube $Q''$ in $\partial H_{k+1}''$. The induced subgraph on $Q'$ and $Q''$ is the Cartesian product between $Q_k$ and $K_2$: it forms an hypercube $Q$ of dimension $k+1$. As $v \in Q'$, then $v \in Q$. Moreover, the $\Theta$-classes represented in $Q$ are exactly $X' \cup \set{E_{k+1}} = X$.
\end{proof}

In fact, there is a bijection between the POFs and the vertices of a median graph. Given a POF $X$, the vertex which is associated with it is the farthest-to-$v_0$ vertex of the closest-to-$v_0$ hypercube formed with $\Theta$-classes $X$.

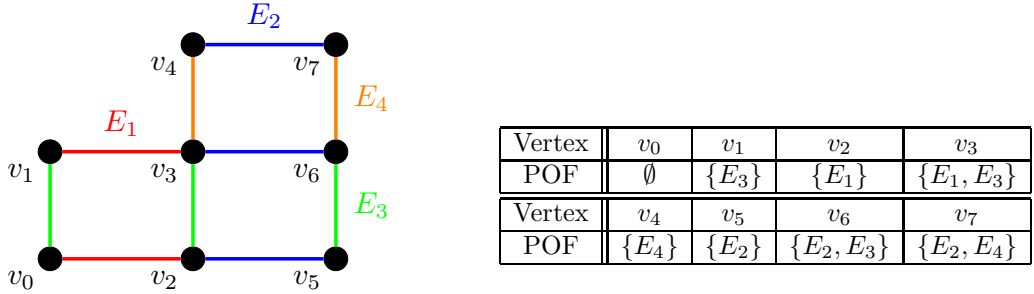
\begin{figure}[h]
\begin{subfigure}[b]{0.45\columnwidth}
\centering
\scalebox{0.95}{\begin{tikzpicture}

% NODES %%%%%%%%%%%%%%%%%%%%%%%%%%%%%%%%%%%%%%%%%%%%%%%%%%%%%%%%%%%%%%%%%%

\node[draw, circle, minimum height=0.2cm, minimum width=0.2cm, fill=black] (P11) at (1,1) {};
\node[draw, circle, minimum height=0.2cm, minimum width=0.2cm, fill=black] (P12) at (1,2.5) {};

\node[draw, circle, minimum height=0.2cm, minimum width=0.2cm, fill=black] (P21) at (3,1) {};
\node[draw, circle, minimum height=0.2cm, minimum width=0.2cm, fill=black] (P22) at (3,2.5) {};
\node[draw, circle, minimum height=0.2cm, minimum width=0.2cm, fill=black] (P23) at (3,4) {};

\node[draw, circle, minimum height=0.2cm, minimum width=0.2cm, fill=black] (P31) at (5,1) {};
\node[draw, circle, minimum height=0.2cm, minimum width=0.2cm, fill=black] (P32) at (5,2.5) {};
\node[draw, circle, minimum height=0.2cm, minimum width=0.2cm, fill=black] (P33) at (5,4) {};

% LINKS %%%%%%%%%%%%%%%%%%%%%%%%%%%%%%%%%%%%%%%%%%%%%%%%%%%%%%%%%%%%%%%%%%

\draw[line width = 1.4pt, color = green] (P11) -- (P12);
\draw[line width = 1.4pt, color = red] (P11) -- (P21);
\draw[line width = 1.4pt, color = red] (P12) -- (P22);
\draw[line width = 1.4pt, color = green] (P21) -- (P22);

\draw[line width = 1.4pt, color = blue] (P21) -- (P31);
\draw[line width = 1.4pt, color = blue] (P22) -- (P32);
\draw[line width = 1.4pt, color = green] (P31) -- (P32);

\draw[line width = 1.4pt, color = orange] (P22) -- (P23);
\draw[line width = 1.4pt, color = blue] (P23) -- (P33);
\draw[line width = 1.4pt, color = orange] (P32) -- (P33);

% ETIQUETTES

\node[scale=1.2, color = red] at (2.0,2.9) {$E_1$};
\node[scale=1.2, color = blue] at (4.0,4.4) {$E_2$};
\node[scale=1.2, color = green] at (5.5,1.75) {$E_3$};
\node[scale=1.2, color = orange] at (5.5,3.25) {$E_4$};

\node[scale = 1.2] at (0.6,0.7) {$v_0$};
\node[scale = 1.2] at (0.6,2.2) {$v_1$};
\node[scale = 1.2] at (2.6,0.7) {$v_2$};
\node[scale = 1.2] at (2.6,2.2) {$v_3$};
\node[scale = 1.2] at (2.6,3.7) {$v_4$};
\node[scale = 1.2] at (4.6,0.7) {$v_5$};
\node[scale = 1.2] at (4.6,2.2) {$v_6$};
\node[scale = 1.2] at (4.6,3.7) {$v_7$};

\end{tikzpicture}}
\end{subfigure}
\begin{subfigure}[b]{0.51\columnwidth}
\centering
$\begin{array}{|c||c|c|c|c|}
\hline
\mbox{Vertex} & v_0 & v_1 & v_2 & v_3\\
\hline
 \mbox{POF} & \emptyset & \set{E_3} & \set{E_1} & \set{E_1,E_3}\\
\hline
\hline
\mbox{Vertex} & v_4 & v_5 & v_6 & v_7\\
\hline 
\mbox{POF} & \set{E_4} & \set{E_2} & \set{E_2,E_3} & \set{E_2,E_4}\\
% & \\
%\hline
%v_2 & \\
%\hline
%v_3 & \\
%\hline
%v_4 & \\
%\hline
%v_5 & \\
%\hline
%v_6 & \\
%\hline
%v_7 & \\
\hline
\end{array}$
~

~

~
\end{subfigure}
\caption{Illustration of the bijection between $V$ and the set of POFs.}
\label{fig:vertices_pof}
\end{figure}

\begin{lemma}[POFs and hypercubes~\cite{BaChDrKo06,BaQuSaMa02,Ko09}]
Consider an arbitrary canonical basepoint $v_0 \in V$ and the $v_0$-orientation for the median graph $G$. Given a vertex $v \in V$, let $N^-(v)$ be the set of edges going into $v$ according to the $v_0$-orientation. Let $\mathcal{E}^-(v)$ be the classes of the edges in $N^-(v)$. The following propositions are true:
\begin{itemize}
\item For any vertex $v\in V$, $\mathcal{E}^-(v)$ is a POF. Moreover, vertex $v$ and the edges of $N^-(v)$ belong to an induced hypercube formed by the classes $\mathcal{E}^-(v)$.
\item For any POF $X$, there is a unique vertex $v_X$ such that $\mathcal{E}^-(v_X) = X$. Vertex $v_X$ is the closest-to-$v_0$ vertex $v$ such that $X \subseteq \mathcal{E}^-(v)$.
\item The number of POFs in $G$ is equal to the number $n$ of vertices.
\end{itemize}
\label{le:pof_hypercube}
\end{lemma}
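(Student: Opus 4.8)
The plan is to prove the three bullet points more or less in order, leaning on Lemma~\ref{le:pof_adjacent} for the first one and on a minimality/exchange argument for the second. For the first bullet, fix $v\in V$ and let $\mathcal{E}^-(v)=\{E_{i_1},\ldots,E_{i_k}\}$ be the classes of the incoming edges at $v$. I would first argue that these classes are pairwise orthogonal: take two incoming edges $xu=e_{i_a}$ and $yu=e_{i_b}$ at $u:=v$. Since $G$ is a partial cube (median graphs are partial cubes) and $x,y$ are both at distance $d(v_0,v)-1$ from $v_0$ while $v$ is at distance $d(v_0,v)$, the three vertices $x,v,y$ have a median $m=m(x,v,y)$; because $x,y\in I(v_0,v)$ and $v_0$-orientation makes $d(v_0,m)\le d(v_0,x)=d(v_0,y)=d(v_0,v)-1$ while $m\in I(x,v)$ forces $m\neq v$, we get that $m$ is a common neighbour of $x$ and $y$ at distance $d(v_0,v)-2$, so $xv\text{-}yv$ sit in a square $x\,v\,y\,m$. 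Identifying the $\Theta$-classes of $xm$ and $vy$ (they coincide, and likewise for $ym$ and $vx$) shows $E_{i_a}\perp E_{i_b}$. Hence $\mathcal{E}^-(v)$ is a POF, and applying Lemma~\ref{le:pof_adjacent} to this POF and the vertex $v$ produces the induced hypercube $Q$ containing $v$ and all of $N^-(v)$ whose $\Theta$-classes are exactly $\mathcal{E}^-(v)$.

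For the second bullet, let $X$ be an arbitrary POF. Existence: among all vertices $v$ with $X\subseteq\mathcal{E}^-(v)$ — this set is nonempty, e.g. the far corner of the hypercube built from $X$ at a suitable basepoint-facing location — pick one, call it $w$, minimizing $d(v_0,w)$; I claim $\mathcal{E}^-(w)=X$. If not, there is an incoming edge $uw\in E_j$ with $E_j\notin X$. Since $E_j$ is orthogonal to every class in $\mathcal{E}^-(w)\supseteq X$? — not necessarily, so the real work is here. The key structural fact is: the incoming edges at $w$ span a hypercube $Q_w$ (first bullet), and if $E_j\in\mathcal{E}^-(w)\setminus X$, one uses the square guaranteed by Lemma~\ref{le:squares} between an edge $e_j\in E_j$ at $w$ and an edge $e_i\in E_i$ at $w$ for $E_i\in X$ to "slide" $w$ across $E_j$ to the opposite corner $w'$ of that square (the endpoint of $e_j$ other than $w$, adjusted into $\partial H_j$), obtaining a vertex $w'$ strictly closer to $v_0$ that still has all classes of $X$ among its incoming edges — contradicting minimality. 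Care is needed to check that crossing $E_j$ genuinely decreases $d(v_0,\cdot)$ and that the classes of $X$ survive; this is where orthogonality of $X$ (so the corresponding squares are "independent") is used. Uniqueness: if $\mathcal{E}^-(v)=\mathcal{E}^-(v')=X$ then by the first bullet both $v,v'$ are the top corners (w.r.t. $v_0$-orientation) of induced hypercubes with class-set $X$; using convexity of halfspaces (Lemma~\ref{le:halfspaces}) and the fact that two distinct such hypercubes would have to be separated or linked by some class $E_j\notin X$ that is then forced into $\mathcal{E}^-$ of the farther one, one derives $v=v'$. I would phrase the uniqueness cleanly by noting $v_X$ is characterized as the unique vertex in $\bigcap_{E_i\in X} \partial H_i^{\,\text{in}}$ (the boundary on the incoming side) that is a sink for no class outside $X$ among its in-edges — i.e. the closest-to-$v_0$ vertex with $X\subseteq\mathcal{E}^-(v)$, which pins it down.

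The third bullet is then immediate: the first two establish a map $v\mapsto\mathcal{E}^-(v)$ from $V$ to the set of POFs that is surjective (existence) and injective (uniqueness), hence a bijection, so the number of POFs equals $n=\card{V}$.

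The main obstacle I expect is the exchange/minimality argument in the second bullet — specifically, showing that if $\mathcal{E}^-(w)\setminus X\neq\emptyset$ one can cross one such "bad" class $E_j$ and move strictly toward $v_0$ while keeping every class of $X$ incoming. One must be careful that crossing $E_j$ does not accidentally destroy the incoming status of some $E_i\in X$; this is exactly where $E_i\perp E_j$ (part of $E_j\in\mathcal{E}^-(w)$ being orthogonal to $X$... which need not hold) must be replaced by a more careful argument using that $Q_w$ is a hypercube and $w$ is its top corner, so every in-class of $w$ is orthogonal to every other in-class — in particular $E_j\perp E_i$ for all $E_i\in X$. With that observation in hand the square from Lemma~\ref{le:squares} does the job, but getting the distances-to-$v_0$ bookkeeping right across the slide is the delicate point.
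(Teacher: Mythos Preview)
The paper does not give a proof of this lemma: it is stated with citations to~\cite{BaChDrKo06,BaQuSaMa02,Ko09} and used as a black box, so there is no in-paper argument to compare against. I will therefore comment on the correctness of your outline.

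Your first bullet is essentially right, but there is a slip: the median you want is $m(x,y,v_0)$, not $m(x,v,y)$. Since $x$ and $y$ are both adjacent to $v$, the intervals $I(x,v)$ and $I(y,v)$ are just $\{x,v\}$ and $\{y,v\}$, so $m(x,v,y)=v$ and the argument collapses. With $m=m(x,y,v_0)$ instead, the equidistance $d(v_0,x)=d(v_0,y)$ and $d(x,y)=2$ force $m$ to be a common neighbour of $x,y$ at distance $d(v_0,v)-2$, and the square $xvy m$ gives $E_{i_a}\perp E_{i_b}$ exactly as you describe. After that, Lemma~\ref{le:pof_adjacent} finishes the bullet.

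The genuine gap is at the start of your second bullet. You assert that $\{v:\ X\subseteq\mathcal{E}^-(v)\}$ is nonempty by pointing to ``the far corner of the hypercube built from $X$'', but the existence of \emph{any} hypercube with class-set $X$ is precisely what is at stake here: Lemma~\ref{le:pof_adjacent} only applies once you already have a vertex adjacent to an edge of every class in $X$, and pairwise orthogonality alone does not hand you such a vertex for free. One standard way to close this is to use the Helly property of convex sets in median graphs: the boundaries $\partial H_i'$ for $E_i\in X$ are convex and pairwise intersect (the square witnessing $E_i\perp E_j$ has a vertex in $\partial H_i'\cap\partial H_j'$), hence their common intersection is nonempty, and any vertex there is adjacent to all classes of $X$. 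Once you have this, your minimality/sliding argument is fine --- and your late observation that $E_j\in\mathcal{E}^-(w)$ is automatically orthogonal to every $E_i\in X$ by the first bullet is exactly what makes the slide work; the distance bookkeeping $d(v_0,u_i')=d(v_0,w)-2$ follows because $v_0$ and $u_i'$ lie on the same side of $E_j$. Your uniqueness sketch is thin but salvageable along the lines you indicate.
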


This result highlights a bijection between the vertices $V$ and the POFs of $G$. An example is given in Figure~\ref{fig:vertices_pof} with a graph of dimension $d=2$. Edges are colored in function of their $\Theta$-class. Vertex $v_0$ is the canonical basepoint. For example, $v_1v_3 \in E_1$. We associate with any POF $X$ of $G$ the vertex $v$ satisfying $\mathcal{E}^-(v) = X$ with the $v_0$-orientation. Obviously, the empty POF is associated with $v_0$ which has no incoming edges.

This bijection can be used to enumerate the POFs of a median graph in linear time~\cite{BaQuSaMa02,Ko09}. Given a basepoint $v_0$, we say that the \textit{basis} (resp. \textit{anti-basis}) of an induced hypercube $Q$ is the single vertex $v$ such that all edges of the hypercube adjacent to $v$ are outgoing (resp. ingoing) from $v$. Said differently, the basis of $Q$ is its closest-to-$v_0$ vertex and its anti-basis is its farthest-to-$v_0$ vertex. What Lemma~\ref{le:pof_hypercube} states is also that we can associate with any POF $X$ an hypercube $Q_X$ which contains exactly the classes $X$ and admits $v_X$ as its anti-basis. Moreover, the hypercube $Q_X$ is the closest-to-$v_0$ hypercube formed with the classes in $X$. Figure~\ref{fig:ingoing_edges} shows a vertex $v$ with its ingoing and outgoing edges with the $v_0$-orientation. The dashed edges represent the hypercube with anti-basis $v$ and POF $\mathcal{E}^-(v)$.

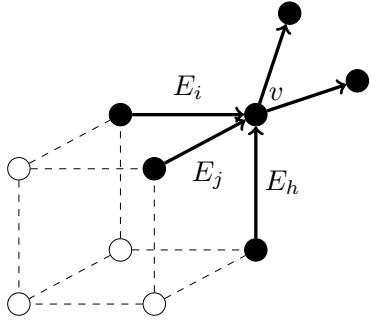
\begin{figure}[h]
\centering
\scalebox{0.9}{\begin{tikzpicture}

% NODES %%%%%%%%%%%%%%%%%%%%%%%%%%%%%%%%%%%%%%%%%%%%%%%%%%%%%%%%%%%%%%%%%%

\node[draw, circle, minimum height=0.2cm, minimum width=0.2cm, fill=black] (P0) at (6,6) {};

\node[draw, circle, minimum height=0.2cm, minimum width=0.2cm, fill=black] (P1) at (4,6) {};
\node[draw, circle, minimum height=0.2cm, minimum width=0.2cm, fill=black] (P2) at (4.5,5.2) {};
\node[draw, circle, minimum height=0.2cm, minimum width=0.2cm, fill=black] (P3) at (6,4) {};

\node[draw, circle, minimum height=0.2cm, minimum width=0.2cm, fill=black] (P1') at (6.5,7.5) {};
\node[draw, circle, minimum height=0.2cm, minimum width=0.2cm, fill=black] (P2') at (7.5,6.5) {};

\node[draw, circle, minimum height=0.2cm, minimum width=0.2cm] (P4) at (2.5,5.2) {};
\node[draw, circle, minimum height=0.2cm, minimum width=0.2cm] (P5) at (4.5,3.2) {};
\node[draw, circle, minimum height=0.2cm, minimum width=0.2cm] (P6) at (4,4) {};
\node[draw, circle, minimum height=0.2cm, minimum width=0.2cm] (P7) at (2.5,3.2) {};

% LINKS %%%%%%%%%%%%%%%%%%%%%%%%%%%%%%%%%%%%%%%%%%%%%%%%%%%%%%%%%%%%%%%%%%

\draw[->, line width = 1.4pt] (P1) -- (P0);
\draw[->, line width = 1.4pt] (P2) -- (P0);
\draw[->, line width = 1.4pt] (P3) -- (P0);

\draw[->, line width = 1.4pt] (P0) -- (P1');
\draw[->, line width = 1.4pt] (P0) -- (P2');

\draw[dashed] (P4) -- (P1);
\draw[dashed] (P4) -- (P2);
\draw[dashed] (P5) -- (P2);
\draw[dashed] (P5) -- (P3);
\draw[dashed] (P6) -- (P1);
\draw[dashed] (P6) -- (P3);
\draw[dashed] (P4) -- (P7);
\draw[dashed] (P5) -- (P7);
\draw[dashed] (P6) -- (P7);

% ETIQUETTES

\node[scale=1.2] at (5.0,6.4) {$E_i$};
\node[scale=1.2] at (5.3,5.1) {$E_j$};
\node[scale=1.2] at (6.4,5.0) {$E_h$};

\node[scale = 1.2] at (6.3,6.3) {$v$};

\end{tikzpicture}}
\caption{Edges ingoing to a vertex $v$ belong to an hypercube made up of their classes $\mathcal{E}^-(v) = \set{E_i,E_j,E_h}$ and with anti-basis $v$.}
\label{fig:ingoing_edges}
\end{figure} 

This observation implies that the number of POFs is less than the number of hypercubes in $G$. We remind a formula establishing a relationship between the number of POFs and the number of hypercubes in the literature. Let $\alpha(G)$ (resp. $\beta(G)$) be the number of hypercubes (resp. POFs) in $G$. Let $\beta_i(G)$ be the number of POFs of cardinality $i \le d$ in $G$. According to~\cite{BaQuSaMa02,Ko09}, we have:
\begin{equation}
\alpha(G) = \sum_{i=0}^d 2^i\beta_i(G)
\label{eq:number_hypercubes}
\end{equation}

Equation~\eqref{eq:number_hypercubes} produces a natural upper bound for the number of hypercubes.

\begin{lemma}[Number of hypercubes]
$\alpha(G)\le 2^dn$.
\label{le:number_hypercubes}
\end{lemma}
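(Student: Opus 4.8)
The plan is to use Equation~\eqref{eq:number_hypercubes}, namely $\alpha(G) = \sum_{i=0}^d 2^i \beta_i(G)$, together with the bijection between POFs and vertices furnished by Lemma~\ref{le:pof_hypercube}. The key observation is that $\sum_{i=0}^d \beta_i(G) = \beta(G)$ is exactly the total number of POFs of $G$, and the third bullet of Lemma~\ref{le:pof_hypercube} tells us that $\beta(G) = n$.

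First I would bound each coefficient $2^i$ appearing in the sum by the largest one, $2^d$, using $0 \le i \le d$. This gives
\[
\alpha(G) = \sum_{i=0}^d 2^i \beta_i(G) \le \sum_{i=0}^d 2^d \beta_i(G) = 2^d \sum_{i=0}^d \beta_i(G).
\]
Then I would identify $\sum_{i=0}^d \beta_i(G)$ with the total number $\beta(G)$ of POFs in $G$: every POF has some cardinality $i$ with $0 \le i \le d$ (since $d$ is the dimension, no POF can have more than $d$ classes, as a POF of cardinality $k$ adjacent to a common vertex yields an induced $Q_k$ by Lemma~\ref{le:pof_adjacent}, forcing $k \le d$), so the $\beta_i$ partition the set of all POFs. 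Finally, invoking the bijection between POFs and vertices (Lemma~\ref{le:pof_hypercube}, third bullet), $\sum_{i=0}^d \beta_i(G) = \beta(G) = n$, which yields $\alpha(G) \le 2^d n$.

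This argument is essentially a one-line estimate once Equation~\eqref{eq:number_hypercubes} and Lemma~\ref{le:pof_hypercube} are in hand, so there is no real obstacle; the only point deserving a word of care is the implicit claim that $\beta_i(G) = 0$ for $i > d$, which is exactly why the sum in Equation~\eqref{eq:number_hypercubes} stops at $d$ and why $\sum_{i=0}^d \beta_i(G)$ counts all POFs without omission. One could alternatively avoid even mentioning the $\beta_i$ and argue directly: each hypercube $Q$ of $G$ has a unique anti-basis $v_X$ where $X$ is the POF of its $\Theta$-classes, and $Q$ is determined by $v_X$ together with the subset $X \subseteq \mathcal{E}^-(v_X)$; since $|\mathcal{E}^-(v_X)| \le d$, each vertex is the anti-basis of at most $2^d$ hypercubes, and summing over the $n$ vertices gives $\alpha(G) \le 2^d n$. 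I would present the short computation via Equation~\eqref{eq:number_hypercubes} as the main proof, since that formula is already established.
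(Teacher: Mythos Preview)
Your proposal is correct and matches the paper's own proof: bound $2^i\le 2^d$ in Equation~\eqref{eq:number_hypercubes} and use $\sum_{i=0}^d\beta_i(G)=\beta(G)=n$ from Lemma~\ref{le:pof_hypercube}. The alternative anti-basis counting argument you sketch is also valid and is essentially the combinatorial content behind Equation~\eqref{eq:number_hypercubes} itself.
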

\begin{proof}
This is the consequence of $\beta(G) = \sum_{i=0}^d \beta_i(G) = n$ (Lemma~\ref{le:pof_hypercube}).
\end{proof}

Each hypercube in the median graph $G$ can be defined with only its anti-basis $v$ and the edges $\widehat{N}$ of the hypercube that are adjacent and going into $v$ according the $v_0$-orientation. These edges are a subset of $N^-(v)$: $\widehat{N} \subseteq N^-(v)$. Conversely, given a vertex $v$, each subset of $N^-(v)$ produces an hypercube which admits $v$ as an anti-basis (this hypercube is a sub-hypercube of the one obtained with $v$ and $N^-(v)$, Lemma~\ref{le:pof_hypercube}). Another possible bijection is to consider an hypercube as a pair composed of its anti-basis $v$ and the $\Theta$-classes $\widehat{\mathcal{E}}$ of the edges in $\widehat{N}$.

From the previous results and Lemma~\ref{le:number_hypercubes}, we deduce an algorithm enumerating the hypercubes in $G$ in time $O(d2^dn)$.

\begin{lemma}[Enumeration of hypercubes]
We can enumerate all triplets $(v,u,\widehat{\mathcal{E}})$, where $v$ is the anti-basis of an hypercube $Q$, $u$ its basis, and $\widehat{\mathcal{E}}$ the $\Theta$-classes of the edges of $Q$ in time $O(d2^dn)$. Moreover, the list obtained fulfils the following partial order: if $d(v_0,v) < d(v_0,v')$, then any triplet $(v,u,\widehat{\mathcal{E}})$ containing $v$ appears before any triplet $(v',u',\widehat{\mathcal{E}}')$ containing $v'$.
\label{le:enum_hypercubes}
\end{lemma}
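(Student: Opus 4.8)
\textbf{Preprocessing.} The plan is to turn the structural bijections of Lemma~\ref{le:pof_hypercube} into an explicit enumeration, using a single BFS from $v_0$ as the only distance computation. First I would compute the $\Theta$-classes $E_1,\dots,E_q$ in time $O(\card{E})=O(dn)$ and run a BFS from the basepoint $v_0$, also in $O(\card{E})$, to obtain the $v_0$-orientation and the layers $d(v_0,\cdot)$. This gives, for every vertex $v$, the set $N^-(v)$ of incoming edges tagged with their classes. I would also build an auxiliary index returning, for a vertex $w$ and a class $E_i$, the unique edge of $E_i$ incident to $w$ (if any) in $O(1)$ time: store at each vertex its incident edges bucketed by class; the total size is $\sum_w \deg(w)=O(\card{E})$, and uniqueness holds since each $E_i$ is a matching (Lemma~\ref{le:halfspaces}). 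A key observation is that $\card{N^-(v)}\le d$: by Lemma~\ref{le:pof_hypercube}, $\mathcal{E}^-(v)$ is a POF all of whose classes meet $v$, so Lemma~\ref{le:pof_adjacent} exhibits an induced hypercube of dimension $\card{\mathcal{E}^-(v)}=\card{N^-(v)}$, forcing $\card{N^-(v)}\le d$.

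\textbf{Enumeration.} I would then process the vertices in nondecreasing order of $d(v_0,\cdot)$. For a fixed $v$, by Lemma~\ref{le:pof_hypercube} and the bijection between hypercubes and pairs (anti-basis, subset of its incoming edges) recalled above, the hypercubes with anti-basis $v$ are exactly the $2^{\card{N^-(v)}}\le 2^d$ subsets $S\subseteq N^-(v)$. I would enumerate them by a DFS over the subset tree rooted at $\emptyset$ in which a child of $S$ adds one edge whose index exceeds all indices in $S$ (so every subset is produced once), while carrying along the basis $u_S$ of the hypercube with anti-basis $v$ and classes $\widehat{\mathcal{E}}(S)$: set $u_\emptyset=v$, and on the step $S\to S\cup\{e\}$ let $u_{S\cup\{e\}}$ be the endpoint other than $u_S$ of the edge of class $E(e)$ incident to $u_S$ (one $O(1)$ index lookup). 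This is well defined because $u_S$ belongs to the maximal induced hypercube $Q_v$ with anti-basis $v$ and class set $\mathcal{E}^-(v)$ (Lemma~\ref{le:pof_hypercube}), in which every vertex carries exactly one edge of each class of $\mathcal{E}^-(v)\supseteq\widehat{\mathcal{E}}(S)\cup\{E(e)\}$ (Lemma~\ref{le:pof_adjacent}); and since the $v_0$-orientation is consistent on each $\Theta$-class (Lemma~\ref{le:halfspaces} gives $H_i'=W(u,v)$ for an edge $uv\in E_i$), following from $v$ the relevant edges of the classes of $\widehat{\mathcal{E}}(S)$, in any order, lands precisely on the corner of $Q_v$ minimizing $d(v_0,\cdot)$ inside that subcube, i.e.\ on its basis. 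For each $S$ I output $(v,u_S,\widehat{\mathcal{E}}(S))$.

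\textbf{Correctness and running time.} Correctness follows from Lemma~\ref{le:pof_hypercube}: each hypercube has a unique anti-basis, hence is emitted exactly once, with the right basis and class set by the argument above. The preprocessing costs $O(dn)$; the main loop makes $\sum_{v\in V}2^{\card{N^-(v)}}=\alpha(G)$ iterations by Equation~\eqref{eq:number_hypercubes}, which is $O(2^d n)$ by Lemma~\ref{le:number_hypercubes}, and each iteration does $O(1)$ work to update the basis plus $O(d)$ work to list $\widehat{\mathcal{E}}(S)$ and to spawn the DFS children, giving $O(d\,2^d n)$ in total. The partial order is automatic: every emitted triplet contains exactly one vertex, its anti-basis, and all triplets sharing the anti-basis $v$ are emitted consecutively, in the scan order by nondecreasing $d(v_0,\cdot)$.

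The part I expect to need the most care is the basis computation: justifying that the one-edge navigation step always stays inside $Q_v$ (so the required incident edge exists and is unique) and can be executed in $O(1)$, which is what keeps the amortized cost at $O(d)$ per hypercube; the bound $\card{N^-(v)}\le d$ is the other indispensable ingredient, since without it $2^{\card{N^-(v)}}$ is not under control.
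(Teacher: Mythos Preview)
Your proposal is correct and follows essentially the same approach as the paper: run a BFS from $v_0$ to get the orientation and the layer order, then for each vertex $v$ enumerate all subsets of $N^-(v)$ (using $\card{N^-(v)}\le d$ and the bijection with hypercubes having anti-basis $v$), and for each subset recover the basis by walking along one edge per selected class. The paper recomputes the basis from scratch for each subset in $O(d)$ steps, whereas you maintain it incrementally along a DFS over the subset lattice with one $O(1)$ edge step per subset; this is a pleasant implementation refinement but does not change the route or the final $O(d2^dn)$ bound, and your extra care about the auxiliary class-index and about why the navigation stays inside $Q_v$ fills in details the paper leaves implicit.
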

\begin{proof}
We run a BFS from a canonical $v_0$ to obtain the $v_0$-orientation. For any vertex $v$ visited, we list all subsets of $N^-(v)$. Each of these subsets $\widehat{N}$ correspond to an hypercube with anti-basis $v$. For each $\widehat{N}$, we determine the $\Theta$-classes $\widehat{\mathcal{E}}$ of its edges. For any pair $(v,\widehat{\mathcal{E}})$, we compute the basis of this hypercube: we start a walk from $v$ and we traverse once an edge for each class in $\widehat{\mathcal{E}}$ in any order. The vertex we visit at the end of the walk is the basis $u$. At the end of the execution, we have a list of triplets $(v,u,\widehat{\mathcal{E}})$, where $\widehat{\mathcal{E}} \subseteq \mathcal{E}^-(v)$. The execution time is equal the number of hypercubes multiplied by the size of the triplets representing them, which is upper-bounded by $d+1$. The partial order evoked in the statement of this lemma is due to the BFS which visits $v$ before $v'$ if $d(v_0,v) < d(v_0,v')$.
\end{proof}

The enumeration of hypercubes is executed in linear time for median graphs with $d=O(1)$. It will be used in the remainder to describe our algorithm computing the diameter.

\section{Failure of BFS-based heuristics on median graphs} \label{sec:sweep}

We prove in this section that \tsw\ and \fsw , two well-known linear time heuristics for the diameter in general graphs, do not determine the exact value of diameter in median graphs. We begin with a short introduction of these algorithms and then present two median graphs on which they are unsuccessful.

The heuristic \tsw\ consists in two successive BFS returning a distance $d(a_1,b_1)$ between two vertices $a_1$ and $b_1$ which is supposed to estimate the diameter of the graph. First, it starts from a random vertex of $G$ denoted by $r_1$. It computes a first BFS starting from $r_1$ to determine the farthest-to-$r_1$ vertex $a_1$. Formally, vertex $a_1$ verifies $d(r_1,a_1) = \max_{v \in V} d(r_1,v)$. Then, it computes a second BFS, starting from $a_1$ to determine the farthest-to-$a_1$ vertex $b_1$. Vertex $b_1$ verifies $d(a_1,b_1) = \max_{v \in V} d(a_1,v)$. Value $d(a_1,b_1)$ is returned.

The heuristic \fsw\ consists in four successive BFS. We start with a first \tsw\ which enables us to obtain vertices $a_1$ and $b_1$. Then, we determine vertex $r_2$ which is the middle of a shortest $(a_1,b_1)$-path. We compute a second \tsw\ starting from vertex $r_2$ which gives vertices $a_2$ and $b_2$. Value $d(a_2,b_2)$ is returned.

\begin{figure}[b]
\begin{subfigure}[b]{0.49\columnwidth}
\centering
\scalebox{0.9}{\begin{tikzpicture}

% NODES %%%%%%%%%%%%%%%%%%%%%%%%%%%%%%%%%%%%%%%%%%%%%%%%%%%%%%%%%%%%%%%%%%

\node[draw, circle, minimum height=0.2cm, minimum width=0.2cm, fill=black] (P11) at (1,1) {};
\node[draw, circle, minimum height=0.2cm, minimum width=0.2cm, fill=black] (P12) at (1,2.5) {};

\node[draw, circle, minimum height=0.2cm, minimum width=0.2cm, fill=black] (P21) at (3,1) {};
\node[draw, circle, minimum height=0.2cm, minimum width=0.2cm, fill=black] (P22) at (3,2.5) {};
\node[draw, circle, minimum height=0.2cm, minimum width=0.2cm, fill=black] (P23) at (4.5,3.5) {};

% LINKS %%%%%%%%%%%%%%%%%%%%%%%%%%%%%%%%%%%%%%%%%%%%%%%%%%%%%%%%%%%%%%%%%%

\draw[line width = 1.4pt] (P11) -- (P12);
\draw[line width = 1.4pt] (P11) -- (P21);
\draw[line width = 1.4pt] (P12) -- (P22);
\draw[line width = 1.4pt] (P21) -- (P22);

\draw[line width = 1.4pt] (P22) -- (P23);

% ETIQUETTES %%%%%%%%%%%%%%%%%%%%%%%%%%%%%%%%%%%%%%%%%%%%%%%%%

\node[scale=1.1] at (3.4,1.3) {$r_1$};
\node[scale=1.1] at (0.6,2.8) {$a_1$};
\node[scale=1.1] at (4.7,3.1) {$b_1$};
\node[scale=1.1] at (0.6,1.3) {$w$};

\end{tikzpicture}}
\caption{Graph $G^*$}
\label{subfig:2sweep}
\end{subfigure}
\begin{subfigure}[b]{0.49\columnwidth}
\centering
\scalebox{0.7}{\begin{tikzpicture}

% NODES %%%%%%%%%%%%%%%%%%%%%%%%%%%%%%%%%%%%%%%%%%%%%%%%%%%%%%%%%%%%%%%%%%

\node[draw, circle, minimum height=0.2cm, minimum width=0.2cm, fill=black] (P11) at (1,1) {};
\node[draw, circle, minimum height=0.2cm, minimum width=0.2cm, fill=black] (P12) at (1,2.5) {};
\node[draw, circle, minimum height=0.2cm, minimum width=0.2cm, fill=black] (P13) at (1,4) {};

\node[draw, circle, minimum height=0.2cm, minimum width=0.2cm, fill=black] (P21) at (3,1) {};
\node[draw, circle, minimum height=0.2cm, minimum width=0.2cm, fill=black] (P22) at (3,2.5) {};
\node[draw, circle, minimum height=0.2cm, minimum width=0.2cm, fill=black] (P23) at (3,4) {};

\node[draw, circle, minimum height=0.2cm, minimum width=0.2cm, fill=black] (P31) at (5,1) {};
\node[draw, circle, minimum height=0.2cm, minimum width=0.2cm, fill=black] (P32) at (5,2.5) {};
\node[draw, circle, minimum height=0.2cm, minimum width=0.2cm, fill=black] (P33) at (5,4) {};

\node[draw, circle, minimum height=0.2cm, minimum width=0.2cm, fill=black] (P4) at (0,4.75) {};
\node[draw, circle, minimum height=0.2cm, minimum width=0.2cm, fill=black] (P5) at (0,0.25) {};
\node[draw, circle, minimum height=0.2cm, minimum width=0.2cm, fill=black] (P6) at (6,4.75) {};

% LINKS %%%%%%%%%%%%%%%%%%%%%%%%%%%%%%%%%%%%%%%%%%%%%%%%%%%%%%%%%%%%%%%%%%

\draw[line width = 1.4pt] (P11) -- (P12);
\draw[line width = 1.4pt] (P12) -- (P13);
\draw[line width = 1.4pt] (P11) -- (P21);
\draw[line width = 1.4pt] (P12) -- (P22);
\draw[line width = 1.4pt] (P13) -- (P23);
\draw[line width = 1.4pt] (P21) -- (P22);
\draw[line width = 1.4pt] (P22) -- (P23);
\draw[line width = 1.4pt] (P21) -- (P31);
\draw[line width = 1.4pt] (P22) -- (P32);
\draw[line width = 1.4pt] (P23) -- (P33);
\draw[line width = 1.4pt] (P31) -- (P32);
\draw[line width = 1.4pt] (P32) -- (P33);

\draw[line width = 1.4pt] (P11) -- (P5);
\draw[line width = 1.4pt] (P13) -- (P4);
\draw[line width = 1.4pt] (P33) -- (P6);

% ETIQUETTES %%%%%%%%%%%%%%%%%%%%%%%%%%%%%%%%%%%%%%%%%%%%%%%%%

\node[scale=1.0] at (3.6,2.8) {$r_1,r_2$};
\node[scale=1.0] at (-0.6,4.5) {$a_1,a_2$};
\node[scale=1.0] at (5.6,0.7) {$b_1,b_2$};

\end{tikzpicture}}
\caption{Graph $H^*$}
\label{subfig:4sweep}
\end{subfigure}
\caption{Different steps of \tsw\ and \fsw\ on graphs $G^*$ and $H^*$.}
\label{fig:sweep}
\end{figure}
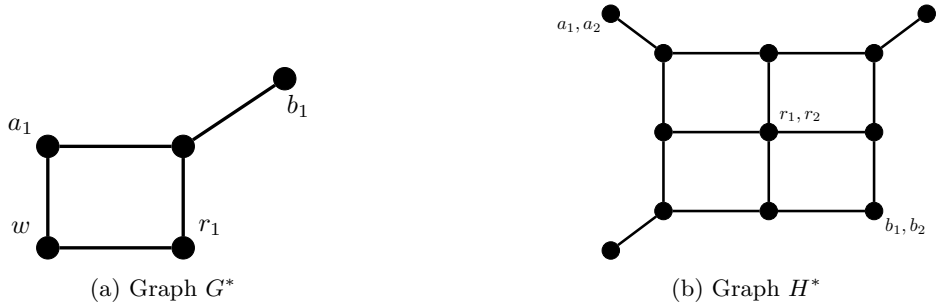

Unfortunately, both of these fast and simple algorithms fail to determine the diameter on median graphs. 

\begin{theorem}
There are two median graphs $G^*$ and $H^*$ of dimension 2 such that {\em 2}-\textsc{sweep} does not return the diameter on $G^*$ and {\em 4}-\textsc{sweep} does not return it on $H^*$.
\label{th:sweep}
\end{theorem}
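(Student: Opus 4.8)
The plan is to exhibit the two graphs $G^*$ and $H^*$ explicitly (they are drawn in Figure~\ref{fig:sweep}), verify that each is a median graph of dimension~2, compute their diameters by hand, and then trace through the heuristics to show that the returned value is strictly smaller than the diameter. Since both graphs are tiny, this is entirely a matter of checking finitely many cases; the work is in choosing the examples so that the \emph{first} BFS of \tsw\ (and of the second \tsw\ inside \fsw) lands on a ``bad'' vertex, and showing this happens \emph{regardless} of the random starting vertex $r_1$ — otherwise the counterexample would only show failure for an unlucky choice.

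For $G^*$: I would first check it is median by verifying it is a partial cube with unique medians — here it is visibly the square $\{wa_1P_{21}r_1\}$... more precisely the $2\times 2$ grid with one pendant edge $P_{22}b_1$ attached, and one routinely checks every triple of vertices has a unique median and that $d=2$ (the unique square). The diameter is $d(w,b_1)=3$ (the path through $a_1$ or through $r_1$, then up to $b_1$). The key claim is: for \emph{every} choice of starting vertex $r_1$, the first BFS finds a farthest vertex $a_1$, and then the second BFS from $a_1$ returns a value $<3$. The way to make this work is that the eccentricity-3 pair is the \emph{unique} diametral pair $\{w,b_1\}$, and $b_1$ is a leaf whose only neighbor has eccentricity~$2$; I need that no matter where $r_1$ sits, the farthest vertex $a_1$ from $r_1$ is never one of $w,b_1$ — or if it can be $w$ or $b_1$, that the tie-breaking in BFS may legitimately pick a different farthest vertex, so ``\tsw\ does not return the diameter'' means there is an execution (a valid BFS tie-break and valid choice of $r_1$) that outputs $2$. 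I would state the example so that from the ``center'' vertex $r_1$ (eccentricity~2), both BFS stay within distance~2, giving the clean failure, and then note this is a legal run of \tsw.

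For $H^*$: the construction augments the $2\times 3$ grid with three pendant edges at three corners, arranged so that the first \tsw\ produces $a_1,b_1$ with $d(a_1,b_1)$ equal to some value $D_1$, the midpoint $r_2$ of the chosen $(a_1,b_1)$-shortest path is a specific vertex, and the second \tsw\ from $r_2$ again misses the true diameter. The labels in Figure~\ref{subfig:4sweep} already indicate the intended behaviour: $r_1=r_2$ is the grid center $P_{22}$, $a_1=a_2$ is the pendant at one corner and $b_1=b_2$ the pendant at the opposite corner, and $d(a_2,b_2)$ is one less than $\diam(H^*)$, the latter being realized by a different pendant pair. So the steps are: (i) check $H^*$ is median of dimension~2; (ii) compute $\diam(H^*)$, exhibiting a diametral pair; (iii) run the first \tsw\ from the designated $r_1$, obtaining $a_1,b_1$ and the shortest-path midpoint $r_2$; (iv) run the second \tsw\ from $r_2$, obtaining $a_2,b_2$, and observe $d(a_2,b_2)<\diam(H^*)$.

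The main obstacle is item~(iii)–(iv) for $H^*$: because \fsw\ already recovers from one bad start, the midpoint $r_2$ must be placed so that it is \emph{again} a poor probe, and one must make sure the shortest $(a_1,b_1)$-path is essentially unique (so ``the middle'' $r_2$ is forced) and that the eccentricity of $r_2$ is strictly less than $\diam(H^*)$. This is a design constraint on the example, not a deep argument; once the graph is fixed, verification is a short finite check. I would present $H^*$ with all distances from the relevant vertices tabulated, making the inequalities $d(a_1,b_1)<\diam$ and $d(a_2,b_2)<\diam$ immediate, and conclude that neither heuristic computes the diameter of median graphs even when $d=2$.
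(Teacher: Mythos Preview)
Your approach is essentially the paper's: exhibit the two small graphs, check they are median of dimension~2, and trace a run of the heuristic that returns a value strictly below the diameter. Two corrections, though.

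First, you spend effort worrying that failure must hold ``regardless of the random starting vertex $r_1$'', and only reluctantly accept the existential reading. The paper (and the standard meaning of ``the heuristic does not return the diameter'') only requires \emph{one} legal execution that fails. Your stronger universal claim is in fact false for $G^*$: if $r_1=w$ then the farthest vertex is $b_1$ at distance~3, and from $b_1$ the farthest vertex is $w$, so \tsw\ returns~3. The paper simply fixes $r_1$ to be a vertex of eccentricity~2 in the square, observes that both sweeps then stay at distance~2, and is done. Drop the universal ambition; the existential run you describe at the end is exactly the paper's argument.

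Second, $H^*$ is a $3\times 3$ grid (not $2\times 3$) with pendant vertices at three of its four corners. The paper starts $r_1$ at the grid centre; all three pendants are then at distance~3, one of them is chosen as $a_1$, its farthest vertex $b_1$ is the bare corner at distance~5, and the centre is a midpoint of that shortest path, so $r_2=r_1$ and the second \tsw\ can repeat the first verbatim, returning~5 while the two ``opposite'' pendants realise diameter~6. Your sketch of the mechanism is right; just use the correct graph.
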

\begin{proof}
We begin with graph $G^*$ and the execution of \tsw . Graph $G^*$ is a square with a pendant vertex. Figure~\ref{subfig:2sweep} represents $G^*$ and indicates the identity of $r_1$, $a_1$, and $b_1$. We start from the arbitrary vertex $r_1$. Its eccentricity is $\ecc(r_1) = 2$. Vertex $a_1$ verifies $d(r_1,a_1) = 2$ and is selected as the start of the next BFS. Similarly, $\ecc(a_1) = 2$ and $b_1$ is selected as the farthest-to-$a_1$ vertex. The distance returned is $d(a_1,b_1) = 2$ whereas $\diam(G^*) = d(b_1,w) = 3$.

We focus now on graph $H^*$ and a possible execution of \fsw . Graph $H^*$ is a $(3\times 3)$-grid with pendant vertices. Figure~\ref{subfig:4sweep} represents $H^*$ and reveals the identity of $r_1$, $a_1$, $b_1$, $a_2$ and $b_2$. We start at vertex $r_1$. Vertex $a_1$ maximizes $d(r_1,v)$ and $b_1$ maximizes $d(a_1,v)$. Then, vertex $r_2$ is a middle of a shortest $(a_1,b_1)$-path, but $r_2=r_1$. Thus, pair $a_1$ and $b_1$ can be selected as a potential diameter again and again. We have $a_1 = a_2$ and $b_1 = b_2$. In this case, the distance returned is $d(a_2,b_2) = 5$ while $\diam(G) = 6$.
\end{proof}

This result convinces us that a more involved algorithm is needed is we aim at determining not only the diameter of constant-dimension median graphs but also all its eccentricities.

\section{Computing the diameter in linear time for dimension $d=O(1)$} \label{sec:diameter}

We proceed in two steps. First, we compute labels for each hypercube of $G$: they characterize the shortest paths of $G$ starting at the basis of the hypercube and passing through its anti-basis (Section~\ref{subsec:labels}). Second, thanks to these labels, we reduce the diameter problem to \textsc{maximum-weighted disjoint sets} (Section~\ref{subsec:mwds}). We design an algorithm with running time $O^*(2^{d(\log(d) + 1) }n)$ which identifies the diameter and a diametral pair for median graphs. It will be extended to obtain all eccentricities with the same execution time in Section~\ref{sec:eccentricities}.

\subsection{Labels on hypercubes} \label{subsec:labels}

A naive approach consists, for each vertex $v \in V$, in computing a BFS to determine its distance to all other vertices of $G$. In this way, we associate with any vertex $v$ a label $\varphi(v)$ of size $O(n)$ indicating all these distances. Such a method produces necessarily a quadratic running time.

To determine the diameter of constant-dimension median graphs in linear time, we propose different labels which contain less information but enough for our problem. We label all pairs which contain each a vertex and a POF outgoing from this vertex. We remind that these pairs are in bijection with the set of hypercubes, according to Lemma~\ref{le:pof_hypercube}: the vertex is the basis of the hypercube and the POFs contains exactly the $\Theta$-classes of the hypercube. The idea is that the POFs indicate some ``direction'' in which we can go from the vertex. The label provides us with the distance of the longest shortest path we can find starting from the input vertex (the basis of the hypercube) and passing through the anti-basis of the hypercube. We will give more intuition on these labels when certain notions will be presented formally in this section.

As announced above, we fix an arbitrary canonical basepoint $v_0$ and for each class $E_i$, we say that the halfspace containing $v_0$ is $H_i'$.

\textbf{Signature.} Given two vertices $u,v \in V$, we define the set which contains the $\Theta$-classes separating $u$ from $v$.

\begin{definition}[Signature $\sigma_{u,v}$]
We say that the {\em signature} of the pair of vertices $u,v$, denoted by $\sigma_{u,v}$, is the set of classes $E_i$ such that $u$ and $v$ are separated in $G\backslash E_i$. In other words, $u$ and $v$ are in different halfspaces of $E_i$.
\end{definition}

All shortest $(u,v)$-paths contain exactly one edge for each class in $\sigma_{u,v}$.

\begin{theorem}
For any shortest $(u,v)$-path $P$, the edges in $P$ belong to classes in $\sigma_{u,v}$ and, for any $E_i \in \sigma_{u,v}$, there is exactly one edge of $E_i$ in path $P$.
\label{th:signature}
\end{theorem}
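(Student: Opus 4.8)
The plan is to prove the two assertions of Theorem~\ref{th:signature} separately, relying on the fact (Lemma~\ref{le:halfspaces}) that each $\Theta$-class $E_i$ is a matching cutset whose removal disconnects $G$ into the two convex halfspaces $H_i'$ and $H_i''$, and that $H_i' = W(u,v)$, $H_i'' = W(v,u)$ whenever $uv \in E_i$.

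\textbf{Edges of $P$ lie in classes of $\sigma_{u,v}$, and such a class is crossed at least once.} Let $P = (u = x_0, x_1, \ldots, x_k = v)$ be a shortest $(u,v)$-path, so $k = d(u,v)$. Consider an edge $x_{t-1}x_t$ of $P$ and let $E_i$ be its $\Theta$-class. Since $E_i$ is a cutset, $x_{t-1}$ and $x_t$ lie in different halfspaces of $E_i$; say $x_{t-1} \in H_i'$ and $x_t \in H_i''$. I claim $u \in H_i'$ and $v \in H_i''$, so that $E_i \in \sigma_{u,v}$. Indeed, the prefix $(x_0,\ldots,x_{t-1})$ and the suffix $(x_t,\ldots,x_k)$ are themselves shortest paths, hence each has length equal to the distance between its endpoints; if $u = x_0$ were in $H_i''$, then the prefix, being a walk from $H_i''$ to $x_{t-1} \in H_i'$, would have to cross the matching $E_i$ at least once, and similarly the edge $x_{t-1}x_t$ crosses it, so $P$ would cross $E_i$ at least twice. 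But crossing a cutset $E_i$ twice on a path from $u$ forces $u$ and $v$ to be on the same side, and more to the point, one can delete the two crossing edges and the detour between them to obtain a strictly shorter walk — contradicting that $P$ is shortest. (More cleanly: a shortest path crosses any cutset at most once, since the distance $d(u,v)$ equals the number of $\Theta$-classes separating $u$ and $v$ — but to avoid circularity I will argue directly via the "no two crossings on a geodesic" observation.) Hence $u \in H_i'$ and $v \in H_i''$, giving $E_i \in \sigma_{u,v}$; and since the edge $x_{t-1}x_t \in E_i$ appears on $P$, every class appearing on $P$ is in $\sigma_{u,v}$ and conversely cannot be crossed zero times if it separates $u$ from $v$ (a path from $u \in H_i'$ to $v \in H_i''$ must use at least one edge of the cutset $E_i$).

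\textbf{Each separating class is crossed exactly once.} By the previous paragraph, for every $E_i \in \sigma_{u,v}$ the path $P$ uses at least one edge of $E_i$, and for every edge of $P$ its class lies in $\sigma_{u,v}$. Suppose some $E_i \in \sigma_{u,v}$ is crossed at least twice by $P$, say $P$ uses edges $x_{s-1}x_s$ and $x_{t-1}x_t$ of $E_i$ with $s < t$. Because $E_i$ is a cutset with sides $H_i', H_i''$, consecutive crossings alternate sides, so $x_{s-1}$ and $x_t$ lie in the same halfspace, say $H_i'$, while $x_s, x_{t-1} \in H_i''$. By convexity of $H_i''$ (Lemma~\ref{le:halfspaces}), the subpath of $P$ from $x_s$ to $x_{t-1}$ — which is a geodesic — stays entirely in $H_i''$, so it uses no edge of the cutset $E_i$; thus the only crossings of $E_i$ strictly between positions $s$ and $t$ are none, and $x_{s-1}x_s$, $x_{t-1}x_t$ are genuinely distinct crossings. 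Now use that $P$ restricted to $[s-1, t]$ is a shortest $(x_{s-1}, x_t)$-path of length $t - s + 1 \ge 3$; but $x_{s-1}$ and $x_t$ both lie in the convex set $H_i'$, so by convexity $I(x_{s-1}, x_t) \subseteq H_i'$, contradicting that the interior vertices $x_s, \ldots, x_{t-1}$ of this geodesic lie in $H_i''$. Hence no class is crossed twice, and combined with "at least once" we get exactly one edge of $E_i$ on $P$ for each $E_i \in \sigma_{u,v}$. As a byproduct this also shows $d(u,v) = |\sigma_{u,v}|$.

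\textbf{Main obstacle.} The only delicate point is justifying, cleanly and without circular appeal to the formula $d(u,v) = |\sigma_{u,v}|$, that a geodesic crosses a matching cutset at most once; the cleanest route is the convexity argument in the last paragraph (a geodesic between two points of the convex halfspace $H_i'$ must stay inside $H_i'$), which simultaneously rules out the "two crossings" case and nails down that $u,v$ lie on opposite sides of every class appearing on $P$. I expect the write-up to hinge on stating that convexity lemma invocation precisely rather than on any computation.
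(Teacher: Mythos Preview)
Your proof is correct and follows essentially the same approach as the paper: both arguments rest entirely on (i) the cutset property of each $E_i$ to force at least one crossing for every class in $\sigma_{u,v}$, and (ii) convexity of the halfspaces (Lemma~\ref{le:halfspaces}) to forbid a geodesic from crossing any $E_i$ twice, which simultaneously handles ``exactly once'' and ``no class outside $\sigma_{u,v}$ appears''. The paper's write-up is slightly more direct in the last point (if $E_j\notin\sigma_{u,v}$ then $u,v$ lie in the same convex halfspace, so the geodesic never leaves it), whereas you route that case through the two-crossings contradiction; but the content is the same.
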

\begin{proof}
Each $\Theta$-class in $\sigma_{u,v}$ separates $u$ from $v$. Consequently, a shortest $(u,v)$-path necessarily passes through an edge of each of these classes. Let $E_i \in \sigma_{u,v}$. A shortest $(u,v)$-path passes through $E_i$ only once, otherwise we would come back to the first halfspace containing $u$ (say $H_i'$ w.l.o.g.) which is a contradiction with the convexity of $H_i'$ (Lemma~\ref{le:halfspaces}). Finally, suppose there is a class $E_j \notin \sigma_{u,v}$ represented in a shortest $(u,v)$-path. Vertices $u$ and $v$ are inside the same halfspace of $E_j$ which is a contradiction with the convexity of halfspaces again: the shortest $(u,v)$-path cannot traverse an edge of $E_j$.
\end{proof}

The converse also holds: if a path contains at most one edge of each class, it is a shortest path.

\begin{theorem}
A path containing at most one edge of each $\Theta$-class is a shortest path between its departure and its arrival.
\label{th:at_most_one}
\end{theorem}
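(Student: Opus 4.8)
The plan is to prove the contrapositive-free direct statement: take a path $P$ from $a$ to $b$ that uses at most one edge of each $\Theta$-class, and show its length equals $d(a,b)$. The natural tool is Theorem~\ref{th:signature}: any shortest $(a,b)$-path has length exactly $|\sigma_{a,b}|$, since it contains exactly one edge per class in $\sigma_{a,b}$ and no others. So it suffices to show that the classes appearing on $P$ are precisely $\sigma_{a,b}$; since $P$ uses each class at most once, its length is then $|\sigma_{a,b}| = d(a,b)$.

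First I would argue that every class used by $P$ lies in $\sigma_{a,b}$. Suppose an edge $e \in E_i$ appears on $P$ but $E_i \notin \sigma_{a,b}$, i.e. $a$ and $b$ lie in the same halfspace of $E_i$, say $H_i'$. Walking along $P$ from $a$, we start in $H_i'$, and the only way to cross from one halfspace of $E_i$ to the other is via an edge of $E_i$. Since $P$ uses at most one edge of $E_i$, once it crosses into $H_i''$ it can never return, so it would end in $H_i''$, contradicting $b \in H_i'$. Hence no class outside $\sigma_{a,b}$ appears on $P$. Conversely, every class in $\sigma_{a,b}$ must appear on $P$: if $E_i \in \sigma_{a,b}$ separates $a$ from $b$, then $P$ — being an $(a,b)$-path — must cross the cutset $E_i$ at least once, so it uses an edge of $E_i$.

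Combining the two inclusions, the multiset of classes used by $P$ is exactly $\sigma_{a,b}$, each with multiplicity (at most, hence exactly) one. Therefore the length of $P$ is $|\sigma_{a,b}|$, which by Theorem~\ref{th:signature} equals the length of any shortest $(a,b)$-path, so $P$ is itself a shortest path.

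I do not anticipate a serious obstacle here; the only subtlety is the "parity/monotonicity" observation that a walk crossing a matching cutset an odd number of times ends on the opposite side and one crossing it an even number of times ends on the same side — combined with the hypothesis that each class is used at most once, this pins the crossing number to exactly $0$ or $1$ according to whether $E_i \in \sigma_{a,b}$. One should also note in passing that such a path is automatically simple (it cannot revisit a vertex, since a closed subwalk would have to cross some class an even, hence zero, number of times and thus be trivial), though this is not strictly needed for the length computation.
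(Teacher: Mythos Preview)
Your proof is correct and takes a genuinely different route from the paper. The paper argues by induction on the length of the path: it peels off the last edge $wv \in E_j$, applies the induction hypothesis to the $(u,w)$-prefix, observes that this prefix stays inside the halfspace $H_j'$ (since it contains no $E_j$-edge), and then uses the gated property of $H_j'$ to conclude that $w$ is the gate of $v$ in $H_j'$, hence $w \in I(u,v)$.

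Your argument bypasses both the induction and the gate machinery: you exploit Theorem~\ref{th:signature} directly to identify $d(a,b)$ with $|\sigma_{a,b}|$, and then a simple parity/crossing argument on the matching cutsets $E_i$ shows that the classes used by $P$ are exactly $\sigma_{a,b}$, each once. This is cleaner and more conceptual --- it makes transparent that Theorems~\ref{th:signature} and~\ref{th:at_most_one} are really two halves of the same equivalence (``shortest'' $\Leftrightarrow$ ``each class at most once'' $\Leftrightarrow$ ``each class in $\sigma_{a,b}$ exactly once''). The paper's inductive proof, on the other hand, is slightly more self-contained in that it does not explicitly invoke the equality $d(a,b)=|\sigma_{a,b}|$, relying instead on the local gate property; but since Theorem~\ref{th:signature} is already available, your approach is preferable. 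The aside about simplicity of $P$ is indeed unnecessary for the length count, as you note.
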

\begin{proof}
We proceed by induction on the length $k$ of the path. If $k=1$, the path consists in one edge only, so it is obviously a shortest path. Suppose we have a path $P$ of length $k+1$ and that all paths of length $k$ having at most one edge of each class are shortest paths. Let $u$ denote the departure, $v$ the arrival and $w$ the vertex before the arrival. We know by induction that the section $P^{(u,w)}$ of the path $P$ between $u$ and $w$ is a shortest $(u,w)$-path. Let $E_j$ denote the $\Theta$-class of edge $wv$. There is no edge of class $E_j$ in path $P^{(u,w)}$, so all vertices in $P^{(u,w)}$ are in the same halfspace of $E_j$, say $H_j'$ w.l.o.g.. Vertex $w$ is the gate of $v$ in $H_j'$ because $d(v,w)=1$: no other vertex can be the gate of $v$, otherwise $d(v,w) \ge 2$. As $w \in I(u,v)$, path $P$ is a shortest $(u,v)$-path.
\end{proof}

The concept of signature can be generalized: given a set of edges, its signature is the set of $\Theta$-classes represented in that set. For example, the signature of a path is the set of classes which have at least one edge in this path. In this way, the signature $\sigma_{u,v}$ is also the signature of any shortest $(u,v)$-path. The signature of a hypercube is the set of $\Theta$-classes represented in its edges.

\textbf{Ladder sets.} Our idea is now to use POFs to characterize, given two vertices $u,v \in V$, the ``direction'' in which the shortest $(u,v)$-paths are oriented. We define the \textit{ladder set} $L_{u,v}$, a subset of $\sigma_{u,v}$ which provides the hypercube you have to go through if you want to reach $v$ from $u$ with a shortest path. We see $u$ as the departure vertex, while $v$ is the arrival. The notion of ladder set is defined only for vertices $u,v$ satisfying $u \in I(v_0,v)$. Another characterization involving two ladder sets will follow for any pair of vertices.

\begin{definition}[Ladder set $L_{u,v}$]
Let $u,v \in V$ such that $u \in I(v_0,v)$. The ladder set $L_{u,v}$ is the subset of $\sigma_{u,v}$ which contains the classes admitting an edge adjacent to $u$.
\label{def:ladder}
\end{definition}

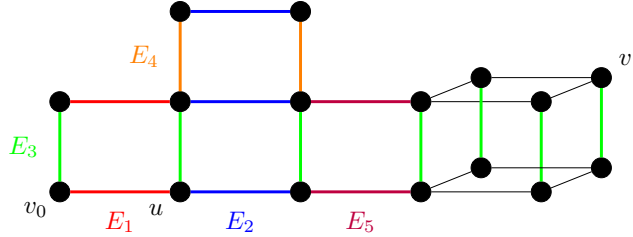
\begin{figure}[h]
\centering
\scalebox{0.8}{\begin{tikzpicture}

% NODES %%%%%%%%%%%%%%%%%%%%%%%%%%%%%%%%%%%%%%%%%%%%%%%%%%%%%%%%%%%%%%%%%%

\node[draw, circle, minimum height=0.2cm, minimum width=0.2cm, fill=black] (P11) at (1,1) {};
\node[draw, circle, minimum height=0.2cm, minimum width=0.2cm, fill=black] (P12) at (1,2.5) {};

\node[draw, circle, minimum height=0.2cm, minimum width=0.2cm, fill=black] (P21) at (3,1) {};
\node[draw, circle, minimum height=0.2cm, minimum width=0.2cm, fill=black] (P22) at (3,2.5) {};
\node[draw, circle, minimum height=0.2cm, minimum width=0.2cm, fill=black] (P23) at (3,4) {};

\node[draw, circle, minimum height=0.2cm, minimum width=0.2cm, fill=black] (P31) at (5,1) {};
\node[draw, circle, minimum height=0.2cm, minimum width=0.2cm, fill=black] (P32) at (5,2.5) {};
\node[draw, circle, minimum height=0.2cm, minimum width=0.2cm, fill=black] (P33) at (5,4) {};

\node[draw, circle, minimum height=0.2cm, minimum width=0.2cm, fill=black] (P41) at (7,1) {};
\node[draw, circle, minimum height=0.2cm, minimum width=0.2cm, fill=black] (P42) at (7,2.5) {};

\node[draw, circle, minimum height=0.2cm, minimum width=0.2cm, fill=black] (P51) at (9,1) {};
\node[draw, circle, minimum height=0.2cm, minimum width=0.2cm, fill=black] (P52) at (9,2.5) {};

\node[draw, circle, minimum height=0.2cm, minimum width=0.2cm, fill=black] (P61) at (8.0,1.4) {};
\node[draw, circle, minimum height=0.2cm, minimum width=0.2cm, fill=black] (P62) at (8.0,2.9) {};
\node[draw, circle, minimum height=0.2cm, minimum width=0.2cm, fill=black] (P63) at (10.0,1.4) {};
\node[draw, circle, minimum height=0.2cm, minimum width=0.2cm, fill=black] (P64) at (10.0,2.9) {};

% LINKS %%%%%%%%%%%%%%%%%%%%%%%%%%%%%%%%%%%%%%%%%%%%%%%%%%%%%%%%%%%%%%%%%%

\draw[line width = 1.4pt, color=green] (P11) -- (P12);
\draw[line width = 1.4pt, color=red] (P11) -- (P21);
\draw[line width = 1.4pt, color=red] (P12) -- (P22);
\draw[line width = 1.4pt, color=green] (P21) -- (P22);

\draw[line width = 1.4pt, color=blue] (P21) -- (P31);
\draw[line width = 1.4pt, color=blue] (P22) -- (P32);
\draw[line width = 1.4pt, color=green] (P31) -- (P32);

\draw[line width = 1.4pt, color=orange] (P22) -- (P23);
\draw[line width = 1.4pt, color=blue] (P23) -- (P33);
\draw[line width = 1.4pt, color=orange] (P32) -- (P33);

\draw[line width = 1.4pt, color=purple] (P31) -- (P41);
\draw[line width = 1.4pt, color=purple] (P32) -- (P42);
\draw[line width = 1.4pt, color=green] (P41) -- (P42);

\draw (P41) -- (P51);
\draw (P42) -- (P52);
\draw[line width = 1.4pt, color=green] (P51) -- (P52);

\draw (P41) -- (P61);
\draw (P42) -- (P62);
\draw (P51) -- (P63);
\draw (P52) -- (P64);
\draw[line width = 1.4pt, color=green] (P61) -- (P62);
\draw (P61) -- (P63);
\draw (P62) -- (P64);
\draw[line width = 1.4pt, color=green] (P63) -- (P64);

% ETIQUETTES

\node[scale=1.2, color = red] at (2.0,0.5) {$E_1$};
\node[scale=1.2, color = blue] at (4.0,0.5) {$E_2$};
\node[scale=1.2, color = green] at (0.4,1.75) {$E_3$};
\node[scale=1.2, color = orange] at (2.4,3.25) {$E_4$};
\node[scale=1.2, color = purple] at (6.0,0.5) {$E_5$};

\node[scale = 1.2] at (0.6,0.7) {$v_0$};
\node[scale = 1.2] at (2.6,0.7) {$u$};
\node[scale = 1.2] at (10.4,3.2) {$v$};

\end{tikzpicture}}
\caption{An example with $u \in I(v_0,v)$ and the ladder set $L_{u,v} = \set{E_2,E_3}$}
\label{fig:ladder_sets}
\end{figure}

Observe that when $u \in I(v_0,v)$, all edges of any shortest $(u,v)$-path are oriented towards $v$ with the $v_0$-orientation. The definition of the ladder set takes into account an order of the pair $(u,v)$ while it is not the case for the signature. Figure~\ref{fig:ladder_sets} shows a median graph $G$ and a pair $u,v$ of vertices satisfying $u \in I(v_0,v)$. We color certain $\Theta$-classes of this graph. The $\Theta$-classes belonging to $\sigma_{u,v}$ and which are adjacent to $u$ are $E_2$ and $E_3$: $L_{u,v} = \set{E_2,E_3}$. For example, $E_5 \in \sigma_{u,v}$ but there is no edge of $E_5$ adjacent to $u$. We show that any ladder set is a POF.

\begin{theorem}
Suppose $u \in I(v_0,v)$. Any ladder set $L_{u,v}$ is a POF. Moreover, there exists a shortest $(u,v)$-path, where its first edges belong to $L_{u,v}$.
\label{th:ladder_pof}
\end{theorem}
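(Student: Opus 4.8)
The plan is to prove the two assertions in turn. For the first one, that $L_{u,v}$ is a POF, I would take two distinct classes $E_j, E_h \in L_{u,v}$ and exhibit a square witnessing $E_j \perp E_h$. By definition of the ladder set, both $E_j$ and $E_h$ have an edge incident to $u$, say $ua_j \in E_j$ and $ua_h \in E_h$; moreover, since $u \in I(v_0,v)$ and these classes separate $u$ from $v$, the $v_0$-orientation sends both edges away from $u$, i.e.\ $\vv{ua_j}$ and $\vv{ua_h}$. Thus $N^-(a_j)$ and $N^-(a_h)$ contain these edges $\ldots$ more directly, I would invoke Lemma~\ref{le:pof_hypercube}: the classes $\mathcal{E}^-(v)$ of the incoming edges at $v$ form a POF, but this is about incoming edges at $v$, not outgoing at $u$. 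The cleaner route is to show $L_{u,v} \subseteq \mathcal{E}^-(v_X)$ for the vertex $v_X$ associated with the candidate POF, or simply to prove pairwise orthogonality directly from incompatibility (Lemma~\ref{le:perp_incomp}): for $E_j, E_h \in L_{u,v}$ I must show all four intersections of halfspaces are nonempty. Three of them are easy: $u$ witnesses one quadrant (the one containing $v_0$ for both, or whichever it is), $a_j$ witnesses the quadrant that flips only the $E_j$-side, and $a_h$ flips only the $E_h$-side; the remaining quadrant (flipping both) is witnessed by $v$ itself, since $v$ lies on the far side of both $E_j$ and $E_h$ (both are in $\sigma_{u,v}$, hence separate $u$ from $v$, and $u$ is on the $v_0$-side of each). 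This gives incompatibility, hence $E_j \perp E_h$, hence $L_{u,v}$ is a POF.

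For the second assertion, I would apply Lemma~\ref{le:pof_adjacent} (POFs adjacent to a vertex): since $L_{u,v}$ is a POF and every class in it has an edge incident to $u$, there is an induced hypercube $Q$ containing $u$ whose edges realize exactly the classes of $L_{u,v}$, with $u$ as its basis (its closest-to-$v_0$ vertex, because all those edges point away from $u$). Let $w$ be the anti-basis of $Q$. I claim there is a shortest $(u,v)$-path that first traverses $Q$ from $u$ to $w$ — using one edge of each class of $L_{u,v}$, in any order — and then continues from $w$ to $v$. The path from $u$ to $w$ is shortest by Theorem~\ref{th:at_most_one}, and its signature is $L_{u,v} \subseteq \sigma_{u,v}$. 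It remains to see $w \in I(u,v)$, equivalently $d(u,w) + d(w,v) = d(u,v)$, i.e.\ $\sigma_{w,v} = \sigma_{u,v} \setminus L_{u,v}$, which reduces to: no class of $L_{u,v}$ separates $w$ from $v$. Suppose $E_j \in L_{u,v}$ separated $w$ from $v$; since $E_j$ also separates $u$ from $v$ and $w$ lies on the opposite side of $E_j$ from $u$ (we just crossed $E_j$), $w$ and $v$ would be on the same side — contradiction. Conversely, any class in $\sigma_{u,v} \setminus L_{u,v}$ does not touch $u$, hence does not touch $Q$ (its edges would give a square at $u$ by Lemma~\ref{le:squares}, forcing an incident edge at $u$), so it still separates $w$ from $v$; combined with the previous point this yields $\sigma_{w,v} = \sigma_{u,v} \setminus L_{u,v}$, hence $d(u,w)+d(w,v) = |L_{u,v}| + |\sigma_{u,v}\setminus L_{u,v}| = |\sigma_{u,v}| = d(u,v)$. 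Concatenating the shortest $(u,w)$-path through $Q$ with any shortest $(w,v)$-path gives the desired shortest $(u,v)$-path whose first edges are exactly the edges of $L_{u,v}$.

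The step I expect to be the main obstacle is verifying that $w$ — the anti-basis of the hypercube $Q$ spanned by $L_{u,v}$ at $u$ — actually lies on a shortest $(u,v)$-path, i.e.\ the signature bookkeeping $\sigma_{w,v} = \sigma_{u,v} \setminus L_{u,v}$. This needs a careful argument that crossing all the $L_{u,v}$-classes at once (via $Q$) does not accidentally "use up" a class outside $L_{u,v}$ and does not fail to cross one inside it; the key lemma there is that a class of $\sigma_{u,v}$ not incident to $u$ cannot be incident to any vertex of $Q$, which follows from Lemma~\ref{le:squares} (if it were incident to a vertex of $Q$ at distance $\le k$ from $u$, orthogonality with the classes of $Q$ would propagate an incident edge back to $u$, as in the proof of Lemma~\ref{le:pof_adjacent}). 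Once that localization fact is in hand, everything else is routine counting with Theorems~\ref{th:signature} and~\ref{th:at_most_one}.
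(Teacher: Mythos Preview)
Your proposal is correct. The second assertion is handled essentially as in the paper: invoke Lemma~\ref{le:pof_adjacent} to produce the hypercube $Q$ with basis $u$ and signature $L_{u,v}$, then show its anti-basis lies in $I(u,v)$. Your signature bookkeeping is more explicit than the paper's one-line ``$L_{u,v}\subseteq\sigma_{u,v}$ hence $u^+\in I(u,v)$'', and your flagged ``main obstacle'' is overblown: you do not need to argue that a class of $\sigma_{u,v}\setminus L_{u,v}$ avoids all of $Q$; it suffices that $\sigma_{u,w}=L_{u,v}$, which is immediate from the hypercube construction, and then the halfspace parity gives $\sigma_{w,v}=\sigma_{u,v}\setminus L_{u,v}$ directly.

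For the first assertion your route genuinely differs from the paper's. The paper takes $E_i,E_j\in L_{u,v}$ with incident edges $uz_i\in E_i$, $uz_j\in E_j$, observes that a shortest $(z_i,v)$-path still crosses $E_j$, and uses convexity of $\partial H_j'$ (Lemma~\ref{le:boundaries}) together with the isomorphism across $E_j$ to build an explicit square $u\,z_i\,w\,z_j$. You instead invoke Lemma~\ref{le:perp_incomp}: the four quadrants of $\{H_j',H_j''\}\times\{H_h',H_h''\}$ are witnessed by $u$, $a_j$, $a_h$, $v$ respectively, so the splits are incompatible and hence $E_j\perp E_h$. Your argument is shorter and more elementary (it avoids the gated/convex machinery on boundaries), and in fact does not use the hypothesis $u\in I(v_0,v)$ at all; the paper's argument, by contrast, yields the square explicitly at $u$, which is morally the same square Lemma~\ref{le:pof_adjacent} will produce anyway.
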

\begin{proof}
If $\card{L_{u,v}} = 1$, then the proof is terminated. Suppose $\card{L_{u,v}} \ge 2$. We denote by $E_i$ and $E_j$ two arbitrary classes in $L_{u,v}$.  We prove that they are orthogonal. 

Let $e_i = (u,z_i)$ be the edge of $E_i$ adjacent to $u$. The one of class $E_j$ is denoted by $e_j = (u,z_j)$. As $z_i$ is the gate of $u$ in $H_i''$, we have $d(z_i,v) = d(u,v)-1$. Moreover, $\sigma_{z_i,v} = \sigma_{u,v}\backslash \set{E_i}$. 
Any shortest $(z_i,v)$-path concatenated with edge $(u,z_i)$ produces a shortest $(u,v)$-path. Furthermore, it contains an edge of $E_j$, so one of its vertex is in $\partial H_j'$. In summary, we know that set $\partial H_j'$ is convex (Lemma~\ref{le:boundaries}) and contains both $u$ and a vertex of any shortest $(z_i,v)$-path. Consequently, $z_i \in \partial H_j'$: there is an edge $(z_i,w) \in E_j$. As the edges of $E_j$ define an isomorphism (Lemma~\ref{le:boundaries}), $(w,z_j) \in E$. We obtain a square made up of classes $E_i$ and $E_j$, so $E_i \perp E_j$. As it is true for two arbitrary classes $E_i,E_j \in L_{u,v}$, set $L_{u,v}$ is a POF.

According to Lemma~\ref{le:pof_adjacent}, there is an hypercube $Q$ containing $u$ and with signature $L_{u,v}$. As $u \in \partial H_i'$ for any $E_i \in L_{u,v}$, vertex $u$ is the basis of hypercube $Q$.
Let $u^+$ be the opposite vertex of $u$ in $Q$. Consider the concatenation of a shortest $(u,u^+)$-path $P$ made up of edges in $Q$ and of a shortest $(u^+,v)$-path $P^+$. The signature of $P$ is $L_{u,v}$. As $L_{u,v} \subseteq \sigma_{u,v}$, vertex $u^+$ is in $I(u,v)$. Consequently, the concatenation $P \cdot P^+$ is a shortest $(u,v)$-path and its first edges belong to the $\Theta$-classes in $L_{u,v}$.
\end{proof}

More generally, for any ordering $\tau$ of the classes of $L_{u,v}$, we can identify a shortest $(u,v)$-path, where its first part is made up with edges  of the classes of $L_{u,v}$ following the ordering $\tau$. This prefix is in fact a shortest path between $u$ and its opposite vertex in the hypercube containing $u$ with signature $L_{u,v}$. However, several shortest $(u,v)$-paths may contain classes of $\sigma_{u,v} \backslash L_{u,v}$ before certain classes of $L_{u,v}$.

The main principle of our algorithm is to label each pair made up of a vertex $u \in V$ and a POF $L$ ``outgoing'' from $u$, {\em i.e.} such that, for each $E_i \in L$, there is an edge of $E_i$ outgoing from $u$. We know from Lemma~\ref{le:pof_adjacent} that it means there is an hypercube made up of edges in $L$ and with basis $u$. Then, the label associated with $u$ and $L$ is the maximum distance $d(u,v)$ for a pair $(u,v)$, $u \in I(v_0,v)$, admitting $L$ as its ladder set. There is one label for each hypercube of $G$.

We showed how the ladder sets characterize pairs of vertices $u,v$, where $u$, $v$, and $v_0$ are aligned, {\em i.e.} $u \in I(v_0,v)$. We focus now on the general case: we can characterize any pair $u,v$ of vertices with two ladder sets.

\begin{definition}[Ladder pair $(L_{m,u},L_{m,v})$]
Let $m = m(u,v,v_0)$. As $m \in I(v_0,u)$ and $m \in I(v_0,v)$, pairs $(m,u)$ and $(m,v)$ admit a ladder set. The ladder pair of $u$ and $v$ is $(L_{m,u},L_{m,v})$.
\label{def:ladder_pair}
\end{definition}

If $u \in I(v_0,v)$, then $m = u$. In this case, the ladder pair is the empty set with the ladder set of $u$ and $v$. Figure~\ref{fig:ladder_pairs} represents a triplet $u,m,v$ such that $m = m(u,v,v_0)$. The edges are oriented according to the $v_0$-orientation. In this example, the ladder pair of $u,v$ is $(L_{m,u},L_{m,v})$, with $L_{m,u} = \set{E_i,E_j}$ and $L_{m,v} = \set{E_r,E_{\ell}}$.

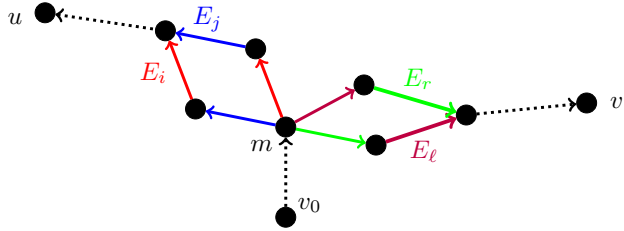
\begin{figure}[h]
\centering
\scalebox{0.8}{\begin{tikzpicture}

% NODES %%%%%%%%%%%%%%%%%%%%%%%%%%%%%%%%%%%%%%%%%%%%%%%%%%%%%%%%%%%%%%%%%%

%\node[draw, circle, minimum height=0.2cm, minimum width=0.2cm, fill=black] (P1) at (1,4) {};
%
%\node[draw, circle, minimum height=0.2cm, minimum width=0.2cm, fill=black] (P2) at (2.5,3.5) {};
%
\node[draw, circle, minimum height=0.2cm, minimum width=0.2cm, fill=black] (P3) at (3.0,4.4) {};

\node[draw, circle, minimum height=0.2cm, minimum width=0.2cm, fill=black] (P4) at (5.5,2.8) {};
\node[draw, circle, minimum height=0.2cm, minimum width=0.2cm, fill=black] (P42) at (5.0,4.1) {};

\node[draw, circle, minimum height=0.2cm, minimum width=0.2cm, fill=black] (P5) at (7,2.5) {};
\node[draw, circle, minimum height=0.2cm, minimum width=0.2cm, fill=black] (P52) at (6.5,3.8) {};
\node[draw, circle, minimum height=0.2cm, minimum width=0.2cm, fill=black] (v0) at (7,1.0) {};

\node[draw, circle, minimum height=0.2cm, minimum width=0.2cm, fill=black] (P6) at (8.5,2.2) {};
\node[draw, circle, minimum height=0.2cm, minimum width=0.2cm, fill=black] (P62) at (8.3,3.2) {};

\node[draw, circle, minimum height=0.2cm, minimum width=0.2cm, fill=black] (P7) at (10.0,2.7) {};
\node[draw, circle, minimum height=0.2cm, minimum width=0.2cm, fill=black] (P8) at (12.0,2.9) {};

% LINKS %%%%%%%%%%%%%%%%%%%%%%%%%%%%%%%%%%%%%%%%%%%%%%%%%%%%%%%%%%%%%%%%%%

\draw[->,dotted,line width = 1.4pt] (v0) -- (P5);

%\draw[<-,line width = 1.4pt] (P1) -- (P2);
%\draw[<-,line width = 1.8pt, color = blue] (P2) -- (P3);
\draw[<-,line width = 1.4pt, dotted] (P3) -- (P42);
\draw[<-,line width = 1.4pt,color = blue] (P4) -- (P5);
\draw[->,line width = 1.4pt, color = green] (P5) -- (P6);
\draw[->,line width = 1.4pt, color = purple] (P5) -- (P62);
\draw[->,line width = 1.8pt, color = purple] (P6) -- (P7);
\draw[->,line width = 1.8pt, color = green] (P62) -- (P7);
\draw[->,line width = 1.4pt, dotted] (P7) -- (P8);

\draw[->,line width = 1.4pt, color = red] (P4) -- (P42);
\draw[->,line width = 1.4pt, color = red] (P5) -- (P52);
\draw[->,line width = 1.4pt, color = blue] (P52) -- (P42);

% ETIQUETTES

\node[scale=1.2, color = blue] at (5.7,4.3) {$E_j$};
\node[scale=1.2, color = red] at (4.8,3.4) {$E_i$};

\node[scale=1.2, color = green] at (9.2,3.3) {$E_r$};
\node[scale=1.2, color = purple] at (9.3,2.1) {$E_{\ell}$};

\node[scale = 1.2] at (7.4,1.2) {$v_0$};
\node[scale = 1.2] at (6.6,2.2) {$m$};

\node[scale = 1.2] at (2.5,4.3) {$u$};
\node[scale = 1.2] at (12.5,2.9) {$v$};

\end{tikzpicture}}
\caption{Ladder pairs $L_{m,u} = \set{E_i,E_j}$ and $L_{m,v} = \set{E_r,E_{\ell}}$.}
\label{fig:ladder_pairs}
\end{figure}

A shortest $(u,v)$-path passing through $m = m(u,v,v_0)$ is refered as an ``down-up'' shortest path in~\cite{KlMu99}. Indeed, its $(u,m)$-section is $v_0$-oriented ``downwards'' - edges are oriented towards $u$ - while its $(m,v)$-section is oriented ``upwards'' - edges are oriented towards $v$. The ladder pair of $(u,v)$ indicates the ladder set of both the downwards and the upwards section. They contain the $\Theta$-classes adjacent to $m$ which belong to $\sigma_{m,u}$ and $\sigma_{m,v}$ respectively.

\textbf{Computation of the labels.} We design an algorithm which computes the labels on our median graph $G$. It determines for each pair $(u,L)$, $u \in V$ and $L$ is a POF outgoing from $u$, the length of the longest shortest path starting from $u$, with an arrival $v \in V$ verifying $u \in I(v_0,v)$, and such that its ladder set is $L$. We denote by $\varphi(u,L)$ this variable. Put formally, $\varphi(u,L)$ is the maximum distance $d(u,v)$, where $u \in I(v_0,v)$ and $L_{u,v} = L$.

The following theorem identifies a relationship between labels of different vertices.

\begin{theorem}
Let $u \in V$, $L$ be a POF outgoing from $u$ and $Q$ be the hypercube with basis $u$ and signature $L$. We denote by $u^+$ the opposite vertex of $u$ in $Q$: $u$ is the basis of $Q$ and $u^+$ its anti-basis. A vertex $v\neq u^+$ verifies $u \in I(v_0,v)$ and $L_{u,v} = L$ if and only if (i) $u^+ \in I(v_0,v)$ and (ii) for each $E_j \in L_{u^+,v}$, $L \cup \set{E_j}$ is not a POF.
\label{th:pushing_label}
\end{theorem}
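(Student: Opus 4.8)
The plan is to isolate one technical tool — a \emph{transfer lemma} describing how a $\Theta$-class orthogonal to every class of $L$ behaves along the hypercube $Q$ — and then derive both implications by chasing signatures and edge orientations. First I would record the basic facts about $Q$: a shortest $(u,u^+)$-path inside $Q$ uses exactly one edge of each class of $L$, so $\sigma_{u,u^+}=L$ by Theorem~\ref{th:signature}; since $u$ is the closest-to-$v_0$ vertex of $Q$ we have $u\in I(v_0,u^+)$, i.e. $d(v_0,u^+)=d(v_0,u)+\card{L}$; the orientation convention gives $u\in\partial H_i'$ and $u^+\in\partial H_i''$ for every $E_i\in L$; and whenever $u^+\in I(u,v)$ one obtains the disjoint decomposition $\sigma_{u,v}=L\sqcup\sigma_{u^+,v}$ by concatenating shortest paths and invoking Theorems~\ref{th:signature} and~\ref{th:at_most_one}. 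I will also use that if $w\in I(v_0,w')$ and $E_j\in\sigma_{w,w'}$ then $v_0$ and $w$ lie in the same halfspace of $E_j$, so $w\in H_j'$ (convexity, Lemma~\ref{le:halfspaces}).

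The transfer lemma I want is: if $E_j\notin L$ and $E_j\perp E_i$ for every $E_i\in L$, then $u$ is an endpoint of an edge of $E_j$ if and only if $u^+$ is, and in that case $u$ and $u^+$ lie in the same halfspace of $E_j$. I would prove it by walking along a shortest $(u,u^+)$-path of $Q$: at each step the traversed edge has a class $E_i\in L$ orthogonal to $E_j$, so if the current vertex is an endpoint of an $E_j$-edge, Lemma~\ref{le:squares} closes a square certifying that the next vertex is again an endpoint of an $E_j$-edge on the same side of $E_j$; induction along the path transfers the property between $u$ and $u^+$. This is the step I expect to be the crux; everything after it is bookkeeping.

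For the forward implication, assume $u\in I(v_0,v)$ and $L_{u,v}=L$. Theorem~\ref{th:ladder_pof} identifies $Q$ with the hypercube built there and yields $u^+\in I(u,v)$; together with $u\in I(v_0,u^+)$ and $u\in I(v_0,v)$, distance additivity gives $d(v_0,v)=d(v_0,u^+)+d(u^+,v)$, which is (i). For (ii), suppose for contradiction that some $E_j\in L_{u^+,v}$ has $L\cup\set{E_j}$ a POF. Then $E_j\in\sigma_{u^+,v}$ is disjoint from $L$, so $E_j\notin L$, and $E_j\perp E_i$ for all $E_i\in L$; since $E_j\in L_{u^+,v}$, $u^+$ is an endpoint of an $E_j$-edge, so by the transfer lemma $u$ is too. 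But $E_j\in\sigma_{u^+,v}\subseteq\sigma_{u,v}$, hence $E_j\in L_{u,v}=L$, contradicting $E_j\notin L$.

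For the backward implication, assume (i) and (ii). From $u\in I(v_0,u^+)$ and $u^+\in I(v_0,v)$ the triangle inequality forces $u\in I(v_0,v)$ and $u^+\in I(u,v)$, so $L_{u,v}$ is defined, is a POF (Theorem~\ref{th:ladder_pof}), and $\sigma_{u,v}=L\sqcup\sigma_{u^+,v}$. Each $E_i\in L$ lies in $\sigma_{u,u^+}\subseteq\sigma_{u,v}$ and has an edge at $u$, so $L\subseteq L_{u,v}$. Conversely, take $E_j\in L_{u,v}\setminus L$; then $E_j\in\sigma_{u^+,v}$, and $L\cup\set{E_j}\subseteq L_{u,v}$ is a POF so $E_j\perp E_i$ for all $E_i\in L$. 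Since $E_j\in L_{u,v}$ and $u\in I(v_0,v)$, $u$ is an endpoint of an $E_j$-edge, so by the transfer lemma $u^+$ is too, whence $E_j\in L_{u^+,v}$ — but then (ii) asserts $L\cup\set{E_j}$ is not a POF, a contradiction. Hence $L_{u,v}=L$. The hypothesis $v\neq u^+$ plays no essential role beyond keeping $\sigma_{u^+,v}$ and $L_{u^+,v}$ meaningful; I expect the equivalence to hold trivially for $v=u^+$ as well. The main obstacle is thus entirely concentrated in the transfer lemma and in being careful about which halfspace of each relevant class contains $v_0$, $u$ and $u^+$.
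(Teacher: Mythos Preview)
Your proof is correct and follows essentially the same route as the paper's: both directions hinge on the fact that membership in $\partial H_j'$ propagates across the hypercube $Q$ whenever $E_j$ is orthogonal to every class of $L$. The paper establishes this propagation twice, once via the convexity of $\partial H_j'$ (knowing both $u,u^+\in\partial H_j'$ forces $Q\subseteq\partial H_j'$) and once via iterated applications of Lemma~\ref{le:squares}; your ``transfer lemma'' abstracts exactly this step and applies it symmetrically, which makes your backward direction slightly more explicit about why a hypothetical $E_j\in L_{u,v}\setminus L$ must already lie in $L_{u^+,v}$.
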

\begin{proof}
Let $P_{uu^+}$ be a shortest $(u,u^+)$-path made up of edges of the hypercube $Q$. Let $P_{u^+v}$ be a shortest $(u^+,v)$-path. All edges of path $P_{uu^+}$ are oriented towards $u^+$. Similarly, path $P_{u^+v}$ is oriented towards $v$ as $u^+ \in I(v_0,v)$. Consequently, the concatenation $P_{uu^+} \cdot P_{u^+v}$ is an $(u,v)$-path whose edges are all oriented towards $v$. According to Theorem~\ref{th:at_most_one}, it is a shortest path, otherwise it would pass twice through the same class and an edge of the path would be oriented towards $u$. So, $u^+ \in I(u,v)$.

We can use the same argument with the triplet $(v_0,u,v)$, considering a shortest $(v_0,u)$-path and the shortest $(u,v)$-path $P_{uu^+} \cdot P_{u^+v}$ which is oriented towards $v$. We obtain that $u \in I(v_0,v)$.

Finally, we prove that the ladder set of the pair $(u,v)$ is exactly $L$. Suppose that a class $E_j$ of $L_{u^+,v}$ contains an edge adjacent to $u$. As $u,u^+ \in \partial H_j'$, the convexity of boundaries (Lemma~\ref{le:boundaries}) implies that all vertices of $Q$ are in $\partial H_j'$ because $Q \subseteq I(u,u^+)$. The class $E_j$ defines an isomorphism, so $E_j$ is orthogonal to all classes represented in $Q$, which corresponds to $L$. In other words, $L \cup \set{E_j}$ is a POF, a contradiction.

The converse is also true. Let us consider a vertex $v$ satisfying $u \in I(v_0,v)$ and $L_{uv} = L$. We have $u^+ \in I(u,v) \subseteq I(v_0,v)$, therefore the pair $(u^+,v)$ admits a ladder set. Suppose by way a contradiction that there is $E_j \in L_{u^+,v}$ such that $L \cup \set{E_j}$ is a POF. We know that $u^+ \in \partial H_j'$. Lemma~\ref{le:squares} implies that the neighbors of $u^+$ in $Q$ are also in $\partial H_j'$. We can spread this reasoning: the vertices at distance 2 from $u^+$ in $Q$ are in $\partial H_j'$, etc. Finally, we have $u \in \partial H_j'$. This is a contradiction as the ladder set $L_{uv}$ is $L$, not $L \cup \set{E_j}$.
\end{proof}

\begin{algorithm}[t]
\SetKwFor{For}{for}{do}{\nl endfor}
\SetKwFor{Forall}{for all}{do}{\nl endfor}
\SetKwIF{If}{ElseIf}{Else}{if}{then}{else if}{else}{}
\DontPrintSemicolon
\SetNlSty{}{}{:}
\SetAlgoNlRelativeSize{0}
\SetNlSkip{1em}
\nl\KwIn{graph $G$, $\Theta$-classes $\mathcal{E}$, list $\mathcal{Q}$ of hypercubes (triplets anti-basis, basis, and POF from Lemma~\ref{le:enum_hypercubes})}
\nl\KwOut{Labels $\varphi(u,L)$ for each vertex $u\in V$ and POF $L$ outgoing from $u$}
\nl Initialize $\varphi(u,L) \leftarrow 0$ for each $u \in V$ and $L$ POF outgoing from $u$;\;
\nl $\mathcal{Q}^* \leftarrow \mbox{\textbf{reverse}}(\mathcal{Q})$; \label{line:reverse}\; 
\nl \For{$(u^+,u,L)$ in list $\mathcal{Q}^*$}{
	\nl \If{$\varphi(u,L) = 0$}{
		\nl $\varphi(u,L) \leftarrow \card{L}$; \label{line:phi_zero}\;
	}
	\ifend\;
	\nl \Forall{$X \subseteq \mathcal{E}^-(u), X \neq \emptyset$}{
		\nl $u^- \leftarrow$ basis of the hypercube with anti-basis $u$ and $\Theta$-classes $X$; \label{line:u_moins}\;
		\nl $\checkperp \leftarrow \mbox{false}$; \label{line:init_check}\;
		\nl \Forall{$E_j \in L$}{
			\nl \lIf{$X \cup \set{E_j}$ is a POF}{$\checkperp \leftarrow \mbox{true}$;\label{line:check}}}
		\nl \lIf{{\em not} $\checkperp$}{$\varphi(u^-,X) \leftarrow \max \set{\varphi(u^-,X), \card{X}+\varphi(u,L)}$\label{line:update_phi};}} 
	}	

\caption{The computation of labels $\varphi(u,L)$}
\label{algo:labels}
\end{algorithm}

Theorem~\ref{th:pushing_label} is illustrated in Figure~\ref{fig:compute_labels}. It shows a median graph with a basepoint $v_0$. Some $\Theta$-classes are colored. Consider the POF $L =\set{E_2,E_3}$ outgoing from $u$ and the ladder set $L_{u^+,x} = \set{E_1}$. As $L \cup \set{E_1}$ is a POF, the ladder set $L_{u,x}$ of pair $u,x$ is not $L$, but $L_{u,x} = \set{E_1,E_2,E_3}$. However, if we consider the ladder set $L_{u^+,v} = \set{E_4,E_5}$, then we observe that neither $L \cup \set{E_4}$ nor $L \cup \set{E_5}$ is a POF. So, $L_{u,v} = L$ according to Theorem~\ref{th:pushing_label}. Indeed, one can check that $E_2$ and $E_3$ are the only $\Theta$-classes of $\sigma_{u,v}$ which are adjacent to $u$.

\begin{figure}[h]
\centering
\scalebox{0.8}{\begin{tikzpicture}

% NODES %%%%%%%%%%%%%%%%%%%%%%%%%%%%%%%%%%%%%%%%%%%%%%%%%%%%%%%%%%%%%%%%%%

\node[draw, circle, minimum height=0.2cm, minimum width=0.2cm, fill=black] (P01) at (-1,1) {};

\node[draw, circle, minimum height=0.2cm, minimum width=0.2cm, fill=black] (P11) at (1,1) {};
\node[draw, circle, minimum height=0.2cm, minimum width=0.2cm, fill=black] (P12) at (1,2.5) {};
\node[draw, circle, minimum height=0.2cm, minimum width=0.2cm, fill=black] (P13) at (3,1) {};
\node[draw, circle, minimum height=0.2cm, minimum width=0.2cm, fill=black] (P14) at (3,2.5) {};

\node[draw, circle, minimum height=0.2cm, minimum width=0.2cm, fill=black] (P21) at (2.0,1.4) {};
\node[draw, circle, minimum height=0.2cm, minimum width=0.2cm, fill=black] (P22) at (2.0,2.9) {};
\node[draw, circle, minimum height=0.2cm, minimum width=0.2cm, fill=black] (P23) at (4.0,1.4) {};
\node[draw, circle, minimum height=0.2cm, minimum width=0.2cm, fill=black] (P24) at (4.0,2.9) {};

\node[draw, circle, minimum height=0.2cm, minimum width=0.2cm, fill=black] (P31) at (3.0,4) {};
\node[draw, circle, minimum height=0.2cm, minimum width=0.2cm, fill=black] (P32) at (5.0,4) {};

\node[draw, circle, minimum height=0.2cm, minimum width=0.2cm, fill=black] (P41) at (5.0,1) {};
\node[draw, circle, minimum height=0.2cm, minimum width=0.2cm, fill=black] (P42) at (5.0,2.5) {};

\node[draw, circle, minimum height=0.2cm, minimum width=0.2cm, fill=black] (P51) at (8.0,5.0) {};
\node[draw, circle, minimum height=0.2cm, minimum width=0.2cm, fill=black] (P52) at (4.5,5.5) {};

% LINKS %%%%%%%%%%%%%%%%%%%%%%%%%%%%%%%%%%%%%%%%%%%%%%%%%%%%%%%%%%%%%%%%%%

\draw[line width = 1.4pt] (P01) -- (P11);

\draw[line width = 1.4pt, color = green] (P11) -- (P12);
\draw[line width = 1.4pt, color = blue] (P11) -- (P13);
\draw[line width = 1.4pt, color = red] (P11) -- (P21);
\draw[line width = 1.4pt, color = blue] (P12) -- (P14);
\draw[line width = 1.4pt, color = red] (P12) -- (P22);
\draw[line width = 1.4pt, color = red]  (P13) -- (P23);
\draw[line width = 1.4pt, color = green] (P13) -- (P14);
\draw[line width = 1.4pt, color = red]  (P14) -- (P24);
\draw[line width = 1.4pt, color = green] (P21) -- (P22);
\draw[line width = 1.4pt, color = blue] (P21) -- (P23);
\draw[line width = 1.4pt, color = green] (P23) -- (P24);
\draw[line width = 1.4pt, color = blue] (P22) -- (P24);

\draw[line width = 1.4pt, color = purple] (P14) -- (P31);
\draw[line width = 1.4pt, color = purple] (P42) -- (P32);
\draw[line width = 1.4pt, color = orange] (P13) -- (P41);
\draw[line width = 1.4pt, color = orange] (P14) -- (P42);
\draw[line width = 1.4pt, color = orange] (P31) -- (P32);
\draw[line width = 1.4pt, color = green] (P41) -- (P42);

\draw[dashed] (P32) -- (P51);
\draw[dashed] (P24) -- (P52);

% ETIQUETTES

\node[scale=1.2, color = red] at (1.4,3.0) {$E_1$};
\node[scale=1.2, color = blue] at (2.0,0.5) {$E_2$};
\node[scale=1.2, color = green] at (0.5,1.75) {$E_3$};
\node[scale=1.2, color = orange] at (4.0,0.5) {$E_4$};
\node[scale=1.2, color = purple] at (5.5,3.25) {$E_5$};

\node[scale = 1.2] at (-1.4,0.7) {$v_0$};
\node[scale = 1.2] at (0.6,0.7) {$u$};
\node[scale = 1.2] at (2.6,2.2) {$u^+$};
\node[scale = 1.2] at (7.6,5.2) {$v$};
\node[scale = 1.2] at (4.1,5.3) {$x$};

\end{tikzpicture}}
\caption{Two vertices $u$ and $u^+$ respectively basis and anti-basis of an hypercube with classes $L = \set{E_2,E_3}$. POF $L$ is the ladder set of $(u,v)$, not of $(u,x)$. Indeed, $L_{u,x} = \set{E_1,E_2,E_3}$.}
\label{fig:compute_labels}
\end{figure}
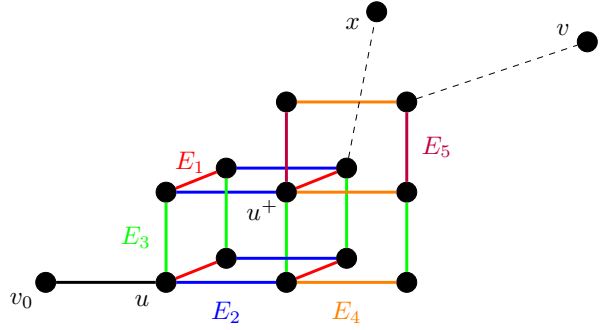

We initialize all labels $\varphi(u,L)$ with 0. We consider the list of hypercubes obtained from Lemma~\ref{le:enum_hypercubes}. We go through this list in the descending order, {\em i.e.} starting with hypercubes having the farthest-to-$v_0$ anti-basis. For each triplet treated, we pick up two of the three elements: the basis $u$ and the sets of $\Theta$-classes of the hypercube that we denote by $L$. Set $L$ is indeed a POF outgoing from $u$. Suppose that we are treating the pair $(u,L)$ corresponding to the hypercube $Q$: all pairs $(u^+,L^+)$, where $u^+$ is the anti-basis of $Q$, have already been considered. Indeed, pairs $(u^+,L^+)$ admit an anti-basis which is farthest to $v_0$ compared to the anti-basis $u^+$ of $(u,L)$.

We give the details of our procedure. Algorithm~\ref{algo:labels} presents it in pseudocode. First, we verify the value we have for variable $\varphi(u,L)$. If $\varphi(u,L) = 0$, then we put $\varphi(u,L) \leftarrow \card{L}$ (line~\ref{line:phi_zero} in Algorithm~\ref{algo:labels}). This case occurs when the hypercube $Q$ with basis $u$ and POF $L$ is ``peripheral'', {\em i.e.} when the anti-basis $u^+$ of $Q$ is the vertex $v$ which is the farthest to $u$ and such that $L_{uv} = L$. The distance $d(u,u^+)$ is equal to the dimension of hypercube $Q$, which is $\card{L}$. If $\varphi(u,L) > 0$, then we consider that the label provides us with the distance from $u$ to the farthest-to-$u$ vertex satisfying $L_{uv} = L$. Indeed, the anti-basis $u^+$ of $Q$ has already been treated by the algorithm. We know from Theorem~\ref{th:pushing_label} that the value $\varphi(u,L)$ can be deduced from a pair $(u^+,L^+)$, where $L^+$ is a POF outgoing from $u^+$ which does not contain a class forming a POF with $L$.

Second, we list all subsets of $\mathcal{E}^-(u)$. The idea is to communicate the value $\varphi(u,L)$ to certain vertices which are the bases of hypercubes with anti-basis $u$. Let us consider one subset, $X \subseteq \mathcal{E}^-(u)$.  We know from Lemma~\ref{le:pof_hypercube} that there is an hypercube $Q^-$ with anti-basis $u$ formed by the classes $X$. We denote by $u^-$ the basis of the hypercube $Q^-$ (line~\ref{line:u_moins}). We verify whether there is a class $E_j \in L$ such that $X \cup \set{E_j}$ is a POF (line~\ref{line:check}). If the answer is no, then there is a vertex $v$ such that $L_{u^-v} = X$ which is at distance $\card{X}+\varphi(u,L)$ from $u$, according to Theorem~\ref{th:pushing_label}. We update variable $\varphi(u^-,X)$ (line~\ref{line:update_phi}):
\[\varphi(u^-,X) \leftarrow \max \set{\varphi(u^-,X), \card{X}+\varphi(u,L)}.
\]

\begin{theorem}
The execution of Algorithm~\ref{algo:labels} produces labels $\varphi(u,L)$ which are the maximum distances $d(u,v)$, where $v$ satisfying $L_{u,v} = L$.
\label{th:algo_labels_works}
\end{theorem}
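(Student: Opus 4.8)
The plan is to prove the two inequalities $\varphi(u,L)\le\Phi(u,L)$ and $\varphi(u,L)\ge\Phi(u,L)$, where $\Phi(u,L):=\max\set{d(u,v) \ : \ u\in I(v_0,v),\ L_{u,v}=L}$ is the target quantity. This maximum ranges over a nonempty set, since the anti-basis $u^{+}$ of the hypercube $Q$ with basis $u$ and signature $L$ satisfies $u\in I(v_0,u^{+})$, $L_{u,u^{+}}=L$ and $d(u,u^{+})=\card{L}$, whence $\Phi(u,L)\ge\card{L}$. Both inequalities are driven by Theorem~\ref{th:pushing_label}, whose two implications are exactly what line~\ref{line:check} and line~\ref{line:update_phi} of Algorithm~\ref{algo:labels} implement: for the hypercube with basis $u$, anti-basis $u^{+}$ and signature $L$, a vertex $v\neq u^{+}$ satisfies $u\in I(v_0,v)$ and $L_{u,v}=L$ if and only if $u^{+}\in I(v_0,v)$ and no class $E_j\in L_{u^{+},v}$ makes $L\cup\set{E_j}$ a POF.

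The first thing to pin down is a scheduling invariant coming from the partial order of Lemma~\ref{le:enum_hypercubes}: running through $\mathcal{Q}^{*}=\mbox{\textbf{reverse}}(\mathcal{Q})$ visits hypercubes by non-increasing distance from $v_0$ to their anti-basis. Every write to a fixed label $\varphi(u,L)$ performed on line~\ref{line:update_phi} occurs while the algorithm processes a hypercube whose \emph{basis} is $u^{+}$, the anti-basis of the hypercube $(u,L)$; any such hypercube has its own anti-basis at distance $d(v_0,u^{+})+\card{M}\ge d(v_0,u^{+})$, with equality only for the $0$-dimensional hypercube $\set{u^{+}}$, whose contribution to $\varphi(u,L)$ is the dominated value $\card{L}$. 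Since $\varphi(u,L)$ is finalised on line~\ref{line:phi_zero} while processing $(u^{+},u,L)$ itself, with anti-basis $u^{+}$ at distance exactly $d(v_0,u^{+})$, every non-trivial contribution to $\varphi(u,L)$ is in place by the time line~\ref{line:phi_zero} fires; dually, when line~\ref{line:update_phi} reads a label $\varphi(u^{+},X')$ in order to propagate it downward, the value read is already the final one. This scheduling bookkeeping is the step I expect to be the most delicate, as it is precisely what excludes ``late'' updates.

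For soundness I would show, by induction along the sequence of assignments, the stronger statement that every value a label $\varphi(u,L)$ ever holds is \emph{realised}: it is $0$, or it equals $d(u,v)$ for some $v$ with $u\in I(v_0,v)$ and $L_{u,v}=L$ (hence it is $\le\Phi(u,L)$). Line~\ref{line:phi_zero} assigns $\card{L}$, realised by $u^{+}$. On line~\ref{line:update_phi} the candidate value is $\card{X}+\varphi(u,L)$: if $\varphi(u,L)=0$ it equals $\card{X}$, realised by $u$ (the anti-basis of the hypercube with basis $u^{-}$ and signature $X$); if $\varphi(u,L)>0$, take a $v$ realising $\varphi(u,L)$, note $v\neq u$, and apply the ``$\Leftarrow$'' direction of Theorem~\ref{th:pushing_label} with $(u^{-},X)$ in the role of $(u,L)$ and $u$ in that of $u^{+}$: since line~\ref{line:check} left $\checkperp$ false, no $E_j\in L=L_{u,v}$ makes $X\cup\set{E_j}$ a POF, so $u^{-}\in I(v_0,v)$ and $L_{u^{-},v}=X$; moreover $\sigma_{u^{-},u}$ and $\sigma_{u,v}$ are disjoint (a common class would already lie in $X$, contradicting the check), so the concatenation of a shortest $(u^{-},u)$-path through the hypercube with a shortest $(u,v)$-path is a shortest path by Theorem~\ref{th:at_most_one}, giving $d(u^{-},v)=\card{X}+\varphi(u,L)$. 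Taking a maximum with a previously realised value preserves the property.

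For completeness I would induct on $k:=\Phi(u,L)$. Fix a witness $v$. If $v=u^{+}$ then $k=\card{L}$, and since every update to $\varphi(u,L)$ contributes at least $\card{L}$ while line~\ref{line:phi_zero} sets exactly $\card{L}$ in the absence of any update, the final value is $\ge\card{L}=k$. If $v\neq u^{+}$, the ``$\Rightarrow$'' direction of Theorem~\ref{th:pushing_label} gives $u^{+}\in I(v_0,v)$ and shows that $X':=L_{u^{+},v}$, a nonempty POF whose classes are outgoing from $u^{+}$ (Theorem~\ref{th:ladder_pof}), contains no class forming a POF with $L$; applying the ``$\Leftarrow$'' direction to an arbitrary witness of $\Phi(u^{+},X')$ then yields $\Phi(u^{+},X')=k-\card{L}<k$. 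By the induction hypothesis the final value of $\varphi(u^{+},X')$ is $\ge k-\card{L}$, hence, by the scheduling invariant, so is its value at the moment the algorithm processes the hypercube with basis $u^{+}$ and signature $X'$; at that moment line~\ref{line:check} does not trip (no $E_j\in X'$ makes $L\cup\set{E_j}$ a POF), so line~\ref{line:update_phi} raises $\varphi(u,L)$ to at least $\card{L}+(k-\card{L})=k$, a value it can never lose. Together with soundness, this gives $\varphi(u,L)=\Phi(u,L)$.
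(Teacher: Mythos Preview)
Your proof is correct and follows the same strategy as the paper---both hinge on the two directions of Theorem~\ref{th:pushing_label} and on the processing order of $\mathcal{Q}^{*}$---but you carry it out with considerably more care: the paper's argument is essentially a two-case sketch, whereas you separate soundness from completeness, make the scheduling invariant explicit (including the harmless tie with the $0$-dimensional hypercube at $u^{+}$), and verify the distance identity $d(u^{-},v)=\card{X}+d(u,v)$ rather than leaving it implicit. One tiny remark: in your soundness step the disjointness of $\sigma_{u^{-},u}=X$ and $\sigma_{u,v}$ actually uses that a common class, being in $X\subseteq\mathcal{E}^{-}(u)$ and in $\sigma_{u,v}$, would lie in $L_{u,v}=L$ and hence in $X\cap L$, which the passed check forces to be empty---your parenthetical is correct but could be read as circular, so spelling out that intermediate ``hence $E_j\in L$'' would make it airtight.
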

\begin{proof}
The algorithm treats all pairs $(u,L)$ such that $u$ is a vertex and $L$ is a POF outgoing from $u$. The first action of the algorithm is to check the current value of $\varphi(u,L)$.

If $\varphi(u,L) = 0$, it means that there is no edge outgoing from $u^+$, the anti-basis of the hypercube with basis $u$ and signature $L$. In this case, $u^+$ is the farthest-to-$u$ vertex with ladder set $L$. So, $\varphi(u,L) = d(u,u^+) = \card{L}$.

If $\varphi(u,L) > 0$, there is an hypercube with basis $u^+$ and signature $L^+$, where $L \cup \set{E_i}$ is not a POF for any $E_i \in L^+$. According to Theorem~\ref{th:pushing_label}, the algorithm ensures us that the label of $(u,L)$ is correctly computed as we consider all pairs $(u^+,L^+)$ susceptible to influence value $\varphi(u,L)$.
\end{proof}

We focus now on the running time of Algorithm~\ref{algo:labels}. For each pair $(u,L)$ (said differently for any hypercube of $G$), we list all subsets of set $\mathcal{E}^-(u)$ which is of size at most $d$ (Lemma~\ref{le:pof_hypercube}). For each of these subsets $X$, we verify whether $X \cup \set{E_j}$ is a POF, where $E_j \in L$. The number of hypercubes is upper-bounded by $2^dn$ (Lemma~\ref{le:number_hypercubes}) and the number of subsets $X$ is at most $2^d$. Determining whether $X \cup \set{E_j}$ is a POF can be done in $O(1)$ by checking whether vertex $u^-$ is adjacent to an edge of $E_j$. This argument comes from the convexity of boundaries, already used twice in the proof of Theorem~\ref{th:pushing_label}.  So, the total running time is $O(d2^{2d}n)$.

\subsection{Maximum-weighted disjoint sets} \label{subsec:mwds}

In this section, we use the labels computed in Section~\ref{subsec:labels} to determine the diameter $\diam(G)$ of the median graph $G$. The idea is to explore ladder pairs $(L_{m,u},L_{m,v})$ for all vertices $m\in V$ and to find the one that maximizes the distance from a vertex $u$ to a vertex $v$ which satisfy $m = m(u,v,v_0)$.

\textbf{A relationship between the diameter and ladder pairs.} We begin with a theorem which allows us to characterize the diametral pair regarding the ladder pairs.

\begin{theorem}
Let $m \in V$ and $L,L^*$ be two POFs outgoing from $m$. Let $u,v$ be two vertices such that $m$ belong to both $I(v_0,u)$ and $I(v_0,v)$. Suppose they fulfil $L_{m,u} = L$ and $L_{m,v} = L^*$. Then, $m \in I(u,v)$ if and only if $L \cap L^* = \emptyset$.
\label{th:disjoint_POF}
\end{theorem}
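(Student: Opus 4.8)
\textbf{Proof plan for Theorem~\ref{th:disjoint_POF}.}

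The plan is to prove both directions by combining the structure of down-up shortest paths with the characterization of signatures (Theorem~\ref{th:signature}, Theorem~\ref{th:at_most_one}). The key identity I would establish first is the relationship between the signatures: since $m = m(u,v,v_0)$ lies on a shortest $(u,v)$-path (by the median property), and since $m \in I(v_0,u) \cap I(v_0,v)$, every $\Theta$-class separating $u$ from $m$ is distinct from every class separating $m$ from $v$ — otherwise some class would separate $v_0$ from both $u$ and $v$ while also being ``used twice'' along the $u$-to-$v$ route, contradicting convexity of its halfspaces. More precisely, I would argue that $\sigma_{u,v} = \sigma_{m,u} \,\triangle\, \sigma_{m,v}$ always, and that $m \in I(u,v)$ holds exactly when $\sigma_{m,u}$ and $\sigma_{m,v}$ are disjoint (equivalently $d(u,v) = d(u,m) + d(m,v)$, i.e. $\card{\sigma_{u,v}} = \card{\sigma_{m,u}} + \card{\sigma_{m,v}}$).

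For the forward direction, suppose $m \in I(u,v)$. Then $\sigma_{m,u}$ and $\sigma_{m,v}$ are disjoint (a class in both would appear twice on a shortest $(u,v)$-path through $m$, impossible by Theorem~\ref{th:signature}). Since $L = L_{m,u} \subseteq \sigma_{m,u}$ and $L^* = L_{m,v} \subseteq \sigma_{m,v}$ by Definition~\ref{def:ladder}, we immediately get $L \cap L^* = \emptyset$. For the converse, suppose $m \notin I(u,v)$; then $\sigma_{m,u} \cap \sigma_{m,v} \neq \emptyset$, so pick a class $E_i$ in the intersection. The task is to promote this common class of the \emph{signatures} to a common class of the \emph{ladder sets} $L$ and $L^*$. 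Here I would use the down-up path structure: a shortest $(u,v)$-path need not pass through $m$, but consider instead a shortest $(u,m)$-path $P_1$ and a shortest $(m,v)$-path $P_2$, both of which exist since $m$ is metrically between $v_0$ and each of $u,v$. The class $E_i$ appears once on $P_1$ and once on $P_2$. Using convexity of the halfspace $H_i'$ (containing $v_0$) together with the fact that $m \in H_i'$ for... wait — actually $E_i \in \sigma_{m,u}$ means $m$ and $u$ are separated by $E_i$, so $m \in H_i'$ and $u \in H_i''$ (and likewise $v \in H_i''$). I would then argue, via convexity of $\partial H_i''$ or of $H_i''$ and the isomorphism property (Lemma~\ref{le:boundaries}), that the edge of $E_i$ on $P_1$ and the edge of $E_i$ on $P_2$ can both be taken adjacent to $m$ — i.e., $m \in \partial H_i'$ and both shortest-path prefixes can be rerouted to leave $m$ through an $E_i$-edge first. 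This would put $E_i$ into both $L_{m,u}$ and $L_{m,v}$, contradicting $L \cap L^* = \emptyset$.

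The main obstacle I anticipate is precisely this last rerouting step: showing that if $E_i$ separates $m$ from $u$, then $E_i$ \emph{must} have an edge adjacent to $m$ (so that $E_i \in L_{m,u}$), rather than merely appearing somewhere later on the $(m,u)$-path. This is not true for an arbitrary class in $\sigma_{m,u}$ (cf. the remark after Theorem~\ref{th:ladder_pof} and the role of $E_5$ in Figure~\ref{fig:ladder_sets}) — it needs the hypothesis that $E_i$ \emph{also} separates $m$ from $v$. The resolution should come from applying convexity simultaneously to the two paths: since $H_i'$ is convex and contains $v_0$, $m$, and the first vertex of $P_1$ that lies in $H_i'$... more carefully, I would take $m' = $ the gate of $u$ in $H_i'$ and $m'' = $ the gate of $v$ in $H_i'$, note that both lie on the respective shortest paths, and use that $m = m(u,v,v_0)$ forces $m$ itself to be this common gate — then the edge of $E_i$ leaving $m$ toward the $H_i''$-side is shared by (rerouted versions of) both $P_1$ and $P_2$, giving $E_i \in L \cap L^*$. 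I would double-check this with the concrete picture of Figure~\ref{fig:ladder_pairs} before committing to the argument.
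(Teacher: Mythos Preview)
Your forward direction is fine and matches the paper: if $m\in I(u,v)$ then $\sigma_{m,u}\cap\sigma_{m,v}=\emptyset$, hence $L\cap L^*=\emptyset$.

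The converse has a genuine gap. You pick an \emph{arbitrary} class $E_i\in\sigma_{m,u}\cap\sigma_{m,v}$ and try to show it is adjacent to $m$. That statement is simply false: on the path $v_0\!-\!m\!-\!a\!-\!b$ with two pendant edges $b\!-\!u$ and $b\!-\!v$, both $\sigma_{m,u}$ and $\sigma_{m,v}$ contain the class of $(a,b)$, yet that class is not adjacent to $m$. Your gate argument does not rescue this: in that same example the gate of $u$ (and of $v$) in $H_i'$ is $a$, not $m$. You also invoke ``$m=m(u,v,v_0)$'', but the hypothesis only gives $m\in I(v_0,u)\cap I(v_0,v)$; in the converse direction $m\in I(u,v)$ is precisely the conclusion, so assuming $m$ is the median is circular.

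What the paper does --- and what your sketch is missing --- is a careful \emph{choice} of the class. Among all $E_j\in\sigma_{m,u}\cap\sigma_{m,v}$, pick one that is minimal in the sense that no other class of the intersection has its edge closer to $m$ on \emph{both} fixed shortest paths $P_{m,u}$ and $P_{m,v}$. Writing $u_j\in\partial H_j'$ (resp.\ $v_j\in\partial H_j'$) for the $H_j'$-endpoint of the $E_j$-edge on $P_{m,u}$ (resp.\ $P_{m,v}$), minimality guarantees that the $(m,u_j)$- and $(m,v_j)$-sections share no $\Theta$-class, so their concatenation is a shortest $(u_j,v_j)$-path through $m$. Now convexity of $\partial H_j'$ (Lemma~\ref{le:boundaries}) forces $m\in\partial H_j'$, hence $E_j\in L_{m,u}\cap L_{m,v}$. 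The minimality trick is the missing idea; once you have it, the argument is short.
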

\begin{proof}
First, suppose that $L \cap L^* \neq \emptyset$. Then $\sigma_{m,u} \cap \sigma_{m,v} \neq \emptyset$. As a consequence, $m \notin I(u,v)$: the concatenation of a shortest $(u,m)$-path with a shortest $(m,v)$-path does not produce a shortest $(u,v)$-path because it contains two edges from the same class (Theorem~\ref{th:signature}).

Second, suppose that $L \cap L^* = \emptyset$. We prove that $m \in I(u,v)$. Let $P_{m,u}$ (resp. $P_{m,v}$) be an arbitrary shortest $(m,u)$-path (resp. $(m,v)$-path). If $\sigma_{m,u} \cap \sigma_{m,v} = \emptyset$, then the concatenation of $P_{m,u}$ and $P_{m,v}$ produces a shortest $(u,v$)-path traversing $m$, so $m \in I(u,v)$. This is why we suppose, by way of contradiction, that there is a class $E_{i}$ appearing both in $P_{m,u}$ and $P_{m,v}$. There is one edge $(u_i,u_i^*)$ of $E_{i}$ in $P_{m,u}$ and one edge $(v_i,v_i^*)$ of $E_{i}$ in $P_{m,v}$. This notation will be used for any class belonging to both paths, {\em i.e.} in $\sigma_{m,u} \cap \sigma_{m,v}$.

Instead of considering an arbitrary class $E_i$ admitting one edge in both paths, we denote by $E_j$ a class of $\sigma_{m,u} \cap \sigma_{m,v}$ satisfying a certain property: there is no other class $E_h \in \sigma_{m,u} \cap \sigma_{m,v}$ with both $(u_h,u_h^*)$ closer to $m$ than $(u_j,u_j^*)$ in $P_{m,u}$ and $(v_h,v_h^*)$ closer to $m$ than $(v_j,v_j^*)$ in $P_{m,v}$. Said differently, we fix $E_j$ such that another class of $\sigma_{m,u} \cap \sigma_{m,v}$ that would be closer to $m$ in both paths does not exist. Obviously, such a class $E_j$ exists as set $\sigma_{m,u} \cap \sigma_{m,v}$ is finite.

\begin{figure}[h]
\centering
\scalebox{0.8}{\begin{tikzpicture}

% NODES %%%%%%%%%%%%%%%%%%%%%%%%%%%%%%%%%%%%%%%%%%%%%%%%%%%%%%%%%%%%%%%%%%

\node[draw, circle, minimum height=0.2cm, minimum width=0.2cm, fill=black] (P1) at (1,4) {};

\node[draw, circle, minimum height=0.2cm, minimum width=0.2cm, fill=black] (P2) at (2.5,3.5) {};

\node[draw, circle, minimum height=0.2cm, minimum width=0.2cm, fill=red] (P3) at (4,3.1) {};

\node[draw, circle, minimum height=0.2cm, minimum width=0.2cm, fill=red] (P4) at (5.5,2.8) {};
\node[draw, circle, minimum height=0.2cm, minimum width=0.2cm, fill=black] (P42) at (5.0,4.1) {};

\node[draw, circle, minimum height=0.2cm, minimum width=0.2cm, fill=red] (P5) at (7,2.5) {};
\node[draw, circle, minimum height=0.2cm, minimum width=0.2cm, fill=black] (P52) at (6.5,3.8) {};
\node[draw, circle, minimum height=0.2cm, minimum width=0.2cm, fill=black] (v0) at (7,1.0) {};

\node[draw, circle, minimum height=0.2cm, minimum width=0.2cm, fill=red] (P6) at (8.5,2.8) {};

\node[draw, circle, minimum height=0.2cm, minimum width=0.2cm, fill=black] (P7) at (10.0,3.1) {};

\node[draw, circle, minimum height=0.2cm, minimum width=0.2cm, fill=black] (P8) at (11.5,3.5) {};

% LINKS %%%%%%%%%%%%%%%%%%%%%%%%%%%%%%%%%%%%%%%%%%%%%%%%%%%%%%%%%%%%%%%%%%

\draw[->,dotted,line width = 1.4pt] (v0) -- (P5);

\draw[<-,line width = 1.4pt] (P1) -- (P2);
\draw[<-,line width = 1.8pt, color = blue] (P2) -- (P3);
\draw[<-,line width = 1.4pt] (P3) -- (P4);
\draw[<-,line width = 1.4pt] (P4) -- (P5);
\draw[->,line width = 1.4pt] (P5) -- (P6);
\draw[->,line width = 1.8pt, color = blue] (P6) -- (P7);
\draw[->,line width = 1.4pt] (P7) -- (P8);

\draw[->,line width = 1.8pt, color = blue] (P4) -- (P42);
\draw[->,line width = 1.8pt, color = blue] (P5) -- (P52);

% ETIQUETTES

\node[scale=1.2, color = blue] at (3.3,3.6) {$E_j$};
\node[scale=1.2, color = red] at (5.8,2.1) {$\partial H_j'$};

\node[scale = 1.2] at (7.4,1.2) {$v_0$};
\node[scale = 1.2] at (7.3,2.8) {$m$};
\node[scale = 1.2, color = red] at (4.0,2.7) {$u_j$};
\node[scale = 1.2, color = red] at (8.5,2.4) {$v_j$};
\node[scale = 1.2] at (2.5,3.0) {$u_j^*$};
\node[scale = 1.2] at (10.0,2.6) {$v_j^*$};

\end{tikzpicture}}
\caption{Illustration of the convexity of $\partial H_j'$}
\label{fig:convexity_boundary}
\end{figure}
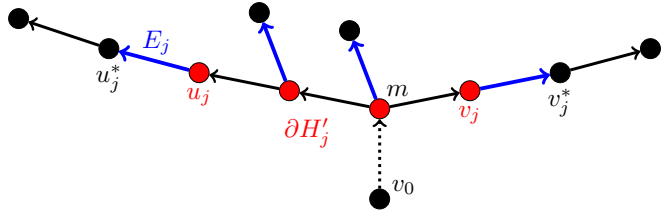

Let $u_j$ (resp. $v_j$) be the vertex in $\partial H_j'$ of path $P_{m,u}$ (resp. $P_{m,v}$). In other words, $u_j$ is the vertex adjacent to the edge of $E_j$ in $P_{m,u}$ which is the closest to $v_0$, so the closest to $m$. With the definition of class $E_j$, we have $m \in I(u_j,v_j)$. Indeed, the concatenation of the $(m,u_j)$-section of $P_{m,u}$ and of the $(m,v_j)$-section of $P_{m,v}$ produces a shortest $(u_j,v_j)$-path passing through $m$ as it does not contain two edges of the same class. As $u_j,v_j \in \partial H_j'$ and $\partial H_j'$ is convex, then vertex $m$ is adjacent to an edge of $E_j$: $m \in \partial H_j'$. Figure~\ref{fig:convexity_boundary} illustrates the latter assertion: the edges of $E_j$ are drawn in blue, the vertices of $\partial H_j'$ in red. As $E_j \in \sigma_{m,u} \cap \sigma_{m,v}$, we have $E_j \in L_{m,u} \cap L_{m,v}$ which is a contradiction.
\end{proof}

Thanks to Theorem~\ref{th:disjoint_POF}, we establish a relationship between the diameter $\diam(G)$ and the ladder pairs. For $m \in V$, let $\Upsilon(m)$ be the distance of the longest shortest $(u,v)$-path such that $m = m(u,v,v_0)$. The maximum $\Upsilon(G)$ over all $\Upsilon(m)$, $m \in V$ is naturally: $\Upsilon(G) = \diam(G)$.

\begin{corollary} For any $m \in V$, value $\Upsilon(m)$ is the maximum of $\varphi(m,L) + \varphi(m,L^*)$ for two POFs $L$ and $L^*$ outgoing from $m$ and with an empty intersection. Formally,
\begin{equation}
\Upsilon(m) = \max_{\substack{ L \cap L^* = \emptyset \\  \mbox{{\em \scriptsize{outgoing from}}}~ m}} \varphi(m,L) + \varphi(m,L^*).
\label{eq:upsilon_m}
\end{equation}

The diameter of $G$ can thus be written:
\begin{equation}
\diam(G) = \Upsilon (G) = \max_{m \in V} \max_{\substack{ L \cap L^* = \emptyset \\ \mbox{{\em \scriptsize{outgoing from}}}~ m}} \varphi(m,L) + \varphi(m,L^*).
\label{eq:upsilon_G}
\end{equation}
\label{co:diameter_sets}
\end{corollary}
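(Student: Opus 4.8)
The plan is to prove Equation~\eqref{eq:upsilon_m} first, since Equation~\eqref{eq:upsilon_G} then follows immediately: $\diam(G) = \max_{u,v} d(u,v)$, and every pair $(u,v)$ determines its median $m = m(u,v,v_0)$, so $\diam(G) = \max_{m \in V} \Upsilon(m)$ by the very definition of $\Upsilon(m)$ as the longest shortest $(u,v)$-path with $m = m(u,v,v_0)$. Hence it suffices to establish the inner maximization.

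For Equation~\eqref{eq:upsilon_m}, I would argue by two inequalities. For the inequality ``$\le$'': take a pair $(u,v)$ realizing $\Upsilon(m)$, so $m = m(u,v,v_0)$ and $d(u,v) = \Upsilon(m)$. Since $m \in I(v_0,u)$ and $m \in I(v_0,v)$, the ladder sets $L := L_{m,u}$ and $L^* := L_{m,v}$ are well-defined, and by Theorem~\ref{th:ladder_pof} each is a POF outgoing from $m$. Because $m \in I(u,v)$, Theorem~\ref{th:disjoint_POF} gives $L \cap L^* = \emptyset$. Now $m \in I(u,v)$ also means $d(u,v) = d(u,m) + d(m,v) = d(m,u) + d(m,v)$; since $u$ is a vertex with $m \in I(v_0,u)$ and $L_{m,u} = L$, we have $d(m,u) \le \varphi(m,L)$ by definition of the label, and similarly $d(m,v) \le \varphi(m,L^*)$. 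Thus $\Upsilon(m) = d(u,v) \le \varphi(m,L) + \varphi(m,L^*)$, which is at most the right-hand maximum over disjoint outgoing POFs.

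For the inequality ``$\ge$'': let $L$ and $L^*$ be disjoint POFs outgoing from $m$ attaining the right-hand maximum. By Theorem~\ref{th:algo_labels_works} (via Algorithm~\ref{algo:labels}), there is a vertex $u$ with $m \in I(v_0,u)$, $L_{m,u} = L$, and $d(m,u) = \varphi(m,L)$; symmetrically a vertex $v$ with $m \in I(v_0,v)$, $L_{m,v} = L^*$, and $d(m,v) = \varphi(m,L^*)$. Since $L \cap L^* = \emptyset$, Theorem~\ref{th:disjoint_POF} yields $m \in I(u,v)$, so $d(u,v) = d(m,u) + d(m,v) = \varphi(m,L) + \varphi(m,L^*)$. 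It remains to check that $m = m(u,v,v_0)$: since $m \in I(u,v)$, $m \in I(v_0,u)$, $m \in I(v_0,v)$, vertex $m$ lies on the three relevant intervals, and by uniqueness of medians in a median graph $m = m(u,v,v_0)$. Hence $\varphi(m,L) + \varphi(m,L^*) = d(u,v) \le \Upsilon(m)$.

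The main obstacle I anticipate is the bookkeeping around the label semantics: one must be careful that $\varphi(m,L)$ is defined as a maximum over \emph{arrival} vertices $v$ with $m \in I(v_0,v)$ and $L_{m,v} = L$ (with $m$ playing the role of the departure/basis), and that Theorem~\ref{th:algo_labels_works} actually certifies the existence of a witness vertex achieving this maximum, not merely an upper bound. Everything else is a routine splicing together of Theorem~\ref{th:disjoint_POF} (which converts disjointness of ladder sets into ``$m$ is between $u$ and $v$'') with the additivity of distances through a vertex on a shortest path, plus the uniqueness of medians to pin down $m = m(u,v,v_0)$.
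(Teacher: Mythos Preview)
Your argument is correct and follows essentially the same route as the paper: both split Equation~\eqref{eq:upsilon_m} into two inequalities, use Theorem~\ref{th:disjoint_POF} to pass between ``$L \cap L^* = \emptyset$'' and ``$m \in I(u,v)$'', and invoke uniqueness of medians to conclude $m = m(u,v,v_0)$ for the witness pair. The only cosmetic difference is that the paper wraps each direction in a contradiction, whereas you write direct inequalities; your version is arguably cleaner. One small remark: you do not need Theorem~\ref{th:algo_labels_works} to obtain a witness vertex realizing $\varphi(m,L)$ --- the label is \emph{defined} as a maximum over the (nonempty) set of vertices $v$ with $m \in I(v_0,v)$ and $L_{m,v} = L$, so a maximizer exists by finiteness; the paper likewise just takes such witnesses directly.
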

\begin{proof}
Let $m \in V$ and $u,v$ be the pair of vertices satisfying $m = m(u,v,v_0)$ which maximizes $d(u,v)$. We know from Theorem~\ref{th:disjoint_POF} that the ladder sets $L_{m,u}$ and $L_{m,v}$ are outgoing from $m$ and have an empty intersection. We have $\varphi(m,L_{m,u}) = d(m,u)$, otherwise there would exist another vertex $u^*$, where $d(m,u^*) > d(m,u)$ and $L_{m,u} = L_{m,u^*}$, and we would obtain $d(u^*,v) = d(u^*,m) + d(m,v) > d(u,v)$, a contradiction. For the same reason, $\varphi(m,L_{m,v}) = d(m,v)$. So, we know that there is a pair $L_{m,u}\cap L_{m,v} = \emptyset$ of POFs outgoing from $m$ such that $d(u,v) = \varphi(m,L_{m,u}) + \varphi(m,L_{m,v})$. 

Now, suppose that there is another pair $L \cap L^* = \emptyset$ outgoing from $m$ such that $d(u,v) < \varphi(m,L) + \varphi(m,L^*)$. Theorem~\ref{th:disjoint_POF} implies that there are two vertices $x,y$ verifying $d(m,x) = \varphi(m,L)$, $d(m,y) = \varphi(m,L^*)$, and $m = m(x,y,v_0)$. So, $d(x,y) > d(u,v)$. As $m = m(x,y,v_0)$, we have a contradiction with the definition of $u,v$. In brief, $L_{m,u}, L_{m,v}$ is the pair of disjoint POFs outgoing from $m$ which maximizes the sum of labels. Equation~\eqref{eq:upsilon_m} holds.

Equation~\eqref{eq:upsilon_G} is a direct consequence of Equation~\eqref{eq:upsilon_m} because the diameter is the maximum of $\Upsilon(m)$ over all vertices $m \in V$ by definition.
\end{proof}

According to this corollary, a way to obtain the diameter is to find the triplet $(m,L,L^*)$ which maximizes $\varphi(m,L) + \varphi(m,L^*)$, where $L$ and $L^*$ are disjoint POFs outgoing from $m$. Our idea is to determine $\Upsilon(m)$ for each $m \in V$ and then to identifies the maximum of these values. We thus propose an algorithm returning all values $\Upsilon(m)$ in linear time for constant $d$.

\textbf{Computation of $\Upsilon(m)$.} We fix some $m \in V$. Let $\mathcal{L}_m$ be the set of POFs outgoing from $m$. Its cardinality is denoted by $N_m$. We know that $\mathcal{L}_m$ is closed under subsets and, for any $L \in \mathcal{L}_m$, $\card{L} \leq d$. A positive integer $\varphi(m,L)$ is associated with any of these sets. From now on, we denote them by $\omega(L)$ to be concise: $\omega(L) = \varphi(m,L)$. The goal is to find the pair $L,L^* \in \mathcal{L}_m$, $L \cap L^* = \emptyset$ maximizing $\omega(L) + \omega(L^*)$. We call this problem \textsc{maximum-weighted disjoint sets} (MWDS).

\begin{definition}[Maximum-weighted disjoint sets]~

\textbf{Input:} Elements $\mathcal{E}$, collection $\mathcal{L}_m$ of sets of elements, weights $\omega : \mathcal{L}_m \rightarrow \mathbb{N}^+$, parameter $d \in \mathbb{N}$. For each $L \in \mathcal{L}_m$, $\card{L}\le d$.

\textbf{Output:} Pair $L,L^* \in \mathcal{L}_m$, $L \cap L^* = \emptyset$ which maximizes $\omega(L) + \omega(L^*)$.
\end{definition}

We design an algorithm solving MWDS in time $O(2^{d\log d}N_m)$. Before describing it, we need some notation. Given $L \in \mathcal{L}_m$, we denote by $\opp(L)$ the \textit{opposite} of set $L$, {\em i.e.} the set in $\mathcal{L}_m$ verifying $L \cap \opp(L) = \emptyset$ with maximum weight $\omega(\opp(L))$. Our algorithm will consist in identifying the opposite $\opp(L)$ of each $L \in \mathcal{L}_m$.

We construct a tree $T_m$ which allows us to compute all opposites. To avoid confusions, the vertices of $T_m$ are called \textit{nodes}. Nodes of $T_m$ are indexed with POFs in $\mathcal{L}_m$ and edges of $T_m$ are indexed with $\Theta$-classes. For any node $a \in T_m$, we denote by $R(a)$ the union of indices of the edges which are on the simple path from the root of $T_m$ to node $a$. With our construction, we announce that any set $R(a)$ will be a POF.

We begin with an iterative presentation of our algorithm. We identify first the set $L_0 \in \mathcal{L}_m$ with the maximum weight. The running time of such step is $O(N_m)$. The root of our tree is indexed with $L_0$ (line~\ref{line:init_tree} of Algorithm~\ref{algo:tree}). At this step, we know that all sets $L\in \mathcal{L}_m$ verifying $L_0 \cap L = \emptyset$ admit $L_0$ as their opposite: $\opp(L) = L_0$. The remaining sets are the ones having a nonempty intersection with $L_0$.

\begin{figure}[h]
\centering
\begin{subfigure}[b]{0.54\columnwidth}
\centering
\scalebox{0.8}{\begin{tikzpicture}

% BLOCS %%%%%%%%%%%%%%%%%%%%%%%%%%%%%%%%%%%%%%%%%%%%%%%%%%%%%%%%%%%%%%%%%%

\draw [color = black, fill = white] (6.0,10.0) -- (6.0,11.0) -- (9.0,11.0) -- (9.0,10.0) --  (6.0,10.0);
\draw [color = black, fill = white] (3.0,7.5) -- (3.0,8.5) -- (6.0,8.5) -- (6.0,7.5) -- (3.0,7.5);
\draw [color = black, fill = white] (1.0,5.0) -- (1.0,6.0) -- (4.0,6.0) -- (4.0,5.0) -- (1.0,5.0);
%\draw [dashed, color = blue, fill = white!92!blue] (4.7,5.3) -- (10.6,5.3) -- (10.6,4.1) -- (8.5,4.1) -- (8.5,3.2) -- (4.7,3.2) -- (4.7,5.3);
% CUBES

\node (P1) at (7.5,10.5) {$\set{E_i,E_j}$};
\node (P2) at (4.5,8.0) {$\set{E_j,E_h,E_r}$};
\node (P3) at (2.5,5.5) {$\set{E_{\ell}}$};

% ETIQUETTES

%\node[scale=1.1, color = blue] at (12.0,10.0) {Depth $0$};
%\node[scale=1.1, color = blue] at (12.0,7.5) {Depth $1$};
%\node[scale=1.1, color = blue] at (12.0,5.0) {Depth $2$};

\node[scale=1.1, color = red] at (8.7,11.3) {$a_0$};
\node[scale=1.1, color = red] at (5.7,8.8) {$a_1$};
\node[scale=1.1, color = red] at (3.7,6.3) {$a_2$};

\node[scale=1.1] at (4.0,9.8) {$E_i$};
\node[scale=1.1] at (2.1,7.3) {$E_h$};

\node[scale=1.1] at (9.5,9.8) {$E_j$};

\node[scale=1.1] at (7.0,7.3) {$E_j$};
%\node[scale=1.1] at (5.2,7.1) {$E_j$};

% LINKS %%%%%%%%%%%%%%%%%%%%%%%%%%%%%%%%%%%%%%%%%%%%%%%%%%%%%%%%%%%%%%%%%%
%\draw[line width = 2pt, color=blue, dashed] (10.5,10.5)--(12.8,10.5);
%\draw[line width = 2pt, color=blue, dashed] (7.5,8.0)--(12.8,8.0);
%\draw[line width = 2pt, color=blue, dashed] (5.5,5.5)--(12.8,5.5);
%\draw[->,>=latex,line width = 2pt, color=blue] (13.2,12.0)--(13.2,4.0);

\draw[->,>=latex,rounded corners=5pt,line width = 1.4pt] (6.0,10.3) -| (3.5,8.5);
\draw[->,>=latex,rounded corners=5pt,line width = 1.4pt] (3.0,7.8) -| (1.5,6.0);

\draw[->,>=latex,rounded corners=5pt,line width = 1.4pt] (9.0,10.3) -| (10.0,8.5);

%\draw[->,>=latex,line width = 1.4pt] (5.5,7.5) -- (5.5,6.0);
\draw[->,>=latex,rounded corners=5pt,line width = 1.4pt] (6.0,7.8) -| (7.5,6.0);

\end{tikzpicture}}
\caption{Tree $T_m$}
\label{subfig:tree}
\end{subfigure}
\begin{subfigure}[b]{0.44\columnwidth}
\centering
\scalebox{0.8}{\begin{tikzpicture}

% NODES %%%%%%%%%%%%%%%%%%%%%%%%%%%%%%%%%%%%%%%%%%%%%%%%%%%%%%%%%%%%%%%%%%

\node[draw, circle, minimum height=0.2cm, minimum width=0.2cm, fill=black] (P31) at (5,1) {};
\node[draw, circle, minimum height=0.2cm, minimum width=0.2cm, fill=black] (P32) at (5,2.5) {};
\node[draw, circle, minimum height=0.2cm, minimum width=0.2cm, fill=black] (P33) at (6,1.4) {};
\node[draw, circle, minimum height=0.2cm, minimum width=0.2cm, fill=black] (P34) at (6,2.9) {};

\node[draw, circle, minimum height=0.2cm, minimum width=0.2cm, fill=black] (P41) at (7,1) {};
\node[draw, circle, minimum height=0.2cm, minimum width=0.2cm, fill=black] (P42) at (7,2.5) {};

\node[draw, circle, minimum height=0.2cm, minimum width=0.2cm, fill=black] (P51) at (9,1) {};
\node[draw, circle, minimum height=0.2cm, minimum width=0.2cm, fill=black] (P52) at (9,2.5) {};

\node[draw, circle, minimum height=0.2cm, minimum width=0.2cm, fill=black] (P61) at (8.0,1.4) {};
\node[draw, circle, minimum height=0.2cm, minimum width=0.2cm, fill=black] (P62) at (8.0,2.9) {};
\node[draw, circle, minimum height=0.2cm, minimum width=0.2cm, fill=black] (P63) at (10.0,1.4) {};
\node[draw, circle, minimum height=0.2cm, minimum width=0.2cm, fill=black] (P64) at (10.0,2.9) {};

\node[draw, circle, minimum height=0.2cm, minimum width=0.2cm, fill=black] (P71) at (7.0,4.0) {};
\node[draw, circle, minimum height=0.2cm, minimum width=0.2cm, fill=black] (P72) at (8.0,4.4) {};
\node[draw, circle, minimum height=0.2cm, minimum width=0.2cm, fill=black] (P73) at (9.0,4.0) {};
\node[draw, circle, minimum height=0.2cm, minimum width=0.2cm, fill=black] (P74) at (10.0,4.4) {};

\node[draw, circle, minimum height=0.2cm, minimum width=0.2cm, fill=black] (v0) at (4.7,4.2) {};

% LINKS %%%%%%%%%%%%%%%%%%%%%%%%%%%%%%%%%%%%%%%%%%%%%%%%%%%%%%%%%%%%%%%%%%

\draw[line width = 1pt] (P31) -- (P33);
\draw[line width = 1pt] (P32) -- (P34);
\draw[line width = 1pt] (P33) -- (P34);
\draw[line width = 1pt] (P61) -- (P33);
\draw[line width = 1pt] (P62) -- (P34);

\draw[line width = 1pt] (P31) -- (P32);
\draw[line width = 1pt] (P31) -- (P41);
\draw[<-,line width = 1.5pt, color = red] (P32) -- (P42);
\draw[<-,line width = 1.5pt, color = blue] (P41) -- (P42);

\draw[line width = 1pt](P41) -- (P51);
\draw[->,line width = 1.5pt, color = green] (P42) -- (P52);
\draw[line width = 1pt] (P51) -- (P52);

\draw[line width = 1pt] (P41) -- (P61);
\draw[->,line width = 1.5pt, color = orange] (P42) -- (P62);
\draw[line width = 1pt] (P51) -- (P63);
\draw[line width = 1pt] (P52) -- (P64);
\draw[line width = 1pt] (P61) -- (P62);
\draw[line width = 1pt] (P61) -- (P63);
\draw[line width = 1pt] (P62) -- (P64);
\draw[line width = 1pt] (P63) -- (P64);

\draw[line width = 1pt] (P62) -- (P72);
\draw[line width = 1pt] (P64) -- (P74);
\draw[line width = 1pt] (P72) -- (P74);

\draw[->,line width = 1.5pt, color = purple] (P42) -- (P71);
\draw[line width = 1pt] (P52) -- (P73);
\draw[line width = 1pt] (P71) -- (P73);
\draw[line width = 1pt] (P71) -- (P72);
\draw[line width = 1pt] (P73) -- (P74);

\draw[->,line width = 1pt, dotted] (v0) -- (P42);

%etiquettes
\node[scale=1.2] at (6.7,2.2) {$m$};
\node[scale=1.2] at (5.1,4.3) {$v_0$};
\node[scale=1.2] at (9.5,0.7) {$m_{ij}$};
%\node[scale=1.2] at (4.6,0.7) {$m_{i\ell}$};
\node[scale=1.2] at (10.4,4.7) {$m_{jhr}$};
\node[scale=1.4, color = red] at (4.4,2.5) {$E_{\ell}$};
\node[scale=1.4, color = purple] at (7.0,4.6) {$E_{r}$};
\node[scale=1.4, color = blue] at (7.0,0.4) {$E_{i}$};
\node[scale=1.4, color = green] at (9.5,2.2) {$E_{j}$};
\node[scale=1.4, color = orange] at (8.4,3.3) {$E_{h}$};
\end{tikzpicture}}
\caption{Hypercubes with basis $m$}
\label{subfig:star}
\end{subfigure}

\caption{An example of tree $T_m$ which allows us to compute all opposites in $\mathcal{L}_m$.}
\label{fig:tree}
\end{figure}
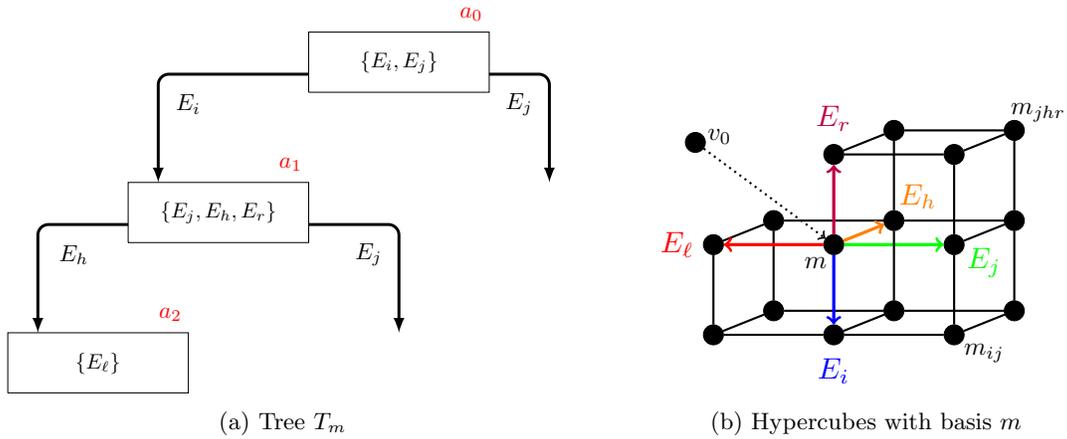

For each class $E_{i_0} \in L_0$, we add a child to $L_0$ and the edge of $T_m$ that connects the root with this child is indexed with $E_{i_0}$. For example, suppose that $L_0 = \set{E_i,E_j}$. Then, the root has two children. The two edges connecting $L_0$ to its children are respectively indexed with $E_i$ and $E_j$ (Figure~\ref{subfig:tree}). The idea is that if the intersection of a given set $L$ with $L_0$ is $\set{E_i}$ for example, then traverse the branch indexed with $E_i$ to find the opposite of $L$. If the intersection is $\set{E_i,E_j}$, then traverse arbitrarily one of the two branches $E_i$ and $E_j$.

The next step consists in indexing the children of $L_0$. We consider one of them being connected to $L_0$ by a branch with index $E_{i_0} \in L_0$. We denote this node by $a_1$. We compute the set $L_1 \in \mathcal{L}_m$ of maximum weight which does not contain any class of $R(a_1) = \set{E_{i_0}}$. We fix $L_1$ as the index of $a_1$. If we go back to our example, with $E_{i_0} = E_i$, then $L_1$ is the maximum-weighted POF in $\mathcal{L}_m$ that does not contain $E_i$. Suppose that $L_1 = \set{E_j,E_h,E_r}$. At this moment, we have treated the sets $L \in \mathcal{L}_m$ such that $L_0 \cap L = \emptyset$: their opposite is $L_0$. With this child $a_1$, we aim at finding the opposite of sets verifying $E_i \in L$. For example, for $L = \set{E_i,E_{\ell}}$, its opposite is indeed $\opp(L)=L_1 = \set{E_j,E_h,E_r}$ because $L \cap L_1 = \emptyset$ and $L_1$ is the maximum-weighted set which does not contain $E_i$. Unfortunately, if $L = \set{E_i,E_h}$, then $L$ and $L_1$ are not disjoint so we still have to find the opposite of $\set{E_i,E_h}$. Put formally, we know that $L_0$ is the opposite of any $L \in \mathcal{L}_m$ satisying $L \cap L_0 = \emptyset$. Moreover, we know that a child $L_1$ of $L_0$, connected via a branch $E_{i_0}\in L_0$, is the opposite of $L \in \mathcal{L}_m$ satisying $E_{i_0} \in L \cap L_0$ and $L \cap L_1 = \emptyset$.

We pursue the construction of the tree $T_m$. The node $a_1$ indexed by $L_1$ admits one child for each class $E_{i_1} \in L_1$ whose union with $R(a_1)$ form a POF. In our example, $R(a_1) = \set{E_i}$ and $L_1 = \set{E_j,E_h,E_r}$, so if for example $\set{E_i,E_h}$ is POF, then we add a child (say $a_2$) to $a_1$ and the index of the edge $(a_1,a_2)$ is $E_h$. The index of node $a_2$  is the POF of maximum weight which has an empty intersection with $R(a_2)$. In our example, $R(a_2) = \set{E_i,E_h}$, see Figure~\ref{subfig:tree}.

Algorithm~\ref{algo:tree} presents a recursive view of the construction of the tree. Symbol \# refers to the comments, outside the pseudocode. Figure~\ref{fig:tree} shows an example of tree $T_m$ and besides, an example of median graph where only the hypercubes with basis $m$ are represented. For each class, its edge adjacent to $m$ is colored and oriented in accordance with the $v_0$-orientation. As the root of $T_m$ is indexed with $\set{E_i,E_j}$, then we can affirm that the vertex $v$ satisfying $m \in I(v_0,v)$ which maximizes $d(m,v)$ verifies $L_{mv} = \set{E_i,E_j}$. Therefore, the longest shortest $(m,v)$-path, such that $m \in I(v_0,v)$, passes through vertex $m_{ij}$. As an example, the opposite of $\set{E_i}$ is $\set{E_j,E_h,E_r}$. So, if a diametral pair $(u,v)$ is such that $m = m(u,v,v_0)$ and $L_{mu} = \set{E_i}$, then the diameter is given by value $\varphi(m,\set{E_i}) + \varphi(m,\set{E_j,E_h,E_r})$ and a diametral path passes through vertex $m_{jhr}$.

\begin{algorithm}[t]
\SetKwFor{For}{for}{do}{\nl endfor}
\SetKwFor{Forall}{for all}{do}{\nl endfor}
\SetKwFor{Def}{def}{:}{\nl enddef}
\SetKwIF{If}{ElseIf}{Else}{if}{then}{else if}{else}{}
\DontPrintSemicolon
\SetNlSty{}{}{:}
\SetAlgoNlRelativeSize{0}
\SetNlSkip{1em}
\nl\KwIn{Vertex $m$, $\Theta$-classes $\mathcal{E}$, set $\mathcal{L}_m$ of POFs and $\omega : \mathcal{L}_m \rightarrow \mathbb{N}^+$}
\nl \KwOut{Tree $T_m$}
%\nl Initialize a function \textsf{treated}: $\mathcal{L}_m \rightarrow \set{0,1}$ to $0$ for all PPFs;\label{line:init_treated}\;
\nl \Def{{\em \textbf{children}}$(T,a)$}{
	\nl $L_a \leftarrow$ index of node $a$;\;
	\nl \Forall{$E_i \in L_a$}{
		\nl \If{$\set{E_i} \cup R(a)$ \mbox{\em is a POF}}{
			\nl add a child $a[E_i]$ of $a$ in $T$; Index $(a,a[E_i])$ with $E_i$; \label{line:add_child}\;
			\nl $L_{a[E_i]} \leftarrow \argmax\limits_{\substack{L \in \mathcal{L}_m \\ L \cap R(a[E_i]) = \emptyset}} \omega(L)$; \# here, $R(a[E_i]) = \set{E_i} \cup R(a)$\label{line:index_child}\; 
			\nl add index $L_{a[E_i]}$ to node $a[E_i]$;\;
			%\nl $\textsf{treated}(R(a_S)) \leftarrow 1$;\label{line:update_treated}\;
		}
		\nl \ifend \;
	}
	\nl \Forall{children $a[E_i]$ of $a$ in $T$}{
		\nl \textbf{children}$(T,a[E_i])$;\label{line:recursion}\;
	}
}
\nl $L_0 \leftarrow \argmax_{L \in \mathcal{L}_m} \omega(L)$;\;
\nl $a_0 \leftarrow$ node with index $L_0$; $T_m \leftarrow$ single node $a_0$; \# here, $R(a) = \emptyset$\label{line:init_tree}\;
\nl \textbf{children}$(T_m,a_0)$;\;	

\caption{Construction of the tree $T_m$ providing us with all opposites.}
\label{algo:tree}
\end{algorithm}

Once the computation of the tree $T_m$ is completed, the index $L_a$ of node $a \in T_m$ is the opposite of any $L \in \mathcal{L}_m$ that contains all classes of $R(a)$ but no class of $L_a$.
Moreover, there is one child of node $a \in T_m$ for each class of its index $L_a$ whose union with $R(a)$ is a POF (line~\ref{line:add_child}). Indeed, if some $L \in \mathcal{L}_m$ contains $R(a)$ but also certain classes of the index $L_a$ of $a$, then we have to go down the tree through a branch indexed by an arbitrary class $E_i$ in $L_a \cap L$ to find the opposite of $L$. The following lemma ensures us that all opposites in $\mathcal{L}_m$ can be identified in the tree $T_m$.

\begin{lemma}
Let $a$ be a node of $T_m$, $L_a$ its index, and $L$ some POF in $\mathcal{L}_m$. Assume that $R(a) \subseteq L$. If $L_a \cap L = \emptyset$, then $\opp(L) = L_a$. Otherwise, there exists a child, denoted by $a[E_i]$, where the index $E_i$ of edge $(a,a[E_i])$ belongs to $L_a \cap L$.
\label{le:descent_tree}
\end{lemma}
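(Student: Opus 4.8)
The plan is to read everything directly off the construction in Algorithm~\ref{algo:tree}, using only two elementary observations. The first is that any subset of a POF is again a POF: pairwise orthogonality of a family of $\Theta$-classes is inherited by every subfamily, so this is immediate from the definition of a Pairwise Orthogonal Family. The second is that, by construction, the index of every node $a$ of $T_m$ is
\[
L_a \;=\; \argmax_{\,L'\in\mathcal{L}_m,\ L'\cap R(a)=\emptyset} \omega(L');
\]
this holds for the root $a_0$ because $R(a_0)=\emptyset$ (line~\ref{line:init_tree}), and for every other node because of line~\ref{line:index_child}. In particular all the sets $R(a)$ are POFs, since a child $a[E_i]$ is created only when $\set{E_i}\cup R(a)$ is a POF (line~\ref{line:add_child}).

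First I would treat the case $L_a\cap L=\emptyset$. Since $R(a)\subseteq L$, every POF of $\mathcal{L}_m$ disjoint from $L$ is a fortiori disjoint from $R(a)$; hence $\set{L'\in\mathcal{L}_m : L'\cap L=\emptyset}\subseteq\set{L'\in\mathcal{L}_m : L'\cap R(a)=\emptyset}$, and $L_a$ is a maximum-weight element of the larger family. By hypothesis $L_a$ also lies in the smaller family ($L_a\cap L=\emptyset$), so $L_a$ attains $\max\set{\omega(L') : L'\in\mathcal{L}_m,\ L'\cap L=\emptyset}$; that is, $\opp(L)=L_a$.

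Next, for the case $L_a\cap L\neq\emptyset$, I would pick any $E_i\in L_a\cap L$. Because $R(a)\subseteq L$ and $E_i\in L$, we have $\set{E_i}\cup R(a)\subseteq L$, so this set is a POF by the first observation. Since moreover $E_i\in L_a$, the test on line~\ref{line:add_child} of Algorithm~\ref{algo:tree} succeeds, so the child $a[E_i]$ is created and the edge $(a,a[E_i])$ is indexed by $E_i$; and $E_i\in L_a\cap L$ by the choice of $E_i$. This is precisely the child asserted by the lemma.

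The argument is short, and I do not expect a genuine obstacle at the level of this single-node statement; the substance of the whole construction lies in applying this lemma repeatedly while descending along a branch $E_i\in L_a\cap L$ until a disjoint index is reached. The only delicate point here is the tacit uniqueness of $\opp(L)$ when weights tie, which I would sidestep by stating the first case as ``$L_a$ attains the maximum weight among POFs of $\mathcal{L}_m$ disjoint from $L$, so we may take $\opp(L)=L_a$'' — all that is needed downstream.
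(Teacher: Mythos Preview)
Your proof is correct and follows essentially the same approach as the paper: both argue that $L_a$ is the maximum-weight POF disjoint from $R(a)$, use $R(a)\subseteq L$ to conclude that if $L_a\cap L=\emptyset$ then $L_a$ is also maximum-weight among POFs disjoint from $L$, and in the other case observe that $\set{E_i}\cup R(a)\subseteq L$ is a POF (as a subset of a POF) so the child $a[E_i]$ exists. Your version is slightly more explicit about the subset-of-a-POF fact and about tie-breaking in $\opp(L)$, but there is no substantive difference.
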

\begin{proof}
By definition, the POF $L_a$ is the maximum-weighted POF which has no intersection with $R(a)$. If $L_a$ has no intersection with $L$, then we cannot find another POF disjoint from $L \supseteq R(a)$ with a greater weight than $\omega(L_a)$, so $\opp(L) = L_a$ in this case. Otherwise, if there is some $E_i \in L_a \cap L$, we can say that
$\set{E_i} \cup R(a) \subseteq L$ is a POF as its superset $L$. Consequently, as $E_i \in L_a$, there is an edge between $a$ and one of its children which has index $E_i$.
\end{proof}

We determine the opposite $\opp(L)$ of each $L \in \mathcal{L}_m$ with a tree search in $T_m$. If $L \cap L_0 = \emptyset$, then $\opp(L) = L_0$, else traverse the branch with an index in $L \cap L_0$, etc. Each node $a$ visited in this descent verifies $R(a) \subseteq L$. If $L_a \cap L = \emptyset$, then $\opp(L) = L_a$ (Lemma~\ref{le:descent_tree}). Otherwise, at least one child of $a$ can be visited to pursue the descent, via a branch $E_i \in L$ which exists according to Lemma~\ref{le:descent_tree}. The depth of $T_m$ is at most $d$ because $R(a)$ is a POF for any $a \in T_m$ and the cardinality of any POF is at most $d$. The tree search for each $L$ costs only $O(d^2)$ because we visit at most $d$ nodes (depth) and checking the intersections with all indices $L_a$ costs $O(d)$. The operation consisting in finding all opposites in $\mathcal{L}_m$ takes thus $O^*(N_m)$, where $O^*$ neglects polynomials of $d$. The computation of $\Upsilon(m)$ is complete: we pick up the pair $(L,\opp(L))$ with maximum weight. 

\textbf{Running time and discussion.} We provide an upper bound of the number of vertices of $T_m$. First, the depth of $T_m$ is at most $d$. Second, each node has at most $d$ children. Therefore, the number of nodes is at most $d!=O(2^{d\log d})$.
Computing the index of each of its nodes (line~\ref{line:index_child} of Algorithm~\ref{algo:tree}) takes $O(N_m)$. The construction of tree $T_m$ takes $O^*(2^{d\log d}N_m)$.

The running time for determining $\Upsilon(m)$ - construction of the tree and computation of the opposites - is $O^*(2^{d\log d}N_m)$.
Eventually, we pick up the vertex $m$ maximizing $\Upsilon(m)$ and the value obtained is $\diam(G)$ according to Corollary~\ref{co:diameter_sets}. The running time of the algorithm is thus $O^*(2^{d(\log d+1)}n)$ because $\sum_{m\in V} N_m = \alpha(G) \le 2^dn$.

Our algorithm outputs the diameter. However, we may aim at returning a diametral pair. To do so, we can simply add an extra label $\mu(u,L)$ in order to obtain a labeling pair $(\mu(u,L),\varphi(u,L))$, where $\mu(u,L)$ is a vertex which is at distance $\varphi(u,L)$ from $u \in I(v_0,\mu(u,L))$ with ladder set $L$. Two slight modifications must be done in Algorithm~\ref{algo:labels} presented in Sections~\ref{subsec:labels}. First, when $\varphi(u,L) = 0$ (line~\ref{line:phi_zero}), we initialize the vertex $\mu(u,L)$ with the anti-basis $u^+$. Second, the update of $\varphi(u^-,X)$ in line~\ref{line:update_phi} is replaced by an update of both the vertex $\mu(u^-,X)$ and the value $\varphi(u^-,X)$. If $\card{X} + \varphi(u,L) > \varphi(u^-,X)$, then $\mu(u^-,X) \leftarrow \mu(u,L)$.  

\section{Computing all eccentricities in linear time for dimension $d=O(1)$} \label{sec:eccentricities}

We use the computations of Section~\ref{sec:diameter} to determine the eccentricity of each vertex of $G$. We remind that, in time $O(2^{O(d\log d)}n)$, we obtained the following labels for any pair $(u,L)$, where $u\in V$ and $L$ is a POF outgoing from $u$.
\begin{itemize}
\item $\varphi(u,L)$: distance of the longest shortest path starting from vertex $u$ with ladder set $L$,
\item $\opp_u(L)$: ladder set $L^*$ outgoing from $u$ which has an empty intersection with $L$ and maximizes $\varphi(u,L^*)$.
\end{itemize}

We proceed in two steps in this section. First, we define the concept of \textit{milestones} which consist in particular vertices of any interval $I(u,v)$ satisfying $u \in I(v_0,v)$. Second, we compute new labels $\psi(u,X)$, where $u \in V$ and $X$ is a POF arriving in $u$. Their definition depend on the notion of milestone. Finally, we establish a relationship between the eccentricity $\ecc(u)$ of $u \in V$ and both labels $\varphi(u,L)$ and $\psi(u,X)$. 

\subsection{Milestones} \label{subsec:milestones}

We consider two vertices $u,v$ such that $u \in I(v_0,v)$, said differently $u = m(u,v,v_0)$. The notion of \textit{milestone} is defined recursively.

\begin{definition}[Milestones $\Pi(u,v)$]
Let $L_{u,v}$ be the ladder set of $u,v$ and $u^+$ be the anti-basis of the hypercube with basis $u$ and signature $L_{u,v}$.
If $u^+ = v$, then pair $u,v$ admits two milestones: $\Pi(u,v) = \set{u,v}$. Otherwise, the set $\Pi(u,v)$ is the union of $\Pi(u^+,v)$ with vertex $u$: $\Pi(u,v) = \set{u} \cup \Pi(u^+,v)$.
\label{def:milestones}
\end{definition}

\begin{figure}[h]
\centering
\scalebox{0.9}{\begin{tikzpicture}

% NODES %%%%%%%%%%%%%%%%%%%%%%%%%%%%%%%%%%%%%%%%%%%%%%%%%%%%%%%%%%%%%%%%%%

\node[draw, circle, minimum height=0.2cm, minimum width=0.2cm, fill=black] (P11) at (1,1) {};
\node[draw, circle, minimum height=0.2cm, minimum width=0.2cm, fill=black] (P12) at (1,2.5) {};

\node[draw, circle, minimum height=0.2cm, minimum width=0.2cm, fill=red] (P21) at (3,1) {};
\node[draw, circle, minimum height=0.2cm, minimum width=0.2cm, fill=black] (P22) at (3,2.5) {};
\node[draw, circle, minimum height=0.2cm, minimum width=0.2cm, fill=black] (P23) at (3,4) {};

\node[draw, circle, minimum height=0.2cm, minimum width=0.2cm, fill=black] (P31) at (5,1) {};
\node[draw, circle, minimum height=0.2cm, minimum width=0.2cm, fill=red] (P32) at (5,2.5) {};
\node[draw, circle, minimum height=0.2cm, minimum width=0.2cm, fill=black] (P33) at (5,4) {};

\node[draw, circle, minimum height=0.2cm, minimum width=0.2cm, fill=black] (P41) at (7,1) {};
\node[draw, circle, minimum height=0.2cm, minimum width=0.2cm, fill=red] (P42) at (7,2.5) {};

\node[draw, circle, minimum height=0.2cm, minimum width=0.2cm, fill=black] (P51) at (9,1) {};
\node[draw, circle, minimum height=0.2cm, minimum width=0.2cm, fill=black] (P52) at (9,2.5) {};

\node[draw, circle, minimum height=0.2cm, minimum width=0.2cm, fill=black] (P61) at (8.0,1.4) {};
\node[draw, circle, minimum height=0.2cm, minimum width=0.2cm, fill=black] (P62) at (8.0,2.9) {};
\node[draw, circle, minimum height=0.2cm, minimum width=0.2cm, fill=black] (P63) at (10.0,1.4) {};
\node[draw, circle, minimum height=0.2cm, minimum width=0.2cm, fill=red] (P64) at (10.0,2.9) {};

% LINKS %%%%%%%%%%%%%%%%%%%%%%%%%%%%%%%%%%%%%%%%%%%%%%%%%%%%%%%%%%%%%%%%%%

\draw[line width = 1.4pt] (P11) -- (P12);
\draw[line width = 1.4pt] (P11) -- (P21);
\draw[line width = 1.4pt] (P12) -- (P22);
\draw[line width = 1.4pt,dashed,color = cyan] (P21) -- (P22);

\draw[line width = 1.4pt,dashed,color = cyan] (P21) -- (P31);
\draw[line width = 1.4pt,dashed,color = cyan] (P22) -- (P32);
\draw[line width = 1.4pt,dashed,color = cyan] (P31) -- (P32);

\draw[line width = 1.4pt] (P22) -- (P23);
\draw[line width = 1.4pt] (P23) -- (P33);
\draw[line width = 1.4pt] (P32) -- (P33);

\draw[line width = 1.4pt] (P31) -- (P41);
\draw[line width = 1.4pt, dashed, color = blue] (P32) -- (P42);
\draw[line width = 1.4pt] (P41) -- (P42);

\draw (P41) -- (P51);
\draw[line width = 1.4pt, dashed, color = purple] (P42) -- (P52);
\draw[line width = 1.4pt] (P51) -- (P52);

\draw (P41) -- (P61);
\draw[line width = 1.4pt, dashed, color = purple] (P42) -- (P62);
\draw (P51) -- (P63);
\draw[line width = 1.4pt, dashed, color = purple] (P52) -- (P64);
\draw[line width = 1.4pt] (P61) -- (P62);
\draw (P61) -- (P63);
\draw[line width = 1.4pt, dashed, color = purple] (P62) -- (P64);
\draw[line width = 1.4pt] (P63) -- (P64);

% ETIQUETTES

\node[scale=1.2, color = cyan] at (4.0,0.5) {$L_{u,v}$};
\node[scale=1.2, color = blue] at (6.0,2.9) {$L_{u^+,v}$};
\node[scale=1.2, color = purple] at (9.0,3.3) {$L_{u^{++},v}$};

\node[scale = 1.2] at (0.6,0.7) {$v_0$};
\node[scale = 1.2] at (2.6,0.7) {$u$};
\node[scale = 1.2] at (10.4,3.2) {$v$};

\node[scale = 1.2] at (4.6,2.2) {$u^+$};
\node[scale = 1.2] at (6.5,2.2) {$u^{++}$};

\end{tikzpicture}}
\caption{A pair $u,v$ with $u \in I(v_0,v)$ and its milestones $\Pi(u,v)$ in red.}
\label{fig:milestones}
\end{figure}
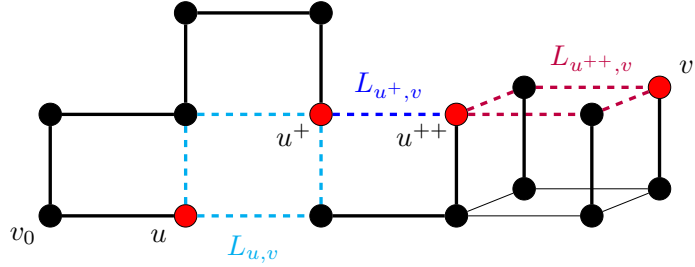

The milestones are the successive anti-bases of the hypercubes formed by the vertices and ladder sets traversed from $u$ to $v$. Moreover, it contains both $u$ and $v$. Concretely, the first milestone is $u$, the second is the anti-basis $u^+$ of the hypercube characterized by $(u,L_{u,v})$. The third one is the anti-basis $u^{++}$ of the hypercube with basis $u^+$ and $\Theta$-classes $L_{u^+,v}$, etc. All milestones are in the interval of $u$ and $v$: $\Pi(u,v) \subseteq I(u,v)$. 

Figure~\ref{fig:milestones} represents the same graph than in Figure~\ref{fig:ladder_sets} and shows the milestones $\Pi(u,v) = \set{u,u^+,u^{++},v}$. The hypercubes with the following pair basis-signature are highlighted with dashed edges: $(u,L_{u,v})$, $(u^+,L_{u^+,v})$, and $(u^{++},L_{u^{++},v})$.

We say that the milestone in $\Pi(u,v)$ which is different from $v$ but the closest to it is called the \textit{penultimate milestone}. We denote it by $\overline{\pi}(u,v)$. For example, $u^{++}$ is the penultimate milestone of $\Pi(u,v)$ in Figure~\ref{fig:milestones}. Furthermore, we denote by $\overline{L}_{u,v}$ the $\Theta$-classes of the hypercube with basis $\overline{\pi}(u,v)$ and anti-basis $v$. In Figure~\ref{fig:milestones}, $\overline{L}_{u,v} = L_{u^{++},v}$. 

A consequence of Theorem~\ref{th:pushing_label} is that, for two consecutive milestones in $\Pi(u,v)$, say $u$ and $u^+$ w.l.o.g, then $L_{u,v}$ and $L_{u^+,v}$ verify the following property: for any class $E_i \in L_{u^+,v}$, then $L_{u,v} \cup \set{E_i}$ is not a POF. We establish a property dealing with the penultimate milestone.

\begin{theorem}
Let $u,v \in V$ and $u \in I(v_0,v)$. Let $L$ be a POF outgoing from $v$ and $w$ the anti-basis of hypercube $(v,L)$. The following propositions are equivalent:
\begin{enumerate}
\item[(i)] vertex $v$ is the penultimate milestone of $(u,w)$: $\overline{\pi}(u,w) = v$,
\item[(ii)] the milestones of $(u,w)$ are the milestones of $(u,v)$ with $w$: $\Pi(u,w) = \Pi(u,v) \cup \set{w}$,
\item[(iii)] for any class $E_i \in L$, then $\overline{L}_{u,v} \cup \set{E_i}$ is not a POF.
\end{enumerate}
\label{th:penultimate}
\end{theorem}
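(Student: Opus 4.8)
Proof plan for Theorem~\ref{th:penultimate}.

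\textbf{Setup and a key identity.} Throughout I assume $L\neq\emptyset$, so that $w\neq v$ (the degenerate case $L=\emptyset$, $w=v$ is not the intended reading of the statement). Since $w$ is the anti-basis of the hypercube with basis $v$ and signature $L$, Theorem~\ref{th:signature} gives $v\in I(v_0,w)$ and $\sigma_{v,w}=L$. Feeding $u\in I(v_0,v)$ and $v\in I(v_0,w)$ into Theorem~\ref{th:at_most_one} yields $u\in I(v_0,w)$ together with the disjoint decomposition $\sigma_{v_0,w}=\sigma_{v_0,u}\sqcup\sigma_{u,v}\sqcup L$; in particular $\sigma_{u,v}\cap L=\emptyset$. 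Running the same argument from an arbitrary milestone $\pi\in\Pi(u,v)\subseteq I(u,v)$ shows $\pi\in I(v_0,w)$, $v\in I(\pi,w)$ and $\sigma_{\pi,w}=\sigma_{\pi,v}\sqcup L$. Writing $N(\pi)$ for the set of $\Theta$-classes having an edge incident to $\pi$, and recalling that a ladder set is the set of classes of its signature incident to the departure vertex (Definition~\ref{def:ladder}), intersecting with $N(\pi)$ produces the \emph{key identity}
\[
L_{\pi,w}\;=\;L_{\pi,v}\;\sqcup\;\bigl(L\cap N(\pi)\bigr)\qquad\text{for every }\pi\in\Pi(u,v).
\]

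\textbf{Equivalence of (i) and (ii), via the milestones of $(u,v)$.} I would first prove that (ii) is equivalent to $L\cap N(\pi)=\emptyset$ for \emph{every} $\pi\in\Pi(u,v)$. If this holds, the key identity gives $L_{\pi_j,w}=L_{\pi_j,v}$ all along the milestone sequence $\pi_0=u,\pi_1,\dots,\pi_k=v$ of $(u,v)$; an induction on the recursive definition of milestones then shows the milestone sequence of $(u,w)$ begins $\pi_0,\dots,\pi_k=v$, and since $v\neq w$ it continues with the anti-basis of $(v,L_{v,w})=(v,L)$, namely $w$, and stops — hence $\Pi(u,w)=\Pi(u,v)\cup\{w\}$. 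Conversely, if $\Pi(u,w)=\Pi(u,v)\cup\{w\}$ the milestone recursion forces $L_{\pi_j,w}=L_{\pi_j,v}$ for $j<k$ (distinct signatures yield distinct anti-bases of hypercubes based at $\pi_j$), and the key identity gives $L\cap N(\pi_j)=\emptyset$. Then (ii)$\Rightarrow$(i) is immediate (ordering $\Pi(u,v)\cup\{w\}$, the element before the last one $w$ is $v$), and (i)$\Rightarrow$(ii) follows from the \emph{prefix property}: if $x$ is a milestone of $(u,w)$ then the portion of the milestone sequence of $(u,w)$ from $u$ to $x$ is exactly $\Pi(u,x)$; applying it with $x=v=\overline{\pi}(u,w)$ shows the part of $\Pi(u,w)$ before $w$ equals $\Pi(u,v)$.

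\textbf{Bringing in (iii).} Put $\overline{\pi}=\overline{\pi}(u,v)$ and $\overline{L}_{u,v}=L_{\overline{\pi},v}$, so $v$ is the anti-basis of the hypercube with basis $\overline{\pi}$ and signature $\overline{L}_{u,v}$. Applying Theorem~\ref{th:pushing_label} to that hypercube and the vertex $w\neq v$ (we have already checked $\overline{\pi},v\in I(v_0,w)$) gives: $L_{\overline{\pi},w}=\overline{L}_{u,v}$ if and only if for every $E_i\in L_{v,w}=L$ the set $\overline{L}_{u,v}\cup\{E_i\}$ is not a POF — that is, if and only if (iii). By the key identity, $L_{\overline{\pi},w}=\overline{L}_{u,v}$ is equivalent to $L\cap N(\overline{\pi})=\emptyset$. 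Finally I would upgrade this single milestone to all of them: if some $E_i\in L$ were incident to a milestone $\pi$ of $(u,v)$, then since $\sigma_{u,v}\cap L=\emptyset$ the vertices $\pi$ and $v$ (the latter incident to $E_i$ because $E_i\in L$ is outgoing at $v$) lie in the same boundary $\partial H_i$ of $E_i$, so convexity of boundaries (Lemma~\ref{le:boundaries}) places $\overline{\pi}\in I(\pi,v)\subseteq\partial H_i$, whence $E_i\in L\cap N(\overline{\pi})$. Thus $L\cap N(\overline{\pi})=\emptyset$ iff $L\cap N(\pi)=\emptyset$ for all milestones $\pi$, closing the loop with (ii) and (i).

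\textbf{Expected main obstacle.} The delicate point is the signature bookkeeping behind the \emph{prefix property} of milestones, i.e. that $\Pi(u,x)$ is an initial segment of the milestone sequence of $(u,w)$ whenever $x$ is a milestone of $(u,w)$. The crux is to show that no class separating $x$ from $w$ can be incident to an earlier milestone $\rho_j$: if one were, it would belong to the ladder set $L_{\rho_j,w}$ and so lie in $\sigma_{\rho_j,\rho_{j+1}}$, contradicting $\rho_{j+1}\in I(\rho_j,x)$ together with the disjointness $\sigma_{\rho_j,x}\cap\sigma_{x,w}=\emptyset$. Once this is in hand, every other step reduces to the key identity, Theorem~\ref{th:pushing_label}, and convexity of halfspace boundaries.
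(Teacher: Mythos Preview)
Your proof is correct and uses the same core ingredients as the paper: the decomposition $\sigma_{u,w}=\sigma_{u,v}\sqcup L$, Theorem~\ref{th:pushing_label} applied to the hypercube $(\overline{\pi}(u,v),\overline{L}_{u,v})$ to link (iii) with the ladder set at $\overline{\pi}(u,v)$, and convexity of $\partial H_i'$ (Lemma~\ref{le:boundaries}) to propagate incidence of a class $E_i\in L$ from an arbitrary milestone of $(u,v)$ to $\overline{\pi}(u,v)$. One wording slip to fix: your condition ``$L\cap N(\pi)=\emptyset$ for \emph{every} $\pi\in\Pi(u,v)$'' must exclude $\pi=v$, since $L\subseteq N(v)$ makes $L\cap N(v)=L\neq\emptyset$; the body of your argument already restricts to $j<k$ and to milestones strictly before $v$ where it matters, so this is purely cosmetic.

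The one place you organize things differently from the paper is (i)$\Rightarrow$(ii). The paper assumes $v=\overline{\pi}(u,w)$ and argues, via convexity of $\partial H_i''$, that no class of $L$ can enter a ladder set $L_{p,w}$ at a milestone $p\in\Pi(u,w)\setminus\{v,w\}$, then concludes that the milestones before $v$ coincide with $\Pi(u,v)$. Your \emph{prefix property}---that whenever $x$ is a milestone of $(u,w)$, the initial segment of $\Pi(u,w)$ up to $x$ is exactly $\Pi(u,x)$---is a pure signature-bookkeeping lemma that yields (i)$\Rightarrow$(ii) immediately, avoids that extra convexity appeal, and is reusable beyond the penultimate case. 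Both routes are short; yours isolates a cleaner general fact about milestones.
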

\begin{proof}
\textit{Claim:} $\sigma_{u,w} = \sigma_{u,v} \cup L$. As $u \in I(v_0,v)$ and $v \in I(v_0,w)$, we have $u \in I(v_0,w)$ and $v \in I(u,w)$. The concatenation of a shortest $(u,v)$-path with a shortest $(v,w)$-path produces a shortest $(u,w)$-path. According to Theorem~\ref{th:signature}, $\sigma_{u,w} = \sigma_{u,v} \cup \sigma_{v,w}$. As $w$ is the anti-basis of hypercube $(v,L)$, then $\sigma_{v,w} = L$, which proves the claim.

\textit{(i)} $\Leftrightarrow$ \textit{(ii)}. Suppose that $v$ is the penultimate milestone in $\Pi(u,w)$. From the previous claim, we know that $\sigma_{u,w} = \sigma_{u,v} \cup L$. Consider a milestone in $\Pi(u,w)$ arriving before $v$, {\em i.e.} $p \in \Pi(u,w) \backslash \set{v,w}$, and suppose that some class $E_i \in L$ belongs to the ladder set $L_{p,w}$. The milestone after $p$ in $\Pi(u,w)$ is thus in $\partial H_i''$. By convexity of $\partial H_i''$, all milestones arriving after $p$ are in $\partial H_i''$. This is a contradiction as an edge of $E_i$ is outgoing from $v$, so $v \in \partial H_i'$. So, no class of $L$ appears in the ladder sets $L_{p,w}$ before milestone $v$. The milestones of pair $(u,w)$ are built with the classes in $\sigma_{u,v}$ until we arrive at $v$: $\Pi(u,w) = \Pi(u,v) \cup \set{w}$. Conversely, $(ii) \Rightarrow (i)$ is trivial.

\textit{(ii)} $\Leftrightarrow$ \textit{(iii)}.
The direct side is a consequence of Theorem~\ref{th:pushing_label}. As $\overline{\pi}(u,v) \in I(v_0,v)$ and $v \in I(v_0,w)$, we have $\overline{\pi}(u,v) \in I(v_0,w)$. Vertex $\overline{\pi}(u,v)$ belongs to $\Pi(u,w)$, so the ladder set of $\overline{\pi}(u,v)$ and $w$ is $\overline{L}_{u,v}$. We apply Theorem~\ref{th:pushing_label} to vertices $\overline{\pi}(u,v)$ and $w$: for any $E_i \in L_{v,w} = L$, set $\overline{L}_{u,v} \cup \set{E_i}$ is not a POF.

Suppose now that for any $E_i \in L$, set $\overline{L}_{u,v} \cup \set{E_i}$ is not a POF. 
The $\Theta$-classes in $L$ are outgoing from $v$. Assume a class $E_i$ of $L$ is outgoing from some milestone $p$ in $\Pi(u,v)$ different from $v$: $p \in \Pi(u,v)\backslash \set{v}$. By convexity of $\partial H_i'$, as $\set{p,v} \subseteq \partial H_i'$, we have $\overline{\pi}(u,v) \in \partial H_i'$. More generally, all vertices of the hypercube with basis $\overline{\pi}(u,v)$ and signature $\overline{L}_{u,v}$ are in $\partial H_i'$ for the same reason. Class $E_i$ form an isomorphism between $\partial H_i'$ and $\partial H_i''$: there is an hypercube with basis $\overline{\pi}(u,v)$ and classes $\overline{L}_{u,v} \cup \set{E_i}$. This is a contradiction as $\overline{L}_{u,v} \cup \set{E_i}$ is not a POF.

We just showed that no class of $L$ is adjacent to a milestone in $\Pi(u,v)\backslash \set{v}$. As $\sigma_{u,w} = \sigma_{u,v} \cup L$,  the milestones of $\Pi(u,w)$ are exactly the milestones of $\Pi(u,v)$ until we arrive at $v$. Indeed, the ladder set $L_{p,w}$ for any milestone $p \in \Pi(u,v)\backslash \set{v}$ cannot contain a class of $L$. At vertex $v$, the ladder set $L_{v,w}$ is $L$, so the next milestone is $w$ itself.
\end{proof}

This result is the keystone to compute the labels $\psi(u,X)$ and, finally, all eccentricities.

\subsection{Labels $\psi(u,X)$ and eccentricities} \label{subsec:labels_ecc}

Let $X$ be a POF ingoing to some vertex $u$ and $u^-$ be the basis of the hypercube with anti-basis $u$ and classes $X$. The label $\psi(u,X)$ is the maximum distance $d(u,v)$ we can obtain with a vertex $v$ satisfying the following properties:
\begin{itemize}
\item $m = m(u,v,v_0) \neq u$,
\item vertex $u^-$ is the penultimate milestone of pair $m,u$: $u^- = \overline{\pi}(m,u)$.
\end{itemize}

We present an inductive algorithm determining labels $\psi(u,X)$. 
%Let $u^-$ be the basis of the hypercube which admits $u$ as its anti-basis and $X$ as its $\Theta$-classes. 
We list the hypercubes of $G$ in the order given by list $\mathcal{Q}$ computed in Lemma~\ref{le:enum_hypercubes} and evoked in Algorithm~\ref{algo:labels}. Concretely, the hypercubes admitting $v_0$ as their basis are listed first. The ``peripheral'' hypercubes are listed last. For each of these hypercubes, we pick up their anti-basis $u$ and their signature $X$ and then we determine $\psi(u,X)$ as explained below.

Let $Q_0 \in \mathcal{Q}$ be an hypercube with basis $v_0$. This is the base case of our induction. We denote its anti-basis by $v_0^+$ and its classes by $X_0$. Figure~\ref{fig:psi_labels} presents an example where $X_0 = \set{E_i,E_j}$ and its opposite $\opp_{v_0}(X_0) = \set{E_{\ell},E_r}$. We aim at finding the maximum distance $d(v_0^+,v)$ with a vertex $v$ verifying $m=m(v_0^+,v,v_0) \neq v_0^+$ and $v_0 = \overline{\pi}(m,v_0^+)$. Necessarily, $m = v_0$ as $v_0 \in I(v_0^+,v)$. Therefore, we seek the longest shortest path starting from $v_0$ such that, concatenated with a shortest $(v_0,v_0^+)$ path, it produces a shortest path with $v_0^+$ as an extremity. As the distance between $v_0$ and $v_0^+$ is $\card{X_0}$, and according to Theorem~\ref{th:disjoint_POF}, we have:

\begin{equation}
\psi(v_0^+,X_0) = \card{X_0} + \varphi\left(v_0,\opp_{v_0}(X_0)\right)
\label{eq:psi_v_0}
\end{equation} 

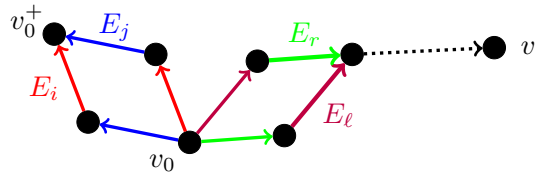
\begin{figure}[h]
\centering
\scalebox{0.9}{\begin{tikzpicture}

% NODES %%%%%%%%%%%%%%%%%%%%%%%%%%%%%%%%%%%%%%%%%%%%%%%%%%%%%%%%%%%%%%%%%%

%\node[draw, circle, minimum height=0.2cm, minimum width=0.2cm, fill=black] (P1) at (1,4) {};
%
%\node[draw, circle, minimum height=0.2cm, minimum width=0.2cm, fill=black] (P2) at (2.5,3.5) {};
%
%\node[draw, circle, minimum height=0.2cm, minimum width=0.2cm, fill=black] (P3) at (3.0,4.4) {};

\node[draw, circle, minimum height=0.2cm, minimum width=0.2cm, fill=black] (P4) at (5.5,2.8) {};
\node[draw, circle, minimum height=0.2cm, minimum width=0.2cm, fill=black] (P42) at (5.0,4.1) {};

\node[draw, circle, minimum height=0.2cm, minimum width=0.2cm, fill=black] (P5) at (7,2.5) {};
\node[draw, circle, minimum height=0.2cm, minimum width=0.2cm, fill=black] (P52) at (6.5,3.8) {};
%\node[draw, circle, minimum height=0.2cm, minimum width=0.2cm, fill=black] (v0) at (7,1.0) {};

\node[draw, circle, minimum height=0.2cm, minimum width=0.2cm, fill=black] (P6) at (8.4,2.6) {};
\node[draw, circle, minimum height=0.2cm, minimum width=0.2cm, fill=black] (P62) at (8.0,3.7) {};

\node[draw, circle, minimum height=0.2cm, minimum width=0.2cm, fill=black] (P7) at (9.4,3.8) {};
\node[draw, circle, minimum height=0.2cm, minimum width=0.2cm, fill=black] (P8) at (11.5,3.9) {};

% LINKS %%%%%%%%%%%%%%%%%%%%%%%%%%%%%%%%%%%%%%%%%%%%%%%%%%%%%%%%%%%%%%%%%%

%\draw[->,dotted,line width = 1.4pt] (v0) -- (P5);

%\draw[<-,line width = 1.4pt] (P1) -- (P2);
%\draw[<-,line width = 1.8pt, color = blue] (P2) -- (P3);
%\draw[<-,line width = 1.4pt, dotted] (P3) -- (P42);
\draw[<-,line width = 1.4pt,color = blue] (P4) -- (P5);
\draw[->,line width = 1.4pt, color = green] (P5) -- (P6);
\draw[->,line width = 1.4pt, color = purple] (P5) -- (P62);
\draw[->,line width = 1.8pt, color = purple] (P6) -- (P7);
\draw[->,line width = 1.8pt, color = green] (P62) -- (P7);
\draw[->,line width = 1.4pt, dotted] (P7) -- (P8);

\draw[->,line width = 1.4pt, color = red] (P4) -- (P42);
\draw[->,line width = 1.4pt, color = red] (P5) -- (P52);
\draw[->,line width = 1.4pt, color = blue] (P52) -- (P42);

% ETIQUETTES

\node[scale=1.2, color = blue] at (5.9,4.2) {$E_j$};
\node[scale=1.2, color = red] at (4.85,3.3) {$E_i$};

\node[scale=1.2, color = green] at (8.7,4.1) {$E_r$};
\node[scale=1.2, color = purple] at (9.2,2.9) {$E_{\ell}$};

%\node[scale = 1.2] at (7.4,1.2) {$v_0$};
\node[scale = 1.2] at (6.6,2.2) {$v_0$};

\node[scale = 1.2] at (4.6,4.3) {$v_0^+$};
\node[scale = 1.2] at (12.0,3.9) {$v$};

\end{tikzpicture}}
\caption{The hypercubes with basis $v_0$ and signature $X_0 = \set{E_i,E_j}$ and $\opp_{v_0}(X_0) = \set{E_{\ell},E_r}$}
\label{fig:psi_labels}
\end{figure}

Let $Q \in \mathcal{Q}$ be an hypercube, with basis $u^-$, anti-basis $u$ and classes $X$. We suppose that the labels $\psi$ of vertex $u^-$ have already been computed. We are looking for the maximum distance $d(u,v)$ with a vertex $v$ verifying $m = m(u,v,v_0) \neq u$ and $u^- = \overline{\pi}(m,u)$. We distinguish two cases. 

First, we assume $m = u^-$. As for the base case (Equation~\eqref{eq:psi_v_0}), $\psi(u,X) = \card{X} + \varphi(u^-,\opp_{u^-}(X))$.

Second, we assume that $m \neq u^-$. Set $\Pi(m,u)$ admits at least three milestones: $m$, $u^-$, and $u$. Let $X^-$ be the POF ingoing to $u^-$ which is the ladder set in $\Pi(m,u)$ of the milestone just before $u^-$. According to Theorem~\ref{th:penultimate}, vertex $u^-$ is the penultimate milestone of $(m,u)$ if and only if $X^- \cup \set{E_i}$ is not a POF, for each $E_i \in X$. For this reason, value $\psi(u,X)$ can be expressed as:

\[
\psi(u,X) = \max\limits_{\substack{X^- ~\mbox{\scriptsize{POF ingoing to}}~ u^- \\ \forall E_i \in X, X^- \cup \set{E_i} ~\mbox{\scriptsize{not POF}}}} \card{X} + \psi(u^-,X^-)
\]

Our algorithm consists in taking the maximum value between the two cases. In this way, we obtain all labels $\psi(u,X)$ in time $O^*(2^d\alpha(G)) = O^*(2^{2d}n)$ because we consider each hypercube of the graph and list the POFs ingoing to their basis.

Eventually, we show how to compute all eccentricities in function of labels $\varphi(u,L)$ and $\psi(u,X)$.

\begin{theorem}
The eccentricity of vertex $u \in V$ is the maximum between all values $\varphi(u,L)$ and also all values $\psi(u,X)$. Formally,
\label{th:eccentricity}
\end{theorem}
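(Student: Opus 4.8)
The statement to prove is the identity
\[
\ecc(u) = \max\Bigl\{ \max_{L\ \mathrm{POF\ outgoing\ from}\ u} \varphi(u,L),\ \ \max_{X\ \mathrm{POF\ ingoing\ to}\ u} \psi(u,X) \Bigr\},
\]
and the plan is to establish the two inequalities separately. The direction $\ge$ is the easy one: by the very definitions, every $\varphi(u,L)$ equals $d(u,v)$ for some $v$ with $u\in I(v_0,v)$, and every $\psi(u,X)$ equals $d(u,v)$ for some $v$ with $m(u,v,v_0)\neq u$ and $\overline{\pi}(m(u,v,v_0),u)=u^-$ (the basis of the hypercube with anti-basis $u$ and classes $X$); each such distance is at most $\ecc(u)=\max_{w\in V}d(u,w)$. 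One should only remark that the maxima are taken over nonempty sets: $\varphi(u,L)$ is realized at least by the anti-basis of the hypercube $(u,L)$, and $\psi(u,X)$ at least by $v=u^-$, for which $m(u,u^-,v_0)=u^-$ and $\Pi(u^-,u)=\set{u^-,u}$, so that $\overline{\pi}(u^-,u)=u^-$.

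For the direction $\le$ I would fix $v\in V$ with $d(u,v)=\ecc(u)$ and set $m=m(u,v,v_0)$; since $m\in I(v_0,u)$, the pair $(m,u)$ admits a ladder set and hence milestones $\Pi(m,u)$ (Definitions~\ref{def:ladder} and~\ref{def:milestones}). Now split on whether $m=u$. \emph{Case $m=u$}: then $u\in I(v_0,v)$, the ladder set $L:=L_{u,v}$ is a POF outgoing from $u$ by Theorem~\ref{th:ladder_pof}, and $d(u,v)\le\varphi(u,L)$ straight from the definition of $\varphi$. \emph{Case $m\neq u$}: then $\Pi(m,u)$ contains the two distinct milestones $m$ and $u$, so the penultimate milestone $u^-:=\overline{\pi}(m,u)$ is well defined, with $u^-\neq u$ (possibly $u^-=m$, which is allowed). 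Let $X:=\overline{L}_{m,u}$ be the classes of the hypercube with basis $u^-$ and anti-basis $u$; this $X$ is a POF (it is the ladder set $L_{u^-,u}$, cf.\ Theorem~\ref{th:ladder_pof}), it is ingoing to $u$ since $u$ is the anti-basis of that hypercube, and $u^-$ is exactly the basis of the hypercube determined by the anti-basis $u$ and the classes $X$ (uniqueness of this hypercube from Lemma~\ref{le:pof_hypercube}). Thus $v$ meets precisely the two defining conditions of $\psi(u,X)$, namely $m(u,v,v_0)=m\neq u$ and $\overline{\pi}(m,u)=u^-$, whence $d(u,v)\le\psi(u,X)$. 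In both cases $\ecc(u)=d(u,v)$ is bounded by the right-hand side, which finishes the proof.

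The only things requiring care are the bookkeeping claims in the $\ge$ part (that the index sets are nonempty and that the labels are genuine distances) and the verification in the $m\neq u$ case that the pair $(u^-,X)$ extracted from $(m,u)$ is a legal argument of $\psi$ — i.e.\ that $X$ is ingoing to $u$ and $u^-$ is its basis — which is just unwinding the milestone definition and the POF/hypercube correspondence. I do not expect a genuine obstacle: the structural content is simply the dichotomy ``either $v$ is reached from $u$ by moving strictly away from $v_0$ ($m=u$), or $v$ lies on a down--up geodesic whose final up-step hypercube has basis $u^-$ and class set $X$ ($m\neq u$)'', and $\varphi$ and $\psi$ were designed to cover exactly these two situations. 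The single subtlety to flag explicitly is that the $m\neq u$ case must not quietly assume $u^-\neq m$: the degenerate subcase $\Pi(m,u)=\set{m,u}$ (so $u^-=m$) is permitted and corresponds precisely to the base subcase $m=u^-$ in the recursive computation of $\psi$.
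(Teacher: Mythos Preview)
Your proof is correct and follows essentially the same route as the paper's own argument: fix a vertex $v$ realizing $\ecc(u)$, split on whether $m(u,v,v_0)=u$, and in each case exhibit a label ($\varphi$ in the first case, $\psi$ in the second) that captures $d(u,v)$. You are more careful than the paper about the reverse inequality $\ecc(u)\ge\text{RHS}$ (verifying that each label is an actual distance to some vertex) and about the degenerate subcase $u^-=m$, which is all to the good.
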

\[
\ecc(u) = \mbox{max}\set{\max\limits_{\substack{L ~\mbox{\scriptsize{POF}} \\ \mbox{\scriptsize{outgoing from}}~ u}} \varphi(u,L), \max\limits_{\substack{X ~\mbox{\scriptsize{POF}} \\ \mbox{\scriptsize{ingoing to}}~ u}} \psi(u,X)}
\]
\begin{proof}
Let $v$ be a vertex in $G$ such that $\ecc(u) = d(u,v)$. If $m = m(u,v,v_0) = u$, then $u \in I(v_0,v)$ and value $d(u,v)$ is given by the label $\varphi(u,L)$, where $L$ is the ladder set $L_{u,v}$. Conversely, each $\varphi(u,L')$ is the distance between $u$ and some vertex $v'$ satisfying $u \in I(v_0,v')$. Therefore, the maximum value of $\varphi(u,L)$ is the maximum distance $d(u,v)$ such that $m = u$.

If $m \neq u$, let $u^-$ be the penultimate milestone in $\Pi(m,u)$ and $X$ be the classes of the hypercube with basis $u^-$ and anti-basis $u$. In this case, distance $d(u,v)$ is given by label $\psi(u,X)$. So, the maximum value $\psi(u,X)$ is the maximum distance $d(u,v)$ such that $m \neq u$.
\end{proof}

There are as many labels $\varphi(u,L)$ as hypercubes and the number of labels $\psi(u,X)$ is upper-bounded by $2^dn$. In brief, computing all eccentricities takes $2^{O(d\log d)}n$: the most expensive operation is still Algorithm~\ref{algo:tree} which determines the opposites of any pair $(u,L)$. 

\section{Conclusion} \label{sec:conclusion}

The main contribution of this article is an algorithm determining the diameter, the radius, and all eccentricities in median graphs in linear time when $d = O(1)$. A natural question is whether the techniques we used can be extended to propose an algorithm with a subquadratic running time for all median graphs. 

We believe some of the notions we proposed characterize shortest paths in median graphs and are powerful tools to determine the diameter and, more generally, all eccentricities. Ladder sets, ladder pairs but also milestones allow us to focus on certain distances instead of considering all pairs of vertices, as with a multiple BFS. The dependence of distances between two ``adjacent'' ladder sets (Theorem~\ref{th:pushing_label}) and the characterization of the diameter via ladder pairs (Corollary~\ref{co:diameter_sets}) are results which are likely to be used again in future research for median graphs.

The design of a linear-time algorithm for the diameter on median graphs seems compromised. Indeed, in~\cite{DuHaVi20}, the authors put in evidence the difficulty to design a linear-time algorithm computing the diameter for graphs with unbounded distance VC-dimension, which is at least $d$ on median graphs. Nevertheless, in our opinion, the design of a subquadratic algorithm is a challenging line of research.

%Bibliographie
\bibliographystyle{plain}
\bibliography{median}

\end{document}